\newcommand\X{\mathbb{X}}
\newcommand\R{\mathbb{R}}
\newcommand\norm[1]{||#1||}
\newcommand\bM[2]{B_{\manifold}(#1,#2)}
\newcommand\Dgm[1]{\mathrm{Dgm}(#1)}
\newcommand\dm{d_{\mu,m}}
\newcommand\dmP{d_{\mu_P,m}}
\newcommand\dn{d_{\nu,m}}
\newcommand\dM{d_{\mu_\manifold,m}}
\newcommand\cV{{\cal {V}}}
\newcommand {\mm}[1] {\ifmmode{#1}\else{\mbox{\(#1\)}}\fi}
\newcommand{\denselist}{\itemsep 0pt\parsep=1pt\partopsep 0pt}
\newcommand{\eps}{{\varepsilon}}
\newcommand{\etal}{et al.\@\xspace}
\newcommand{\mygood}{{accurate\xspace}}
\newcommand{\F}		{{\mathcal{F}}}
\newcommand{\manifold}	{\mathsf{M}}
\newcommand{\extNN}	{\mathrm{NN}}
\newcommand{\funcsample}	{{functional-sample\xspace{}}}
\newcommand{\newfunc}[1]	{{\widehat{{#1}}}}
\newcommand{\discrep}	{{\phi}}
\newcommand{\argmin}{\operatornamewithlimits{argmin}}
\newcommand{\fprecision}		{{\xi}}
\renewcommand{\footnote}{\arabic{footnote}}
\newtheorem{theorem}{Theorem}[section]
\newtheorem{definition}[theorem]{Definition}	
\newtheorem{lemma}[theorem]{Lemma}
\newtheorem{corollary}[theorem]{Corollary}
\newtheorem{claim}[theorem]{Claim}
\newtheorem{obs}[theorem]{Observation}
\newtheorem{proposition}[theorem]{Proposition}
\newenvironment{proof}{{\em Proof:}}{\hfill{\hfill\rule{2mm}{2mm}}}
 \newenvironment{algorithm}[1]{\begin{center}\rule{140mm}{0.3mm}\\\textsc{\textbf{#1}}\end{center}}
{\begin{center}\rule{140mm}{0.3mm}\end{center}}
\title{Topological analysis of scalar fields with outliers}
\author{Micka\"{e}l Buchet\footnote{\texttt{mickael.buchet@m4x.org} - Inria Saclay \^{I}le-de-France}, Fr\'{e}d\'{e}ric Chazal\footnote{\texttt{frederic.chazal@inria.fr} - Inria Saclay \^{I}le-de-France}, Tamal K. Dey\footnote{\texttt{tamaldey@cse.ohio-state.edu} - Department of Computer Science and Engineering, The Ohio State University},\\ Fengtao Fan\footnote{\texttt{fan.171@osu.edu} - Department of Computer Science and Engineering, The Ohio State University}, Steve Y. Oudot\footnote{\texttt{steve.oudot@inria.fr} - Inria Saclay \^{I}le-de-France}, Yusu Wang\footnote{\texttt{yusu@cse.ohio-state.edu} - Department of Computer Science and Engineering, The Ohio State University}}
\begin{document}

 \setcounter{page}{0}
 \thispagestyle{empty}
 \maketitle
\begin{abstract}
Given a real-valued function $f$ defined over a manifold $\manifold$
embedded in $\R^d$, we are interested in recovering structural
information about $f$ from the sole information of its values on a
finite sample $P$. Existing methods provide
approximation to the persistence diagram of $f$ when geometric noise and functional noise are
bounded. However, they fail in the
presence of aberrant values, also called outliers, both in theory and practice.

We propose a new algorithm that deals with outliers. We handle aberrant
functional values with a method inspired from the k-nearest neighbors
regression and the local median filtering, while the geometric outliers
are handled using the distance to a measure. Combined with topological
results on nested filtrations, our algorithm performs robust topological
analysis of scalar fields in a wider range of noise models than handled
by current methods. We provide theoretical guarantees and experimental
results on the quality of our approximation of the sampled scalar field.
\end{abstract}

 \noindent\textbf{Keywords:} Persistent Homology, Topological Data Analysis, Scalar Field Analysis, Nested Rips Filtration, Distance to a Measure
  
 \clearpage

\section{Introduction}
Consider a network of sensors measuring a quantity such as the temperature, the humidity, or the elevation.
These sensors also compute their positions and communicate these data to others.
However, they are not perfect and can make mistakes such as 
providing some aberrant values.
Can we still recover topological structure from the measured quantity?

This is an instance of a scalar field analysis problem.
Given a manifold $\manifold$ embedded in $\R^d$ and a scalar field $f:\manifold\rightarrow\R$, we want to extract topological information about $f$, knowing only its values on a finite set of points $P$.
% sampled from $\manifold$. 
%Being interested in the topology of a function means looking for its peaks and valleys or local maxima and minima.
%It is also really interesting to
The critical points of a function, 
that is, peaks (local maxima), pits (local minima), and passes (saddle points)
constitute important topological features of the function. 
In addition, the prominence of these features also
contains valuable information, which the geographers use to 
distinguish between a summit and a local maximum in its shadow.
%This is where \emph{topological persistence} is useful.
Such information can be captured by the so-called \emph{topological persistence}, 
which studies the \emph{sub-level sets} $f^{-1}((-\infty,\alpha])$ of a function $f$ and the way their topology evolves as parameter $\alpha$ increases.
In the case of geography, we can use the negated
elevation as a function to study the topography.
Peaks will appear depending on their altitude and will merge into other topological features at saddle points.
This provides a \emph{persistence diagram} describing the lifespan of features 
where the peaks with more prominence have longer lifespans.

When the domain $\manifold$ of the function $f$ is triangulated, one classical way of computing this diagram is to linearly interpolate the function $f$ on each simplex and then apply the standard persistence algorithm to this piecewise-linear function \cite{cphGZ}. 
%do a linear interpolation of the function on the simplices and then apply standard persistence algorithm~\cite{cphGZ}.
For cases where we only have pairwise distances between input points, one can build a family of simplicial complexes and infer the persistent homology of the input function $f$ from them \cite{sfapcdCGOS} (this construction will be detailed in Section \ref{sec:scalarpreliminary}).
%This approach has been successfully used in~\cite{sfapcdCGOS} by building a filtration using the functional values $f$ on a complex built using the geometric structure of the space.
%The construction is detailed in section~\ref{sec:scalarpreliminary}.

Both of these approaches can provably approximate persistent homology when the input points admit a bounded noise, i.e., when 
the Hausdorff distance between $P$ and $\manifold$ is bounded and the $L_\infty$-error on the observed value of $f$ is also bounded.
What happens if the noise is unbounded?
A faulty sensor can provide completely wrong information or a bad position. 
Previous methods no longer work in this setting.
Moreover, a sensor with a good functional value but a bad position 
can become an outlier in function value 
at its measured position (see Section \ref{ssec:functional} for an example). 
%To the best of our knowledge, no other methods than ones using a way of filtering outliers exist to handle this kind of situation.
In this paper, we study the problem of analyzing scalar fields in the
presence of unbounded noise both in the geometry and in the functional values.
To the best of our knowledge, there is no other method to handle such combined unbounded geometric and functional noise with theoretical guarantees. 

\paragraph*{Contributions.}
We consider a general sampling condition. % independent of a generative model. 
Intuitively, a sample $(P,\tilde f)$ of a function $f: \manifold \to \R$ respects our condition if: 
(i) the domain $\manifold$ is sampled densely and there is no cluster of noisy samples outside $\manifold$ (roughly speaking, no area outside $\manifold$ has a higher sampling density than on $\manifold$), and (ii) for any point of $P$, at least half of its $k$ nearest neighbors have a functional value 
with an error less than a threshold $s$.
This condition allows functional outliers that may have a value 
arbitrarily far away from the true one.
It encompasses the previous bounded sampling conditions as well 
as other sampling conditions such as bounded Wasserstein distance for geometry, 
or generative models like an additive Gaussian noise. 
Connection to some of these classical sampling conditions can be found in Appendices~\ref{sec:relNoiseFunc} and \ref{sec:relNoiseGeom}.
%a mixture with a background white noise or the convolution with a Gaussian. 
%It also takes into account bad position values inducing functional outlier as these points will only be considered as having bad values.
%We provide proof of these inclusions for some classic noise models.

We show how to approximate the persistence diagram of $f$ knowing only its
observed value $\tilde f$ on the set $P$.
%This comes with theoretical guarantees when the sampling respects the 
%new condition.
We achieve this goal through three main steps:
\begin{enumerate}\denselist
\item Using the observations $\tilde f$, we provide a new 
estimator $\hat f$ to approximate $f$. 
This estimator is inspired by the $k$-nearest neighbours regression technique but differs from it in an essential way.
\item We filter geometric outliers using a distance to a measure function.
\item We combine both techniques in a unified framework to estimate the persistence diagram of $f$.
\end{enumerate}
The two sources of noise, geometric and functional, are not independent.
The interdependency is first identified by assuming
appropriate sampling conditions, and then untangled by separate
steps in our algorithm.
%Their reciprocal influence has to be acknowledged in order to obtain results.
%\mickael{We estimated that this remark has to be put in the intro. If someone has a better formulation, do not hesitate to modify it.}

\paragraph*{Related work.}
%Methods computing persistence after interpolation on a triangulation works well when there is only a bounded noise model.
%However, it fails in the presence of outliers in the functional sense while geometric outliers do not make a lot of sense in a triangulation.
%The previous method using pairwise distances on a point set~\cite{sfapcdCGOS} was limited to a bounded noise model.
%We extend its results to a more general setting by introducing a new regression function. 
A framework for scalar field topology inference with 
theoretical guarantees has been previously proposed in 
\cite{sfapcdCGOS}.
However, it is limited to a bounded noise assumption, which we aim to relax. 

For handling the functional noise only, the traditional non-parametric 
regression mostly uses kernel-based or $k$-NN estimators.
The $k$-NN methods are more versatile~\cite{dftnrG}. Nevertheless,
the kernel-based estimators are preferred when there is structure in the data.
However, the functional outliers destroy the structure on which kernel-based estimators rely. 
These functional outliers can arise as a result of geometric outliers (see Section \ref{ssec:functional}). Thus, in a way, it is essential to be able to handle functional outliers when the input has geometric noise. 
Functional outliers can also introduce a bias that hampers the robustness of a $k$-NN regression.
For example, if all outliers' values are greater than the actual value, a $k$-NN regression will shift towards a larger value.
Our approach leverages the $k$-NN regression idea while trying to avoid the 
sensitivity to this bias.
 
Various methods for geometric denoising have also
been proposed in the literature. %Geometric denoising can be done through multiple ways. 
If the generative model for noise is known a priori, 
one can use de-convolution to remove noise.
Some methods have been specifically adapted to use topological information for such denoising~\cite{tdstsCK}.
In our case where the generative model is unknown, we use a filtering by the value of the distance to a measure, which has been successfully 
applied to infer the topology of a domain under unbounded noise \cite{gipmCCM}.
%In our case where the generative model is unknown, deconvolution is of limited use.  We 
%choose to use a filtering by the value of the distance to a measure.
%It has been shown that the distance to the measure is close to the distance to the manifold $\manifold$~\cite{gipmCCM}.
%This provide a way to eliminate geometric outliers but do not allow us to avoid the choice of a threshold.

%\section{Scalar field analysis}
\section{Preliminaries for Scalar Field Analysis}
\label{sec:scalarpreliminary}

In \cite{sfapcdCGOS}, Chazal \etal{} presented an algorithm to analyze the scalar field topology using persistent homology which can handle bounded Hausdorff  noise both in geometry and in observed function values. 
%. Their algorithm can also handle the Hausdorff type of noise present either in the geometry of sample points or in their observed function values. 
Our approach follows the same high level framework. 
Hence in this section, we introduce necessary preliminaries 
along with some of the results from \cite{sfapcdCGOS}. 
%As we will see later, the main challenges we address in this paper are the introduction and handling of non-Hausdorff type of noise both in geometry and in function values. 
 
%%We will extend the method presented in~\cite{sfapcdCGOS} that provides an algorithm to analyze scalar fields using the scope of persistent homology.
%%The necessary background is thus the same and we introduce it along a summary of the previous results.
%Considering a compact Riemannian manifold $\manifold$ where a $c$-Lipschitz scalar function $f:\manifold\rightarrow\R$ is defined.
%We aim to study the topological structure of $f$ and we will use the \emph{sub-level sets filtration}.
%The sub-level sets are defined as $F_\alpha=f^{-1}(]-\infty,\alpha])$.
%A \emph{filtration} is a family of sets totally ordered by the inclusion. 
%The sub-level sets form a filtration as $F_\alpha\subset F_\beta$ for any $\alpha\leq\beta$.

\paragraph*{Riemannian manifold and its sampling. }
Consider a compact Riemannian manifold $\manifold$. Let  $d_\manifold$ denote the geodesic metric on $\manifold$. 
%\yusu{Don't we need the manifold to be embedded in some Euclidean space to define our geometric noise later?}
%where a $c$-Lipschitz scalar function $f:\manifold\rightarrow\R$ is defined. Let $d_\manifold$ denote the Riemannian metric on $\manifold$. 
Consider the open Riemannian ball $\bM{x}{r} := \{ y \in \manifold \mid d_\manifold(x, y) < r\}$ centered at $x \in \manifold$.
$\bM{x}{r}$ is \emph{strongly convex} if for any pair $(y,y')$ in the closure of $\bM{x}{r}$, there exists a unique minimizing
geodesic between $y$ and $y'$ whose interior is contained in $\bM{x}{r}$.
%Given any $x\in\manifold$,  the ball $\bX{x}{r}$ is strongly convex for $r$ sufficiently small.
Given any $x \in \manifold$, let $\varrho(x)$ denote the supremum of the value of $r$ such that $\bM{x}{r}$ is strongly convex.
As $\manifold$ is compact, the infimum of all $\varrho(x)$ is positive and we denote it by $\varrho(\manifold)$, which is called the \emph{strong convexity radius} of $\manifold$.
%It is called the \emph{strong convexity radius} of $\manifold$.

%In practice, the manifold $\manifold$ is usually not available, and we have only a discrete set of points $P$ sampled around $\manifold$. 
%We say that a point set $P \subseteq \manifold$ is a \emph{geodesic $\eps$-sampling} of $\manifold$ if for any point $x$ of $\manifold$, the distance from $x$ to $P$ is less than $\eps$ for the metric $d_\manifold$. 
A point set $P \subseteq \manifold$ is a \emph{geodesic $\eps$-sample} of $\manifold$ if for every point $x$ of $\manifold$, the distance from $x$ to $P$ is less than $\eps$ in the metric $d_\manifold$. 
Given a $c$-Lipschitz scalar function $f:\manifold\rightarrow\R$, we aim to study the persistent homology of $f$. 
However, the scalar field $f: \manifold \to \R$ is only approximated by a discrete set of sample points $P$ and a function $\tilde{f}:P\rightarrow\R$.
The goal of this paper is to retrieve the topological structure of $f$ 
from $\tilde{f}$ when some forms of  noise are present both in the positions of $P$ and in the function values of $\tilde{f}$. 
%Practically, we are only given a set of points $P$ and a function $\tilde{f}:P\rightarrow\R$ that witnesses the compact manifold $\manifold$ and the scalar field $f$. The Riemannian metric on $\manifold$ is denoted $d_\manifold$.
%To be able to retrieve the topological structure of the function $f$ from $\tilde{f}$, $P$ needs to be a ``reasonable'' sampling of $\X$: 
%We say that a point set $P$ is a \emph{geodesic $\eps$-sampling} of $\manifold$ if for any point of $\manifold$, the distance from $x$ to $P$ is less than $\eps$ for the metric $d_\manifold$. 
%This $\eps$ has to be sufficiently small with regard to the geometry of $\manifold$. 

\paragraph*{Persistent homology.}
As in \cite{sfapcdCGOS}, we infer the persistent homology of $f$ using well-chosen \emph{persistence modules}. 
%the main algebraic tool we use to infer the topology of $f$ is persistent homology, through well-chosen \emph{persistence modules}. 
%We study well-chosen \emph{persistence modules}.
A \emph{filtration} $\{F_\alpha\}_{\alpha\in\R}$ is a family of sets $F_\alpha$ totally ordered by inclusions $F_\alpha \subseteq F_\beta$. 
Following \cite{CCG09}, a persistence module is a family of vector spaces $\{\Phi_\alpha\}_{\alpha\in\R}$ with a family of homomorphisms $\phi_\alpha^\beta:\Phi_\alpha\rightarrow\Phi_\beta$ such that for all $\alpha\leq\beta\leq\gamma$, $\phi_\alpha^\gamma=\phi_\beta^\gamma\circ\phi_\alpha^\beta$. 
Given a filtration $\F = \{F_\alpha\}_{\alpha\in\R}$ and $\alpha\leq\beta$, the canonical inclusion $F_\alpha\hookrightarrow F_\beta$ induces a homomorphism at the homology level $H_*(F_\alpha)\rightarrow H_*(F_\beta)$.
These homomorphisms and the homology groups of $F_\alpha$ form the so-called \emph{persistence module} of $\F$.  %$=\{F_\alpha\}_{\alpha\in\R}$. 

The persistence module of the filtration $\F = \{F_\alpha\}_{\alpha\in\R}$ is said to be 
%\emph{quadrant-tame} or 
\emph{q-tame} when all the homomorphisms $H_*(F_\alpha)\rightarrow H_*(F_\beta)$ have finite rank \cite{CDGO13}.
Its algebraic structure can then be described by the \emph{persistence diagram} $\Dgm \F$, which is a multiset of points in $\R^2$ describing the lifespan of the homological features in the filtration $\F$.
For technical reasons, $\Dgm \F$ also contains every point of the diagonal $y=x$ with countably infinite multiplicity. See \cite{EH09} for a more formal discussion of the persistence diagrams. 

Persistence diagrams can be compared using the \emph{bottleneck distance} $d_B$ \cite{CEH07}.
Given two multisets with the same cardinality, possibly infinite, $D$ and $E$ in $\R^2$, we consider the set $\cal{B}$ of all bijections between $D$ and $E$.
The bottleneck distance (under $L_\infty$-norm) is then defined as:
\begin{eqnarray}
d_B(D,E)=\inf_{b\in {\cal B}}\sup_{x\in D}\norm{x-b(x)}_\infty.\label{eqn:bottleneck-def}
\end{eqnarray}

%It turns out that the persistence modules of two ``nearby'' filtrations (as measured by the interleaving distance below) will induce close persistence diagrams under the bottleneck distance. In particular, 
Two filtrations $\{U_\alpha\}$ and $\{V_\alpha\}$ are said to be \emph{$\eps$-interleaved} if, for any $\alpha$, we have $U_\alpha\subset V_{\alpha+\eps}\subset U_{\alpha+2\eps}.$ 
%Recent work in \cite{CDGO13,CCG09} shows that  persistence diagrams induced by the persistence modules of such interleaved filtrations are close. 
Recent work in \cite{CCG09,CDGO13} shows that two interleaved filtrations induce close persistence diagrams in the bottleneck distance. 
\begin{theorem}\label{tstability}
Let $U$ and $V$ be two $q$-tame and $\eps$-interleaved filtrations. Then the persistence diagrams of these filtrations verify $d_B(\Dgm{U},\Dgm{V})\leq\eps.$
% where the norm $\norm \cdot$ in Eqn (\ref{eqn:bottleneck-def}) is the $L_\infty$-norm. 
\end{theorem}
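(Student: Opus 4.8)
The plan is to pass from the two filtrations to their persistence modules and then reduce the statement to the algebraic stability theorem for interleaved modules, whose proof I sketch via persistence measures.

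First, I would translate the $\eps$-interleaving of the filtrations into an $\eps$-interleaving of persistence modules. Applying the homology functor $H_*$ to the chain of inclusions $U_\alpha \subseteq V_{\alpha+\eps} \subseteq U_{\alpha+2\eps}$, and to the symmetric chain $V_\alpha \subseteq U_{\alpha+\eps} \subseteq V_{\alpha+2\eps}$, yields two families of homomorphisms $\phi_\alpha : H_*(U_\alpha) \to H_*(V_{\alpha+\eps})$ and $\psi_\alpha : H_*(V_\alpha) \to H_*(U_{\alpha+\eps})$. Functoriality guarantees that composing two consecutive interleaving maps recovers the internal structure map of each module shifted by $2\eps$, and that the relevant squares commute; this is exactly an $\eps$-interleaving of the persistence modules $\mathbb{U} = H_*(U)$ and $\mathbb{V} = H_*(V)$. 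Since $U$ and $V$ are $q$-tame, all structure maps have finite rank, so by \cite{CDGO13} each module carries a well-defined persistence diagram, and the problem reduces to the purely algebraic statement that $\eps$-interleaved $q$-tame modules have diagrams at bottleneck distance at most $\eps$.

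For the algebraic statement I would follow the measure-theoretic route of \cite{CDGO13}. To a $q$-tame module one associates a persistence measure assigning to every axis-parallel rectangle $R = [a,b] \times [c,d]$ bounded away from the diagonal (so $b < c$) the alternating sum of ranks $\mathrm{rank}(\phi_b^c) - \mathrm{rank}(\phi_a^c) - \mathrm{rank}(\phi_b^d) + \mathrm{rank}(\phi_a^d)$; $q$-tameness makes each rank finite, and a short inclusion--exclusion shows this counts, with multiplicity, the diagram points lying in $R$. The key analytic input is the box lemma: the interleaving lets one factor each structure map of $\mathbb{U}$ through maps of $\mathbb{V}$ (and conversely), and comparing ranks yields $\mu_{\mathbb{U}}(R) \leq \mu_{\mathbb{V}}(R^{\eps})$, where $R^{\eps}$ is $R$ enlarged by $\eps$ in each coordinate direction, together with the symmetric inequality with the roles of $\mathbb{U}$ and $\mathbb{V}$ reversed.

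Finally, I would convert these measure inequalities into an $\eps$-matching between $\Dgm{U}$ and $\Dgm{V}$, that is, a partial bijection pairing points within $L_\infty$-distance $\eps$ and leaving unmatched only points within $\eps$ of the diagonal; the existence of such a matching is precisely the assertion $d_B(\Dgm{U},\Dgm{V}) \leq \eps$. For diagrams with finitely many off-diagonal points this is a Hall's marriage argument, the box inequalities being exactly what certifies Hall's condition for the bipartite graph joining points at distance at most $\eps$. The main obstacle, and the technically delicate step, is the general $q$-tame case, where either diagram may carry infinitely many points accumulating on the diagonal: here I would truncate to regions bounded away from the diagonal, build finite matchings there by the marriage argument, and extract a limiting matching by a compactness argument as in \cite{CDGO13}, verifying that the $\eps$-bound is preserved in the limit.
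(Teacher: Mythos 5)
The paper does not actually prove this statement itself: Theorem~\ref{tstability} is imported wholesale from \cite{CCG09,CDGO13}, and your sketch is a faithful reconstruction of the proof in those references --- functoriality of $H_*$ turning the filtration interleaving into a module interleaving, persistence measures with the rectangle/box lemma, and extraction of an $\eps$-matching via Hall's theorem together with a compactness argument in the $q$-tame case. So your proposal is correct and follows essentially the same route as the paper's cited source.
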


%\ForAuthors{Extend the notion of interleaving to the persistence modules}

\paragraph*{Nested filtrations.}
%We wish to study the scalar field topology of the function $f: \manifold \rightarrow \R$. % defined on the manifold $\manifold$. 
%This is done by studying the topological structure of the sub-level sets filtration of $f$.
The scalar field topology of $f: \manifold \to \R$ is studied via the topological structure of the sub-level sets filtration of $f$. 
More precisely, the sub-level sets of $f$ are defined as $F_\alpha=f^{-1}((-\infty,\alpha])$ for any $\alpha \in \R$.
The collection of sub-level sets forms a filtration $\F = \{F_\alpha\}_{\alpha\in\R}$ connected by natural inclusions $F_\alpha \subseteq F_\beta$ for any $\alpha\leq\beta$. 
Our goal is to approximate the persistence diagram $\Dgm \F$ from the observed scalar field $\tilde{f}: P \rightarrow \R$. 
We now describe the results of \cite{sfapcdCGOS} for approximating $\Dgm\F$ when $P$ is a geodesic $\eps$-sample of $\manifold$. These results will later be useful for our approach.

%We need to build a filtration from the sub-level sets of $\tilde{f}$ that is related to the sub-level sets filtration $\{F_\alpha\}$ of $f$.
%We introduce the filtration $\{P_\alpha\}$ where:
%$$\forall\alpha,\ P_\alpha=P\cap\tilde{f}^{-1}(]-\infty,\alpha])$$
%This filtration is a union of points and cannot have the same topology as $\{F_\alpha\}$ which is defined on the manifold and thus is continuous.
To simulate the sub-level sets filtration $\{F_\alpha\}$ of $f$, we introduce $\ P_\alpha=\tilde{f}^{-1}((-\infty,\alpha])\subseteq P$ for any $\alpha \in \R$. 
The points in $P_\alpha$ intuitively sample the sub-level set $F_\alpha$. 
To estimate the topology of $F_\alpha$ from these discrete samples $P_\alpha$, 
we consider the \emph{$\delta$-offset} $P^\delta$ of the point set $P$, i.e., we grow geodesic balls of radius $\delta$ around the points of $P$.
This gives us a union of balls that serves as a 
proxy for $f^{-1}((-\infty,\alpha])$. The nerve of this
collection of balls, also known as the \emph{\v Cech complex}, $C_\delta(P)$,
has many interesting properties but is difficult to compute in high dimensions.
We consider an alternate complex called the \emph{Vietoris-Rips complex} $R_\delta(P)$ that is easier to compute. It is defined as the maximal simplicial complex with the same 1-skeleton as the \v Cech complex. The \v Cech  and Rips complexes are related in any metric space: $\forall \delta>0,\ C_\delta(P)\subset R_\delta(P)\subset C_{2\delta}(P).$
%These complexes define filtrations that are closely related in any metric space:
%$$\forall \delta>0,\ C_\delta(P)\subset R_\delta(P)\subset C_{2\delta}(P).$$

Even though a single Vietoris-Rips complex may not capture the homology of the manifold $\manifold$, 
a pair of nested complexes can recover it 
using the inclusions $R_\delta(P_\alpha)\hookrightarrow R_{2\delta}(P_\alpha)$~\cite{tpbresCO}.
%The inclusions guarantee that for any $\alpha\leq\beta$, the following diagram commutes:
Specifically, for a fixed $\delta > 0$, consider the following commutative diagram induced by inclusions, for $\alpha \le \beta$: 
\begin{center}
\begin{tikzpicture}[scale=.7]
\draw (-1,1.5) node {$H_*(R_{\delta}(P_\beta))$};
\draw (-6,1.5) node {$H_*(R_{\delta}(P_\alpha))$};
\draw (-6,3) node {$H_*(R_{2\delta}(P_\alpha))$};
\draw (-1,3) node {$H_*(R_{2\delta}(P_\beta))$};
\draw[->] (-4.4,1.5) -- (-2.6,1.5);
\draw[->] (-4.4,3) -- node[above] {$\phi_\alpha^\beta$} (-2.6,3);
\draw[->] (-6,1.8) -- node[left] {$i_\alpha$} (-6,2.5);
\draw[->] (-1,1.8) -- node[right] {$i_\beta$} (-1,2.5);
%\draw (-1,1) node {$H_*(R_{\delta}(P_\beta))$};
%\draw (-6,1) node {$H_*(R_{\delta}(P_\alpha))$};
%\draw (-6,3) node {$H_*(R_{2\delta}(P_\alpha))$};
%\draw (-1,3) node {$H_*(R_{2\delta}(P_\beta))$};
%\draw[->] (-4.4,1) -- (-2.6,1);
%\draw[->] (-4.4,3) -- (-2.6,3);
%\draw[->] (-6,1.5) -- (-6,2.5);
%\draw[->] (-1,1.5) -- (-1,2.5);
\end{tikzpicture}
\end{center}
%Defining $\Phi_\alpha$ as the image of $H_*(R_\delta(P_\alpha))\rightarrow H_*(R_{2\delta}(P_\alpha))$, the diagram induces a map $\phi_\alpha^\beta:\Phi_\alpha\rightarrow\Phi_\beta$.
As the diagram commutes for all $\alpha\leq\beta$, $\{ Im(i_\alpha), \phi_\alpha^\beta|_{Im(i_\alpha)}\}$ defines a persistence module.
We call it the persistent homology module of the filtration of nested pairs $\{R_\delta(P_\alpha)\hookrightarrow R_{2\delta}(P_\alpha)\}_{\alpha\in\R}$.
This construction can also be done for any filtration of nested pairs.
Using this construction,  one of the main results of \cite{sfapcdCGOS} is: 

\begin{theorem}[Theorems 2 and 6 of \cite{sfapcdCGOS}]
\label{tPrevious}
Let $\manifold$ be a compact Riemannian manifold and let $f:\manifold\rightarrow\R$ be a $c$-Lipschitz function.
Let $P$ be a geodesic $\eps$-sample of $\manifold$.
If $\eps<\frac{1}{4}\varrho(\manifold)$, then for any $\delta\in\left[2\eps,\frac{1}{2}\varrho(\manifold)\right)$, the persistent homology modules of $f$ and of the filtration of nested pairs $\{R_\delta(P_\alpha)\hookrightarrow R_{2\delta}(P_\alpha)\}$ are $2c\delta$-interleaved.
Therefore, the bottleneck distance between their persistence diagrams is at most $2c\delta$.

Furthermore, the $k$-dimensional persistence diagram for the filtrations of
nested pairs $\{R_\delta(P_\alpha)\hookrightarrow R_{2\delta}(P_\alpha)\}$ can be computed in $O(|P|kN+N\log N+N^3)$ time, where $N$ is the number of simplices of $\{R_{2\delta}(P_{\infty})\}$, and $|P|$ denotes the cardinality of the sample set $P$. 
\end{theorem}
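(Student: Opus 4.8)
The plan is to deduce the bottleneck bound from the interleaving stability theorem (Theorem~\ref{tstability}), so the real content is to exhibit a $2c\delta$-interleaving between the sub-level set module $\{H_*(F_\alpha)\}$ of $f$ and the image module $\{\mathrm{Im}(i_\alpha)\}$ attached to the nested pairs $R_\delta(P_\alpha)\hookrightarrow R_{2\delta}(P_\alpha)$. I would factor this interleaving through an intermediate family, the geodesic offsets $P_\alpha^\delta=\bigcup_{p\in P_\alpha}\bM{p}{\delta}$, and prove two separate comparisons: (i) the sub-level sets of $f$ interleave with these offsets, and (ii) the offsets interleave with the nested Rips image module. Along the way I must also check that every module is $q$-tame so that Theorem~\ref{tstability} applies; for the finite complexes $R_\delta(P_\alpha)$ this is immediate since each homology group is finite-dimensional, and for the sub-level sets it follows by transport along the interleaving with these finite complexes.

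For step (i) the two ingredients are the $c$-Lipschitz hypothesis and the geodesic $\eps$-sample hypothesis. If $f(x)\le\alpha$, then some $p\in P$ has $d_\manifold(x,p)<\eps\le\delta$, whence $f(p)\le\alpha+c\eps$ and $x\in\bM{p}{\delta}$, giving $F_\alpha\subseteq P_{\alpha+c\eps}^\delta$. Conversely, if $x\in P_\alpha^\delta$, then $x$ lies within geodesic distance $\delta$ of a point $p$ with $f(p)\le\alpha$, so $f(x)\le\alpha+c\delta$ and $P_\alpha^\delta\subseteq F_{\alpha+c\delta}$. Chaining these inclusions produces the interleaving maps between $\{H_*(F_\alpha)\}$ and $\{H_*(P_\alpha^\delta)\}$; the two legs cost $c\eps$ and $c\delta$ respectively, and since $\eps\le\delta/2$ they are absorbed into a shift of at most $c\delta$ on each side.

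Step (ii) is where I expect the main difficulty, because a Rips complex is not the nerve of a cover and so does not directly compute the homology of an offset. Here the strong convexity radius enters: since $\delta<\tfrac12\varrho(\manifold)$, the geodesic balls of radii $\delta$ and $2\delta$ together with all their finite intersections are strongly convex, hence contractible, so the nerve lemma identifies the \v Cech complexes $C_\delta(P_\alpha)$ and $C_{2\delta}(P_\alpha)$ with the offsets $P_\alpha^\delta$ and $P_\alpha^{2\delta}$. I would then use the universal sandwich $C_\delta(P_\alpha)\subseteq R_\delta(P_\alpha)\subseteq C_{2\delta}(P_\alpha)\subseteq R_{2\delta}(P_\alpha)$ to squeeze the image of $H_*(R_\delta(P_\alpha))\to H_*(R_{2\delta}(P_\alpha))$ between images of the corresponding \v Cech maps, which the nerve lemma controls. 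The delicate part is algebraic: one must show that factoring an image module through an interleaving of nested inclusions preserves its persistence diagram up to the interleaving parameter, while keeping track of the fact that the inner and outer radii differ by a factor of two, so that comparing the $\delta$- and $2\delta$-level representatives through the Lipschitz bound contributes the extra $c\delta$ that brings the total interleaving to $2c\delta$. Combining (i) and (ii) and invoking Theorem~\ref{tstability} then yields $d_B\le 2c\delta$.

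For the complexity statement I would reduce the computation of the nested-pair diagram to an image-persistence computation for the inclusion of the filtration $\{R_\delta(P_\alpha)\}$ into $\{R_{2\delta}(P_\alpha)\}$. Sorting the $N$ simplices of $R_{2\delta}(P_\infty)$ by their entry value costs $O(N\log N)$; a matrix reduction adapted to image persistence, which marks the columns belonging to the inner complex $R_\delta$, then runs in $O(N^3)$; and the remaining $O(|P|kN)$ term accounts for assembling the boundary matrices across the $k$ homological dimensions and the $|P|$ distinct filtration values. I would present the algorithm at this level and defer the correctness of the image-persistence reduction to the standard theory.
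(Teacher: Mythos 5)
You cannot be compared against the paper's own proof here, because there is none: Theorem~\ref{tPrevious} is imported background, quoted verbatim as Theorems 2 and 6 of \cite{sfapcdCGOS}, and the paper never proves it. Measured against the argument in that cited reference, your reconstruction follows essentially the same route: interleave the sub-level sets $F_\alpha$ of $f$ with unions of geodesic balls around $P_\alpha$ using the $c$-Lipschitz and geodesic $\eps$-sample hypotheses; identify those unions with geodesic \v Cech complexes through the nerve lemma, where the hypothesis $\delta<\frac{1}{2}\varrho(\manifold)$ is exactly what makes the balls and all their intersections strongly convex, hence contractible; squeeze the image module of $H_*(R_\delta(P_\alpha))\rightarrow H_*(R_{2\delta}(P_\alpha))$ between \v Cech modules via $C_\delta\subseteq R_\delta\subseteq C_{2\delta}\subseteq R_{2\delta}$ and conclude with a sandwich lemma for image persistence modules (this algebraic lemma, which you correctly flag as the delicate step, is proved in \cite{sfapcdCGOS}); and obtain the complexity bound by reducing to an image-persistence matrix reduction. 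Your accounting of where the factor $2c\delta$ comes from (the outer radius $2\delta$ paying $c\cdot 2\delta$ through the Lipschitz bound) is also consistent with the reference.

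One genuine soft spot: your q-tameness argument for the sub-level set module of $f$ does not work as stated. An $\eps'$-interleaving with a q-tame module only bounds the rank of $H_*(F_\alpha)\rightarrow H_*(F_\beta)$ when $\beta-\alpha>2\eps'$, since the factorization through the finite-dimensional module requires room for both interleaving maps; for gaps $\beta-\alpha\le 2\eps'$ it gives nothing, so q-tameness does not transport across an interleaving. The standard fix is to invoke directly that sub-level set filtrations of continuous functions on compact manifolds (more generally, compact polyhedra) are q-tame, as established in \cite{CDGO13}, independently of any sample.
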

%The computation of the persistence diagram deploys the persistence algorithm, and it 
It has been observed that, in practice, the persistence algorithm often has a running time linear in the number of simplices, which reduces the above complexity to $O(|P|+N\log N)$ in a practical setting.

%\paragraph{Algorithm.}
%This result leads to an algorithm in~\cite{sfapcdCGOS}.
%Given values for the parameter $\delta$, the algorithm takes the pairwise distances between points of $P$ and compute the persistence diagram in two steps.
%It first builds the two families of Rips complexes $\{R_\delta(P_\alpha)\}$ and $\{R_{2\delta}(P_\alpha)\}$.
%Remark that these families contain at most $|P|$ different complexes.
%Then the algorithm computes the persistence diagrams of the nested pair of filtrations for all dimensions.
%
%\begin{theorem}
%Denoting by $N$ the number of simplices of $\{R_{2\delta}(P_{\infty})\}$ and $d$ its dimension, the algorithm has a worst-case running time in $O(|P|dN+N\log N+N^3)$
%\end{theorem}
%
%However, in most cases, the behaviour of the persistence algorithm is almost linear~\cite{cphCZ} and the complexity is reduced to an $O(|P|+N\log N)$.

%\paragraph{Influence of noise.}
%The question of noise in the data has been tackled for Hausdorff type of noises in \cite{sfapcdCGOS}. 
%In particular, we 
We say that $\tilde{f}$ has a precision of $\xi$ over $P$ if $|\tilde{f}(p) - f(p)| \le \xi$ for any $p\in P$. We then have the following result for the case when we only have this functional noise: 
%Given a bounded noise on the functional values, the result from Theorem~\ref{tPrevious} can be adapted into a new Theorem:

\begin{theorem}[Theorem 3 of \cite{sfapcdCGOS}]
Let $\manifold$ be a compact Riemannian manifold and let $f:\manifold \rightarrow\R$ be a $c$-Lipschitz function.
Let $P$ be a geodesic $\eps$-sample of $\manifold$ such that the values of $f$ on $P$ are known with precision $\xi$.
If $\eps<\frac{1}{4}\varrho(\manifold)$, then for any $\delta\in\left[2\eps,\frac{1}{2}\varrho(\manifold)\right)$, the persistent homology modules of $f$ and of the filtration of nested pairs $\{R_\delta(P_\alpha)\hookrightarrow R_{2\delta}(P_\alpha)\}$ are $(2c\delta+\xi)$-interleaved.
Therefore, the bottleneck distance between their persistence diagrams is at most $2c\delta+\xi$.
\label{thm:funcnoiseonlyfromCGOS}
\end{theorem}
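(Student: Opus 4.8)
The plan is to treat the functional noise as a $\xi$-shift between two filtrations and then reduce to Theorem~\ref{tPrevious}. Write $Q_\alpha := \{p \in P \mid f(p) \le \alpha\}$ for the sublevel sets built from the \emph{exact} values, and keep $P_\alpha = \tilde f^{-1}((-\infty,\alpha])$ for the \emph{observed} ones. The precision bound $|\tilde f(p) - f(p)| \le \xi$ immediately gives, for every $\alpha$, the two set inclusions $Q_{\alpha-\xi} \subseteq P_\alpha \subseteq Q_{\alpha+\xi}$: indeed, if $\tilde f(p) \le \alpha$ then $f(p) \le \alpha + \xi$, and if $f(p) \le \alpha - \xi$ then $\tilde f(p) \le \alpha$. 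Thus the observed sublevel-set filtration $\{P_\alpha\}$ and the exact one $\{Q_\alpha\}$ are interleaved at the level of point sets with shift $\xi$.

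Next I would lift these point-set inclusions to an interleaving of the two associated nested-pair persistence modules. Since the Vietoris--Rips construction is functorial, each inclusion $Q_\alpha \subseteq P_{\alpha+\xi}$ induces inclusions $R_\delta(Q_\alpha) \subseteq R_\delta(P_{\alpha+\xi})$ and $R_{2\delta}(Q_\alpha) \subseteq R_{2\delta}(P_{\alpha+\xi})$ that commute with the nested maps $R_\delta \hookrightarrow R_{2\delta}$; passing to homology and restricting to images yields a morphism $\mathrm{Im}(i_\alpha^Q) \to \mathrm{Im}(i_{\alpha+\xi}^P)$, and symmetrically $P_\alpha \subseteq Q_{\alpha+\xi}$ gives $\mathrm{Im}(i_\alpha^P) \to \mathrm{Im}(i_{\alpha+\xi}^Q)$. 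The composites of two consecutive such morphisms are exactly the internal maps of the respective image modules over a window of width $2\xi$, because they are all induced by the same chain of point-set inclusions. This is precisely the statement that the nested-pair module built from $\tilde f$ and the one built from $f$ are $\xi$-interleaved.

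Finally I would invoke Theorem~\ref{tPrevious}, which asserts that the nested-pair module built from the exact values $\{Q_\alpha\}$ and the persistent homology module of $f$ are $2c\delta$-interleaved, and then compose the two interleavings. Since the composition of an $\eps_1$-interleaving with an $\eps_2$-interleaving is an $(\eps_1+\eps_2)$-interleaving, this produces a $(2c\delta+\xi)$-interleaving between the module of $f$ and the observed nested-pair module. Both modules are $q$-tame (the observed module has finite-dimensional homology since $P$ is finite, and the module of $f$ inherits $q$-tameness from being interleaved with it), so Theorem~\ref{tstability} yields the bottleneck bound $d_B \le 2c\delta+\xi$.

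The main obstacle is the second step: verifying that all the squares and cubes of inclusions commute and that the induced maps genuinely restrict to the image submodules in a compatible way, so that the two shift-by-$\xi$ morphisms assemble into a bona fide interleaving of persistence modules rather than merely a level-wise pair of maps. Everything else --- the set-theoretic containments and the additivity of interleaving widths --- is routine.
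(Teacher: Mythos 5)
The paper never proves this statement itself --- it is imported verbatim as Theorem~3 of \cite{sfapcdCGOS} --- so there is no internal proof to compare against; your argument is, however, correct and is precisely the technique the paper does use when it proves Theorem~\ref{tCombinedNoise}: the value-error bound gives point-set inclusions $Q_{\alpha-\xi}\subseteq P_\alpha\subseteq Q_{\alpha+\xi}$, all induced maps on Rips complexes and on the image (nested-pair) modules commute because every arrow comes from a point-set inclusion, and the resulting $\xi$-interleaving composes with the noiseless $2c\delta$-interleaving of Theorem~\ref{tPrevious} to give $2c\delta+\xi$. One caveat on a side remark: your claim that the module of $f$ ``inherits $q$-tameness from being interleaved with'' the finite-complex module is false in general --- an $\eps$-interleaving with a $q$-tame module only forces finite rank on maps spanning a gap larger than $2\eps$, and one can build non-$q$-tame modules $\eps$-interleaved with the zero module. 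This is harmless here, since $q$-tameness of the sublevel-set module of a continuous function on a compact manifold is a standard fact (and is already presupposed by the bottleneck-distance conclusion of Theorem~\ref{tPrevious}), but you should cite that fact rather than derive it from the interleaving.
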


Geometric noise was considered in the form of bounded noise in the estimate of the geodesic distances between points in $P$.
It translated into a relation between the measured pairwise distances and the real ones.
With only geometric noise, one has the following stability result. 
It was stated in this form in the conference version of the paper.
%The following form of geometric noise was also considered.
%The geodesic distances between points of $P$ are not always known.
%in particular, assume that one can build a \emph{$\mu$-disk graph}, i.e. a graph whose edges are between points whose geodesic distance is less than $\mu$.  One can then build an alternate Rips filtrations, denoted by $\{R_\delta^G(P_\alpha)\}$, using the graph distance.
%One can obtain the following result for the case when only geometric noise is present:

\begin{theorem}[Theorem 4 of \cite{sfapcdCGOS}]\label{tGeomNoise}%\protect\footnote[1]{The theorem appears in this form in the conference version of the paper. Theorem 4 of the journal version presents a more general result that is not needed in our case.}
Let $\manifold$, $f$ be defined as previously and $P$ be 
an $\eps$-sample of $\manifold$ in its Riemannian metric. 
Assume that, for a parameter $\delta>0$,
the Rips complexes $R_\delta(\cdot)$ are defined with 
respect to a metric ${\tilde d}(\cdot,\cdot)$ which satisfies $\forall x,y\in P,\ \frac{d_\manifold(x,y)}{\lambda}\leq {\tilde d}(x,y)\leq\nu+\mu\frac{d_\manifold(x,y)}{\lambda}$, where $\lambda\geq1$ is a scaling factor, $\mu\geq1$ is a relative error and $\nu\geq0$ an additive error.
Then, for any $\delta\geq\nu+2\mu\frac{\eps}{\lambda}$ and any $\delta'\in[\nu+2\mu\delta,\ \frac{1}{\lambda}\varrho(\manifold)]$, the persistent homology modules of $f$ and of the filtration of nested pairs $\{R_\delta(P_\alpha)\hookrightarrow R_{\delta'}(P_\alpha)\}$ are $c\lambda\delta'$-interleaved.
Therefore, the bottleneck distance between their persistence diagrams is at most $c\lambda\delta'$.
\end{theorem}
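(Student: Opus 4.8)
The plan is to build an explicit $c\lambda\delta'$-interleaving between the persistence module $\mathbb{F}_\alpha := H_*(F_\alpha)$ of the sub-level sets of $f$ and the nested-pair image module $\mathbb{V}_\alpha := \mathrm{Im}\!\left(H_*(R_\delta(P_\alpha))\to H_*(R_{\delta'}(P_\alpha))\right)$, where $R_\delta,R_{\delta'}$ are taken with respect to $\tilde d$. Both modules are $q$-tame (the complexes are finite and $f^{-1}((-\infty,\alpha])$ has finitely generated homology by compactness), so the stated bottleneck bound follows from the $q$-tame module analogue of Theorem~\ref{tstability} \cite{CDGO13}. Throughout I write $P_\alpha=\{p\in P:f(p)\le\alpha\}$ and let $P_\alpha^r=\bigcup_{p\in P_\alpha}\bM{p}{r}$ denote the geodesic $r$-offset, following the framework of \cite{sfapcdCGOS,tpbresCO}.

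\emph{Step 1 (metric comparison).} First I would sandwich the $\tilde d$-Rips complexes between geodesic \v Cech complexes $C^{d_\manifold}_r(P_\alpha)$, the nerves of the balls $\{\bM{p}{r}\}$. The lower bound $d_\manifold/\lambda\le\tilde d$ turns a $\tilde d$-edge of length $\le s$ into a geodesic edge of length $\le\lambda s$, while the upper bound $\tilde d\le\nu+\mu\,d_\manifold/\lambda$ turns a geodesic edge of length $\le 2r$ into a $\tilde d$-edge of length $\le\nu+2\mu r/\lambda$; since Rips complexes are flag complexes these pass from $1$-skeleta to whole complexes, and a circumradius-at-most-diameter argument gives $R^{d_\manifold}_s\subseteq C^{d_\manifold}_s$. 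The hypotheses $\delta\ge\nu+2\mu\eps/\lambda$ and $\delta'\ge\nu+2\mu\delta$ are then exactly what is needed to produce the chain of simplicial inclusions, functorial in $\alpha$,
\[
C^{d_\manifold}_{\eps}(P_\alpha)\subseteq R_\delta(P_\alpha)\subseteq C^{d_\manifold}_{\lambda\delta}(P_\alpha)\subseteq R_{\delta'}(P_\alpha)\subseteq C^{d_\manifold}_{\lambda\delta'}(P_\alpha).
\]

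\emph{Step 2 (nerve lemma) and Step 3 (sampling).} Since $\delta\le\delta'$ and $\lambda\delta'\le\varrho(\manifold)$, the radii $\eps,\lambda\delta,\lambda\delta'$ are all at most $\varrho(\manifold)$, so the geodesic balls and their finite intersections are strongly convex, hence contractible, and form good covers; the persistent nerve lemma then yields homotopy equivalences $C^{d_\manifold}_r(P_\alpha)\simeq P_\alpha^r$ compatible with all inclusions. Replacing the \v Cech terms by offsets turns the chain of Step~1 into a homology ladder passing through $H_*(P_\alpha^\eps)$, $H_*(P_\alpha^{\lambda\delta})$, $H_*(P_\alpha^{\lambda\delta'})$, with $\mathbb{V}_\alpha$ sandwiched inside the offset map $H_*(P_\alpha^\eps)\to H_*(P_\alpha^{\lambda\delta'})$ — this sandwiching is exactly why the nested pair, rather than a single Rips complex, carries the correct homology. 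Using that $P$ is a geodesic $\eps$-sample and $f$ is $c$-Lipschitz, I would then establish $F_\alpha\subseteq P_{\alpha+c\eps}^{\eps}$ (take a sample within geodesic distance $\eps$) and $P_\alpha^{\lambda\delta'}\subseteq F_{\alpha+c\lambda\delta'}$ (Lipschitz estimate along a geodesic of length $<\lambda\delta'$), which link the offset data to the sub-level sets.

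\emph{Step 4 (assembling the interleaving).} Set $\eps_0=c\lambda\delta'$. The map $\mathbb{F}_\alpha\to\mathbb{V}_{\alpha+\eps_0}$ is the composite $H_*(F_\alpha)\to H_*(P^\eps_{\alpha+c\eps})\cong H_*(C^{d_\manifold}_\eps(P_{\alpha+c\eps}))\to H_*(R_\delta(P_{\alpha+c\eps}))$, which lands in $\mathbb{V}_{\alpha+c\eps}$ and is pushed into $\mathbb{V}_{\alpha+\eps_0}$ by the structure map (valid since $c\eps\le\eps_0$ because $\eps\le\lambda\delta'$); the map $\mathbb{V}_\alpha\to\mathbb{F}_{\alpha+\eps_0}$ is $\mathbb{V}_\alpha\hookrightarrow H_*(R_{\delta'}(P_\alpha))\to H_*(P^{\lambda\delta'}_\alpha)\to H_*(F_{\alpha+c\lambda\delta'})$. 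Because every arrow in Steps~1--3 is induced by a genuine inclusion of spaces sitting inside one commutative ladder, any two directed paths with the same endpoints induce the same homology map, so the two composites reproduce the internal structure maps $\mathbb{F}_\alpha\to\mathbb{F}_{\alpha+2\eps_0}$ and $\mathbb{V}_\alpha\to\mathbb{V}_{\alpha+2\eps_0}$; this is the $\eps_0$-interleaving. The conceptual crux, and the main obstacle, is precisely this last verification together with the recognition that the \emph{image} of the nested pair is the right object: one must check that the distortion inequalities of Step~1 slot together so the outer geodesic scale is exactly $\lambda\delta'$, yielding $\eps_0=c\lambda\delta'$ with the inner $\eps$-contribution absorbed (as $c\eps\le c\lambda\delta'$); the rest is routine constant tracking plus the mild check of strong convexity at the extreme radius $\lambda\delta'=\varrho(\manifold)$.
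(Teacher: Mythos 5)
The paper never proves this statement---it is imported verbatim as Theorem 4 of \cite{sfapcdCGOS}---and your blind reconstruction is exactly the argument of that source (the machinery recalled in Section~\ref{sec:scalarpreliminary}): sandwich the $\tilde d$-Rips complexes between geodesic \v Cech complexes via the distortion bounds, pass to unions of strongly convex geodesic balls by the persistent nerve lemma, tie the offsets to the sub-level sets of $f$ through the $\eps$-sample and Lipschitz conditions, and interleave the image module of the nested pair, with the constants ($\delta\geq\nu+2\mu\eps/\lambda$ for the inner inclusion, $\delta'\geq\nu+2\mu\delta$ for the middle one, outer geodesic scale $\lambda\delta'$) slotted in correctly. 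So your proposal is correct and takes essentially the same route; the one blemish is your justification of $q$-tameness of $\{H_*(F_\alpha)\}$ (compact sub-level sets of a Lipschitz function need not have finitely generated homology---$q$-tameness is instead a theorem of \cite{CDGO13} for continuous functions on compact manifolds), but this does not affect the argument.
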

%\begin{theorem}[Theorem 4 of \cite{sfapcdCGOS}]
%\label{tPreviousGeom}
%Let $\manifold$ be a compact Riemannian manifold and let $f:\manifold\rightarrow\R$ be a $c$-Lipschitz function.
%Let $P$ be a geodesic $\eps$-sampling of $\manifold$ and let $\lambda=1+4\frac{\eps}{\mu}$.
%Then for any $\delta\geq 1+2\lambda\frac{\eps}{\mu}$, any $\delta'\in[1+2\lambda\delta,\frac{1}{\mu}\varrho(\manifold))$, the persistent homology modules of $f$ and of the filtration of nested pairs  $\{R_\delta^G(P_\alpha)\hookrightarrow R_{\delta'}^G(P_\alpha)\}$ are $c\mu\delta'$-interleaved.
%Therefore, the bottleneck distance between their persistence diagrams is at most $c\mu\delta'$.
%\end{theorem}

%The case where both types of noise are present was not treated in~\cite{sfapcdCGOS}.
%Although the result is an addition of the errors as shown in Section~\ref{sec:walkthrough}, the proof is not a trivial combination of the theorems but needs a mixing of their proofs.

%\section{Entangled geometric and functional noise}
\section{Functional Noise}
\label{sec:funconly}

In this section, we focus on the case where we have only functional noise in the observed function $\tilde{f}$. 
Suppose we have a scalar function $f$ defined on a Riemannian manifold $\manifold$ embedded in $\R^d$.
Note that the results of section~\ref{sec:funconly} hold if $\R^d$ is replaced by a metric space $\X$. 
We are given a geodesic $\eps$-sample $P \subset \manifold$, and a noisy observed function $\tilde{f}: P \rightarrow \R$. 
Our goal is to approximate the persistence diagram $\Dgm{\F}$ of the sub-level set filtration $\F = \{ F_\alpha = f^{-1}((-\infty, \alpha]) \}_\alpha$ from $\tilde{f}$. 
We assume  that $f$ is $c$-Lipschitz with respect to the intrinsic metric of the Riemannian manifold $\manifold$. 
Note that this does not imply a Lipschitz condition on $\tilde f$.

%Previous work did not take into account aberrant values of the function.
%This problem can arise even if the measure of the function value is perfect.
%In this section, we concentrate on the effect of geometric noise on functional values.

%\subsection{Magical filter}
%\ForAuthors{New name instead of magical filter ?}
\subsection{Functional sampling condition}
\label{ssec:functional}

Previous work on functional noise focused on bounded noise (e.g, \cite{sfapcdCGOS}) or noise with zero-mean (e.g, \cite{kralidK}). 
However, there are many practical scenarios where the observed function $\tilde{f}$ may contain these previously considered types of noise combined with \emph{aberrant function values} in $\tilde{f}$. 
Hence, we propose below a more general sampling condition that allows such combinations. 

\paragraph*{Motivating examples.}
First, we provide some motivating examples for the need of handling \emph{aberrant} function values in $\tilde{f}$, where $\tilde{f}(p)$ at some sample point $p$ can be totally unrelated to the true value $f(p)$.  
Consider a sensor network, where each node returns some measures. Such measurements can be imprecise, and in addition to that, a sensor may experience failure and return a completely wrong measure that has no relation with the true value of $f$. 
Similarly, an image could be corrupted with impulse noise where there are random pixels with aberrant function values, such as random white or black dots.

More interestingly, outliers in function values can naturally appear as a result of (extrinsic) geometric noise present in the discrete samples. 
For example, imagine that we have a process that can measure the function value $f: \manifold \to \R$ with \emph{no error}. However, the geometric location $\tilde{p}$ of a point $p \in \manifold$ can be wrong. In particular, $\tilde{p}$ can be close to other parts of the manifold, thereby although $\tilde{p}$ has the correct function value $f(p)$, it becomes a functional outlier among its neighbors (due to the wrong location of $\tilde{p}$). 
See Figure~\ref{fBonePoints} for an illustration.
The function defined on this bone-like curve is the geodesic distance
to a base point. 
The two sides of the narrow neck have very different function values. 
Now, suppose that the points are sampled uniformly on $\manifold$ and their position is then perturbed by an additive Gaussian noise. 
Then, points from one side of this neck can be sent closer to the other side, causing aberrant values in the observed function. 

In fact, even if we assume that we have a ``magic filter'' that can project each sample back to the closest point on the underlying manifold $\manifold$, the result is a new set of samples where all points are on the manifold and thus can be seen as having {\bf no} geometric noise; however, this point set now contains functional noise which is actually caused by the original geometric noise.
Note that such a magic filter is the goal of many geometric denoising methods.
A perfect algorithm in this sense cannot remove or may even cause more aberrant functional noise.
This motivates the need for handling functional outliers (in addition to traditional functional noise) as well as processing noise
that combines geometric and functional noise together and that 
does not necessarily have zero-mean. 

Another case where our approach is useful concerns with missing data.
Assuming that some of the functional values are missing, we can replace them by anything and act as if they were outliers.
Without modifying the algorithm, we obtain a way to handle the local loss of information.

\begin{figure}[htbp]
\begin{center}
\fbox{
\begin{tabular}{ccc}
\includegraphics[height=2.2cm]{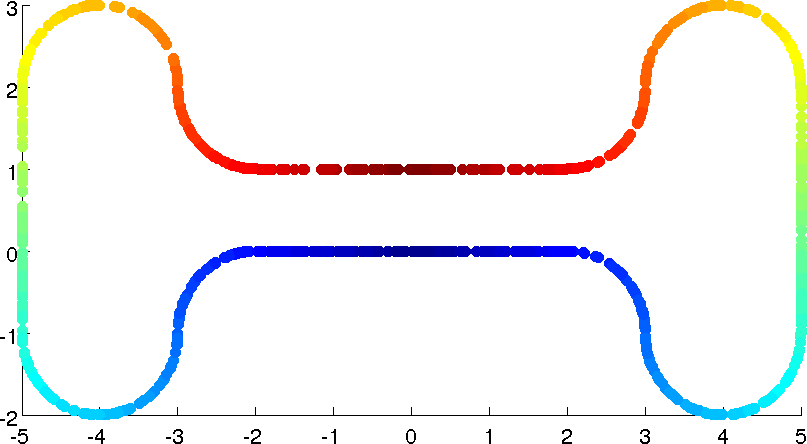} &
\includegraphics[height=2.2cm]{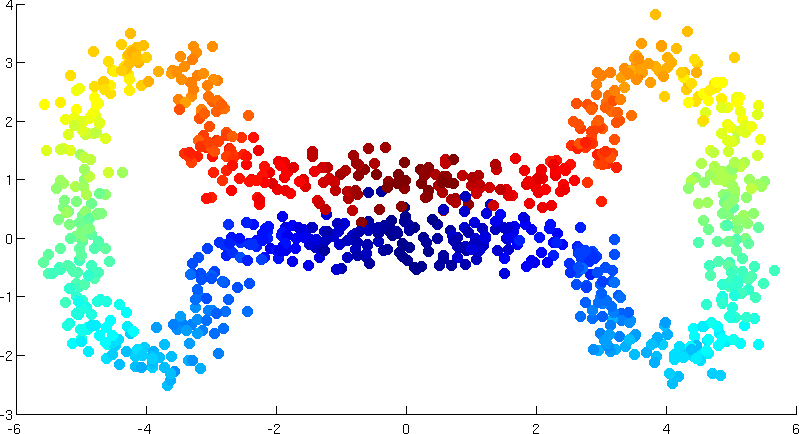} &
\includegraphics[height=2.2cm]{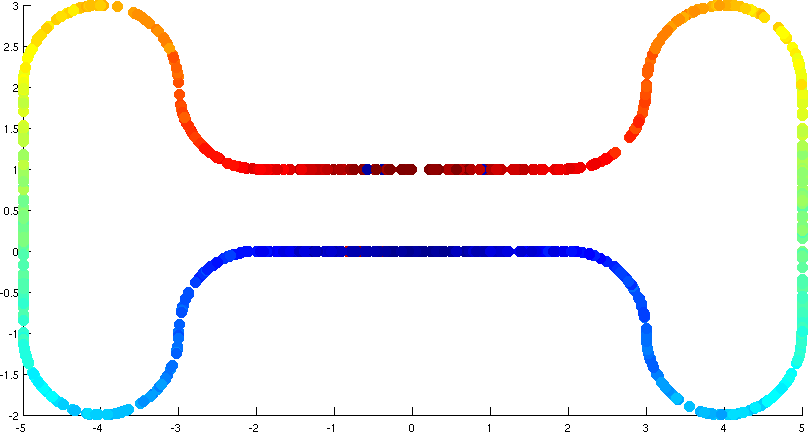} \\
 {Bone without noise} & {Bone with gaussian noise} & {Bone after magical filter}
\end{tabular}
}
\end{center}
\vspace*{-0.2in}
\caption{Bone example after applying Gaussian perturbation and magical filter}
\label{fBonePoints}
\end{figure}
\paragraph*{Functional sampling condition.}
To allow both aberrant and more traditional functional noise, we introduce the following sampling condition. 
Let $P \subset \manifold$ be a geodesic $\eps$-sample of the underlying manifold $\manifold$. 
Intuitively, our sampling condition requires that for every point $p \in P$, locally there is a sufficient number of sample points with reasonably good function values. 
Specifically, we fix two parameters $k$ and $k'$ with the condition that $k\geq k'>\frac{1}{2}k$. Let $\extNN_P^k(p)$ denote the set of the $k$-nearest neighbors of $p$ in $P$ in the \emph{extrinsic metric}. 
We say that a discrete scalar field $\tilde{f}: P \to \R$ is \emph{a $(k, k', \Delta)$-\funcsample{}} of $f: \manifold\to \R$ if the following holds:
\begin{equation}\label{eqn:funcnoisemodel}
\forall p\in P,\ \left | \left \{ q \in \extNN_P^k(p) ~\big |~ |\tilde{f}(q) - f(p)| \le \Delta \right \} \right | \geq k'
\end{equation}
%Let $\Delta$ denote the minimum value $s$ such that the following holds: 
%\begin{equation}\label{eqn:funcnoisemodel}
%\forall p\in P,\ \left | \left \{ q \in \extNN_P^k(p) ~\big |~ |\tilde{f}(q) - f(p)| \le s \right \} \right | \geq k'
%\end{equation}
%We say that the discrete scalar field $\tilde{f}: P \to \R$ is \emph{a $(k, k', \Delta)$-\funcsample{}} of $f: \manifold\to \R$. 
Intuitively, this sampling condition allows up to $k - k'$ samples around a point $p$ to be outliers (whose function values deviates from $f(p)$ by at least $\Delta$). 
In Appendix~\ref{sec:relNoiseFunc}, we consider two standard functional sampling conditions used in the statistical learning community and look at what they correspond to in our setting.

\subsection{Functional Denoising}
Given a scalar field $\tilde{f}: P \to \R$ which is a $(k, k', \Delta)$-\funcsample{} of $f: \manifold \to \R$, we now aim to compute a denoised function $\newfunc{f}: P \to \R$ from the observed function $\tilde{f}$, and we will later use $\newfunc{f}$ to infer the topology of $f: \manifold \to \R$.  Below we describe two ways to denoise the noisy observation $\tilde{f}$: one of which is well-known, and the other one is new. As we will see later, these two treatments lead to similar theoretical guarantees in terms of topology inference. However, they have different characteristics in practice, which are discussed in Appendix~\ref{sec:illustration}.

\paragraph*{$k$-median denoising.}
In the k-median treatment, we simply perform the following: given any point $p \in P$, we set $\newfunc{f} (p)$ to be the median value of the set of $\tilde{f}$ values for the $k$-nearest neighbors $\extNN_P^k(p) \subseteq P$ of $p$. We call $\newfunc{f}$ the \emph{k-median denoising of $\tilde{f}$}. The following observation is straightforward: 
\begin{obs}
If $\tilde{f}: P \to \R$ is a $(k, k', \Delta)$-\funcsample{} of $f: \manifold \to \R$ with $k' \ge k / 2$, then we have $| \newfunc{f}(p) - f(p) | \le \Delta$ for any $p \in P$, where $\newfunc{f}$ is the k-median denoising of $\tilde{f}$. 
\label{obs:kmedian-bound}
\end{obs}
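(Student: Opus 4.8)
The plan is to reduce the claim to a purely combinatorial fact about the median of the multiset of neighboring values, using the functional sampling condition to control how many of those values lie far from $f(p)$. First I would fix a point $p \in P$ and write $V = \{\tilde{f}(q) \mid q \in \extNN_P^k(p)\}$ for the multiset of observed values over the $k$ nearest neighbors of $p$, so that by definition $\newfunc{f}(p)$ is the median of $V$. The $(k,k',\Delta)$-\funcsample{} condition~\eqref{eqn:funcnoisemodel} says precisely that at least $k'$ elements of $V$ lie in the closed interval $[f(p)-\Delta,\ f(p)+\Delta]$.

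Next I would turn this into two one-sided counting bounds. Since at least $k'$ of the $k$ values are at most $f(p)+\Delta$, the number of values strictly greater than $f(p)+\Delta$ is at most $k-k'$, and the hypothesis $k' \ge k/2$ gives $k-k' \le k/2$. Thus at most half of the values in $V$ exceed $f(p)+\Delta$, which forces the median to satisfy $\newfunc{f}(p) \le f(p)+\Delta$. A symmetric argument, counting the values strictly below $f(p)-\Delta$ and again using $k-k' \le k/2$, yields $\newfunc{f}(p) \ge f(p)-\Delta$. Combining the two inequalities gives $|\newfunc{f}(p)-f(p)| \le \Delta$, and since $p$ was arbitrary the observation follows.

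The only delicate point, and the step I would write most carefully, is the passage from the counting bound ``at most half the values exceed the threshold'' to the statement about the median. This depends on the precise convention adopted for the median when $k$ is even: if the median is a single order statistic (say the $\lceil k/2\rceil$-th smallest), the conclusion is immediate, whereas if it is the average of the two central order statistics one must first check that both of them lie in $[f(p)-\Delta,\ f(p)+\Delta]$ before averaging. In either case the inequality $k-k' \le k/2$ is exactly what guarantees that the central order statistic(s) cannot land outside the interval, so I would state the median convention explicitly and then verify that at most $\lfloor k/2 \rfloor$ values sit strictly on each side of the interval. I expect no substantive obstacle beyond this bookkeeping; the heart of the matter is simply that ``more than half of the values are good'' implies ``the median is good''.
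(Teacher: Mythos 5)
Your counting reduction is exactly the intended argument: the paper offers no written proof at all (it introduces Observation~\ref{obs:kmedian-bound} with ``the following observation is straightforward''), and the entire content is the step you identify --- condition~(\ref{eqn:funcnoisemodel}) puts at least $k'$ of the $k$ neighboring values inside $[f(p)-\Delta,\,f(p)+\Delta]$, and a majority of good values should pin down the median. Your two one-sided bounds are the right way to organize this.

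However, the delicate point you flagged --- the median convention when $k$ is even --- is resolved incorrectly, and it is the only real mathematical content here. You assert that $k-k'\le k/2$ is ``exactly what guarantees that the central order statistic(s) cannot land outside the interval.'' That is false at the boundary $k'=k/2$ (with $k$ even): take $k=4$, $k'=2$, let the two good values equal $f(p)$ and the two bad values equal $-M$ with $M$ arbitrarily large. The sorted multiset is $(-M,-M,f(p),f(p))$, so the lower median is $-M$ and the averaged median is $(f(p)-M)/2$; flipping the bad values to $+M$ defeats the upper median instead. Hence when $k'=k/2$ exactly, \emph{no} fixed convention for the median makes the conclusion true, and your proposed verification (``at most $\lfloor k/2\rfloor$ values sit strictly on each side'') would pass even though the conclusion fails. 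The repair is to invoke the strict inequality that the paper builds into the definition of a $(k,k',\Delta)$-\funcsample{}, namely $k\ge k' > \frac{1}{2}k$ (Section~\ref{ssec:functional}); the ``$k'\ge k/2$'' in the statement of the observation should be read against that standing assumption. Since $k'$ is an integer, strictness gives $k-k'\le\lceil k/2\rceil-1$, so strictly fewer than half of the $k$ values can lie above $f(p)+\Delta$, and likewise below $f(p)-\Delta$. Writing the sorted values as $x_1\le\cdots\le x_k$, this forces \emph{both} central order statistics $x_{k/2}$ and $x_{k/2+1}$ (for $k$ even; the single middle value $x_{(k+1)/2}$ for $k$ odd) into $[f(p)-\Delta,\,f(p)+\Delta]$, so the lower, upper, and averaged median conventions all yield $|\newfunc{f}(p)-f(p)|\le\Delta$. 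With that substitution your proof is complete.
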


\paragraph*{Disparity-based denoising.}
In the k-median treatment, we choose a single value from the $k$-nearest neighbors of a sample point $p$ and set it to be the denoised value $\newfunc{f}(p)$. This value, while within $\Delta$ distance to the true value $f(p)$ for $k' \ge k/2$, tends to 
have greater variability among neighboring sample points. 
Intuitively, taking the average (such as $k$-means) makes the function $\newfunc{f}(p)$ smoother, but it is sensitive to outliers. We combine these ideas together, and use the following concept of disparity to help us identify a subset of points from the k-nearest neighbors of a sample point $p$ to estimate $\newfunc{f}(p)$. 

Given a set $Y = \{ x_1, \ldots, x_l\}$ of $l$ sample points from $P$, we define its disparity w.r.t. $\tilde{f}$ as: 
$$\discrep(Y) = \frac{1}{l} \sum_{i=1}^l (\tilde{f}(x_i) - \mu(Y))^2, ~~~~\text{where}~~\mu(Y) = \frac{1}{l}\sum_{i=1}^l \tilde{f}(x_i). $$ 
%$$\discrep(Y) = \max_{y_i \in Y} | \tilde{f}(y_i) - \mu(Y) |, ~~~~\text{where}~~\mu(Y) = \frac{1}{m}\sum_{i=1}^m \tilde{f}(y_i). $$ 
$\mu(Y)$ and $\discrep(Y)$ are respectively the average and the variance of the observed function values for points from $Y$. 
Intuitively, $\discrep(Y)$ measures how tight the function values $(\tilde{f}(x_i))$ are clustered. 
Now, given a point $p \in P$, we define
$$\newfunc{Y}_p = \argmin_{Y \subseteq \extNN_P^k(p), |Y| = k' } ~\discrep(Y), ~~~\text{and}~~ \newfunc{z}_p = \mu(\newfunc{Y}_p). $$
That is, $\newfunc{Y}_p$ is the subset of $k'$ points from the $k$-nearest neighbors of $p$ that has the smallest disparity and $\hat z_p$ is its mass center. 
It turns out that $\newfunc{Y}_p$ and $\newfunc{z}_p$ can be computed by the following {\em sliding-window} procedure: 
(i) Sort $\extNN_P^k(p) =  \{ x_1, \ldots, x_k \}$ according to $\tilde{f}(x_i)$. (ii) For every $k'$ consecutive points $Y_i = \{ x_i,\ldots, x_{i+k'-1} \}$ with $i \in [1, k-k'+1]$, compute its disparity $\discrep(Y_i)$.
 (iii) Set $\newfunc{Y}_p = \argmin_{Y_i, i\in [1, k-k']} \discrep(Y_i)$, and return $\mu(\newfunc{Y}_p)$ as $\newfunc{z}_p$. 
%\begin{algorithm}{Compute $\newfunc{z}_p$ from $\extNN_P^k(p)$}
%\begin{enumerate} \denselist
%\item Sort $\extNN_P^k(p) = \{ x_1, \ldots, x_k \}$ according to $\tilde{f}(x_i)$. 
%\item For every $k'$ consecutive points $Y_i = \{ x_i,\ldots, x_{i+k'-1} \}$ with $i \in [1, k-k'+1]$, compute $\discrep(Y_i)$.%the disparity value. 
%\item Set $\newfunc{Y}_p = \argmin_{Y_i, i\in [1, k-k']} \discrep(Y_i)$, and return $\mu(\newfunc{Y}_p)$ as $\newfunc{z}_p$. 
%%Take the minimum of this set as the result
%\end{enumerate}
%\end{algorithm}

In the \emph{disparity-based denoising} approach, we simply set $\newfunc{f}(p) := \newfunc{z}_p$ as computed above. 
The approximation guarantee of $\hat f$ for the function $f$ is given by the following Lemma.

\begin{lemma}
If $\tilde{f}: P \to \R$ is a $(k, k', \Delta)$-functional-sample of $f: \manifold \to \R$ with $k'\geq\frac{k}{2}$, then we have 
$| \newfunc{f}(p) - f(p) | \le \left(1+ 2 \sqrt{\frac{k - k'}{2k'-k}}\right) \Delta$ 
%$| \newfunc{f}(p) - f(p) | \le (1+ \frac{2}{\sqrt{2 - \frac{k}{k'}}})\Delta$ 
for every $p \in P$, where $\newfunc{f}$ is the disparity-based denoising of $\tilde{f}$. 
In particular, if $k' \ge \frac{2}{3}k$, then $| \newfunc{f}(p) - f(p) | \le 3 \Delta$ for every $p \in P$. 
\label{lem:discrep-bound}
\end{lemma}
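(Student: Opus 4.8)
The plan is to establish the inequality pointwise, relying only on the functional sampling condition (the Lipschitz property of $f$ plays no role here). Fix $p \in P$, write $a = f(p)$ and $\mu = \newfunc{f}(p) = \mu(\newfunc{Y}_p)$. Call a neighbor $q \in \extNN_P^k(p)$ \emph{good} if $|\tilde f(q)-a|\le\Delta$ and \emph{bad} otherwise; by the definition of a $(k,k',\Delta)$-\funcsample{} there are at least $k'$ good neighbors, hence at most $k-k'$ bad ones. Since $\newfunc{Y}_p$ has exactly $k'$ elements and at most $k-k'$ of them can be bad, it contains a subset $G$ of $g \ge 2k'-k$ good points and a subset $B$ of $b = k'-g \le k-k'$ bad points; note $g>0$ because $k'>k/2$, and I may assume $b\ge 1$ since $b=0$ forces $|\mu-a|\le\Delta$ immediately.

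First I would control the disparity of $\newfunc{Y}_p$. Because $\newfunc{Y}_p$ minimizes $\discrep$ over all $k'$-subsets of $\extNN_P^k(p)$, and at least $k'$ good neighbors exist, I compare against a subset $Y^\ast$ consisting entirely of good points: every value in $Y^\ast$ lies in $[a-\Delta,a+\Delta]$, so, using that the mean minimizes the sum of squared deviations, $\discrep(Y^\ast)\le \frac1{k'}\sum_{x\in Y^\ast}(\tilde f(x)-a)^2 \le \Delta^2$. Hence $\discrep(\newfunc{Y}_p)\le\Delta^2$, that is, $\sum_{x\in\newfunc{Y}_p}(\tilde f(x)-\mu)^2 \le k'\Delta^2$.

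The heart of the argument is converting this variance bound into a bound on $|\mu-a|$. Let $A = \sum_{x\in G}(\tilde f(x)-\mu)$; since the deviations from the mean over all of $\newfunc{Y}_p$ sum to zero, $\sum_{x\in B}(\tilde f(x)-\mu) = -A$. Writing $S_G$ and $S_B$ for the sums of squared deviations from $\mu$ over $G$ and $B$, Cauchy--Schwarz applied on each side gives $A^2 \le g\,S_G$ and $A^2 \le b\,S_B$, so $S_G + S_B \ge A^2\bigl(\tfrac1g+\tfrac1b\bigr) = A^2\,k'/(gb)$. Combined with $S_G+S_B \le k'\Delta^2$ this yields $|A|\le\sqrt{gb}\,\Delta$. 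Finally, the mean $\bar g$ of the good values satisfies $|\bar g - a|\le\Delta$ and $\mu = \bar g - A/g$, whence $|\mu-a|\le \Delta + |A|/g \le \Delta\bigl(1+\sqrt{b/g}\bigr)$. Since $b/g \le (k-k')/(2k'-k)$, this gives the slightly stronger bound $|\newfunc{f}(p)-f(p)|\le\bigl(1+\sqrt{\tfrac{k-k'}{2k'-k}}\bigr)\Delta$, which lies within the claimed $\bigl(1+2\sqrt{\tfrac{k-k'}{2k'-k}}\bigr)\Delta$; the special case $k'\ge\tfrac23 k$ then follows by substituting $\tfrac{k-k'}{2k'-k}\le 1$.

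I expect the main obstacle to be extracting the sharp $\sqrt{b/g}$ dependence rather than a looser $\sqrt{k'/g}$: a single application of Cauchy--Schwarz together with the crude estimate $S_G\le k'\Delta^2$ would only give $|A|\le\sqrt{g k'}\,\Delta$ and hence $\Delta(1+\sqrt{k'/g})$, which is too weak when $k'$ is close to $k$. The trick is to notice that the \emph{same} quantity $A$ governs both the good-side and the bad-side deviations, so applying Cauchy--Schwarz on both sides and adding the reciprocal bounds forces the shared variance budget $k'\Delta^2$ to be split optimally between the two groups.
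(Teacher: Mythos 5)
Your proof is correct, and it in fact establishes a strictly sharper bound than the lemma claims: $\bigl(1+\sqrt{\tfrac{k-k'}{2k'-k}}\,\bigr)\Delta$ in place of $\bigl(1+2\sqrt{\tfrac{k-k'}{2k'-k}}\,\bigr)\Delta$, hence $2\Delta$ rather than $3\Delta$ when $k'\ge\frac{2}{3}k$. The skeleton coincides with the paper's argument: split $\newfunc{Y}_p$ into good and bad points (your $G$ and $B$ are the paper's $A_1$ and $A_2$), upper-bound $\discrep(\newfunc{Y}_p)$ by the disparity of an all-good $k'$-subset, and convert that variance budget into a bound on the mean shift using the zero-sum of deviations from the mean together with Cauchy--Schwarz. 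You depart from the paper in two places, and each departure buys a constant. First, for the benchmark subset the paper bounds every deviation from its mean pointwise by $2\Delta$, getting $\discrep(Y'_\Delta)\le 4\Delta^2$, whereas you invoke the variational property of the mean (deviations from the mean are dominated by deviations from $f(p)$) to get $\Delta^2$. Second, to extract the shift the paper assumes WLOG $\newfunc{z}_p\ge f(p)+\Delta$, calls $\delta$ the overshoot beyond the $\Delta$-band, bounds the good-side deviations pointwise by $\delta$ and the bad side by zero-sum plus Cauchy--Schwarz, arriving at $\discrep(\newfunc{Y}_p)\ge\frac{2k'-k}{k-k'}\delta^2$; you instead apply Cauchy--Schwarz symmetrically to the single aggregate deviation $A$ on both the good and bad sides, obtaining $S_G+S_B\ge A^2k'/(gb)$, and then decompose $\newfunc{z}_p-f(p)=(\bar g-f(p))-A/g$. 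Your version needs no case analysis and, as you observe, splits the variance budget optimally between the two groups, which is exactly where the remaining factor of $2$ is saved. One minor point, shared with the paper's own proof: the argument (and the finiteness of the stated bound) implicitly requires $k'>k/2$ so that $2k'-k>0$; this is consistent with the standing assumption $k\ge k'>\frac{1}{2}k$ in the definition of a $(k,k',\Delta)$-\funcsample{}, even though the lemma is phrased with $k'\ge\frac{k}{2}$.
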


\begin{proof}
Let $Y_\Delta = \{x\in \extNN_P^k(p) : |\tilde{f}(x)-f(p)| \leq \Delta \}$ be the set of points in $\extNN_P^k(p)$ whose 
observed function values are within distance $\Delta$ from $f(p)$. 
Since $\tilde{f}$ is a $(k, k', \Delta)$-functional-sample of $f$, it is clear that $|Y_\Delta| \geq k'$.
Let $Y'_{\Delta} \subset Y_\Delta$ be a subset with $k'$ elements, $Y'_{\Delta} = \{x'_i\}_{i=1}^{k'}$.
By the definitions of $Y_\Delta$ and $Y'_{\Delta}$, one can immediately check that 
$|\tilde{f}(x'_i)-\mu(Y'_{\Delta})| \leq 2\Delta$ where $\mu(Y'_{\Delta}) = \frac{1}{k'}\sum_{i=1}^{k'} \tilde{f}(x'_i)$.
This inequality then gives an upper bound of the disparity $\phi(Y'_{\Delta})$,
\[
\begin{array}{ccl}
\phi(Y'_{\Delta}) &= & \frac{1}{k'} \sum_{i=1}^{k'}(\tilde{f}(x'_i)-\mu(Y'_{\Delta}))^{2} \\
&\leq & \frac{1}{k'}\sum_{i=1}^{k'} (2\Delta)^{2} \\
&= &4 \Delta^{2}
\end{array}.
\]

Recall from the sliding window procedure that $\newfunc{Y}_p = \argmin_{Y_i, i\in [1, k-k']} \discrep(Y_i)$ and $\newfunc{z}_p = \mu(\newfunc{Y}_p)$.
Denote $A_1 = \newfunc{Y}_p \cap Y_\Delta$ and $A_2 = \newfunc{Y}_p \setminus A_1$.
Since $\tilde{f}$ is a $(k, k', \Delta)$-functional-sample of $f$, the size  of $A_2$ is at most $k-k'$ and $|A_1|\geq 2k'-k$.
If $|\newfunc{z}_p - f(p)| \leq \Delta$, 
nothing needs to be proved.
Without loss of generality, one can assume that $f(p) + \Delta \leq \newfunc{z}_p$.
Denote $\delta=\newfunc{z}_p - (f(p) + \Delta)$.
The disparity of $\phi(\newfunc{Y}_p)$ can then be estimated.
\[
\begin{array}{ccl}
\phi(\newfunc{Y}_p) & = & \frac{1}{k'}\left(\sum_{x\in A_1} (\tilde{f}(x)-\newfunc{z}_p)^{2} + \sum_{x\in A_2} (\tilde{f}(x)-\newfunc{z}_p)^{2}\right) \\
&\geq & \frac{1}{k'}\left(|A_1| \delta^{2} + \sum_{x\in A_2} (\tilde{f}(x)-\newfunc{z}_p)^{2}\right) \\
&\geq & \frac{1}{k'}\left(|A_1| \delta^{2} + \frac{1}{|A_2|}(\sum_{x\in A_2}\tilde{f}(x)-|A_2|\newfunc{z}_p)^{2}\right)\\
&=& \frac{1}{k'}\left(|A_1| \delta^{2} + \frac{1}{|A_2|}(\sum_{x\in A_1}\tilde{f}(x)-|A_1|\newfunc{z}_p)^{2}\right) \\
&\geq& \frac{1}{k'}\left(|A_1| \delta^{2} + \frac{1}{|A_2|}(|A_1|\delta)^{2}\right) \\
&=& \frac{1}{k'}\delta^{2}\left(\frac{|A_1|}{|A_2|}(|A_1|+|A_2|)\right) \\
&\geq& \frac{1}{k'}\delta^{2}\left(\frac{k'|A_1|}{|A_2|}\right) \\
&\geq& \frac{2k'-k}{k-k'}\delta^{2} \\
%&\geq& \frac{2k'-k}{k'}\delta^{2} \\
\end{array}
\]
where the third line uses the inequality $\sum_{i=1}^{n}a_i^{2} \geq \frac{1}{n}(\sum_{i=1}^{n}a_i)^{2}$, and the fourth line uses the fact that $(|A_1| + |A_2|)\newfunc{z}_p = \sum_{x\in \newfunc{Y}_p} \tilde{f}(x)$. 
%and the last line uses the fact that $k-k' \leq k'$.
Since $\newfunc{Y}_p = \argmin_{Y_i, i\in [1, k-k']} \discrep(Y_i)$, it holds that $\phi(\newfunc{Y}_p)  \leq \phi(Y'_{\Delta})$.
Therefore, 
\[
\frac{2k'-k}{k-k'}\delta^{2} \leq 4 \Delta^{2}.
\]
%Then it is immediate that $\delta \leq \frac{2}{\sqrt{2-\frac{k}{k'}}}\Delta$. 
It then follows that $\delta \leq 2 \sqrt{\frac{k - k'}{2k'-k}} \Delta$
and $| \newfunc{f}(p) - f(p) | \le \left(1+ 2 \sqrt{\frac{k - k'}{2k'-k}}\right)\Delta $ since $\newfunc{z}_p = \newfunc{f}(p)$. 
If $k' \ge \frac{2}{3}k$, then $1+ 2 \sqrt{\frac{k - k'}{2k'-k}} \le 1 + 2 = 3$, meaning that $| \newfunc{f}(p) - f(p) | \le 3\Delta$ in this case. 
%Hence, $| \newfunc{f}(p) - f(p) | \le (1+ \frac{2}{\sqrt{2 - \frac{k}{k'}}})\Delta$ since $\newfunc{z}_p = \newfunc{f}(p)$.
%If $k' \ge \frac{2}{3}k$, then  $1+ \frac{2}{\sqrt{2 - \frac{k}{k'}}} \leq 1 + 2\sqrt{2}$, proving the second claim.

%\emph{Remark:} If we do not use the inequality $k-k' \leq k'$, we have 
%\[
%\begin{array}{ccl}
%\frac{2k'-k}{k-k'}\delta^{2} &\leq& 4 \Delta^{2} \\
%\delta & \leq & 2\Delta\sqrt{\frac{k-k'}{2k'-k}} \\
%&=&\sqrt{2}\Delta\sqrt{\frac{k + k-2k'}{2k'-k}} \\
%&=& \sqrt{2}\Delta\sqrt{\frac{1}{\frac{2k'}{k}-1} - 1} \\
%\end{array}
%\]
%When $k'\geq \frac{2}{3}k$, we have $\sqrt{2}\Delta\sqrt{\frac{1}{\frac{2k'}{k}-1} - 1} 
%\leq \sqrt{2}\Delta\sqrt{\frac{1}{\frac{4}{3}-1} - 1} = 2\Delta$.
\end{proof}

\begin{corollary}
Given a $(k, k', \Delta)$-\funcsample{} of $f: \manifold \to \R$ with $k'\ge k/2$, we can compute a new function $\newfunc{f}: P \rightarrow \R$ such that $| \newfunc{f}(p) - f(p) | \le \fprecision\Delta$ for any $p\in P$, where $\fprecision = 1$ under $k$-median denoising, and $\fprecision = \left(1+ 2 \sqrt{\frac{k - k'}{2k'-k}}\right)$ under the disparity-based denoising. 
 
\label{cor:denoisingbound}
\end{corollary}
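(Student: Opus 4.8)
The plan is to recognize that this corollary is nothing more than a packaging of the two denoising guarantees already established, one for each denoising scheme named in the statement, together with a remark on computability. I would split the argument into the two cases and invoke the corresponding earlier result in each.

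For the $k$-median case I would simply appeal to Observation~\ref{obs:kmedian-bound}: since $\tilde{f}$ is a $(k,k',\Delta)$-\funcsample{} of $f$ with $k'\ge k/2$, at least half of the $k$ nearest neighbors of any $p\in P$ carry a value within $\Delta$ of $f(p)$, so the median of those $k$ observed values must itself lie within $\Delta$ of $f(p)$. Setting $\newfunc{f}(p)$ to this median therefore yields $|\newfunc{f}(p)-f(p)|\le\Delta$, which is exactly the claim with $\fprecision=1$.

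For the disparity-based case I would directly apply Lemma~\ref{lem:discrep-bound}, which already proves $|\newfunc{f}(p)-f(p)|\le\left(1+2\sqrt{\frac{k-k'}{2k'-k}}\right)\Delta$ for the output $\newfunc{f}(p)=\newfunc{z}_p$ of the sliding-window procedure. This is the claim with $\fprecision=1+2\sqrt{\frac{k-k'}{2k'-k}}$. I would note that under the functional sampling condition $k'>\tfrac{1}{2}k$ the denominator $2k'-k$ is strictly positive, so the bound is finite and well defined; the degenerate boundary $k'=k/2$, where the disparity bound would blow up, is excluded by that strict inequality.

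Finally I would address the ``we can compute'' part of the statement, which is where the only remaining (and very mild) work lies. In both schemes $\newfunc{f}$ is effectively computable once the neighbor set $\extNN_P^k(p)$ is available: the $k$-median value is obtained by sorting the $k$ observed values and returning the middle one, while $\newfunc{z}_p$ is produced by the three-step sliding-window procedure described above, namely sorting the $k$ values, evaluating $\discrep$ on each of the $k-k'+1$ consecutive windows of size $k'$, and returning the mean of the minimizing window. Since all the mathematical substance has already been discharged in Observation~\ref{obs:kmedian-bound} and Lemma~\ref{lem:discrep-bound}, I do not expect any genuine obstacle here; the proof is essentially a restatement, and the only point requiring a word of care is confirming positivity of $2k'-k$ so that $\fprecision$ is meaningful.
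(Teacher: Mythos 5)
Your proposal is correct and matches the paper's treatment: the paper states Corollary~\ref{cor:denoisingbound} without a separate proof, as an immediate packaging of Observation~\ref{obs:kmedian-bound} (the $k$-median case) and Lemma~\ref{lem:discrep-bound} (the disparity case), which is exactly your decomposition. Your added remarks on computability via sorting/sliding-window and on $2k'-k>0$ being guaranteed by the strict inequality $k'>\tfrac{1}{2}k$ in the sampling condition are consistent with the paper and require no further work.
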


Hence after the $k$-median denoising or the disparity-based denoising, we obtain a new function $\newfunc{f}$ whose value at each sample point is within $\fprecision\Delta$ precision to the true function value. We can now apply the scalar field topology inference framework from \cite{sfapcdCGOS} (as introduced in Section \ref{sec:scalarpreliminary}) using $\hat f$ as input. 
In particular, set $L_\alpha = \{ p \in P \mid \newfunc{f}(p) \le \alpha \}$, and let $R_\delta( X )$ denote the Rips complex over points in $X$ with parameter $\delta$. We approximate the persistence diagram induced by the sub-level sets filtration of $f: \manifold \to \R$ from the filtrations of nested pairs $\{R_\delta(L_\alpha)\hookrightarrow R_{2\delta}(L_\alpha)\}_\alpha$. 
It follows from Theorem \ref{thm:funcnoiseonlyfromCGOS} that:
\begin{theorem}\label{tFuncNoise}
Let $\manifold$ be a compact Riemannian manifold and let $f:\manifold \rightarrow\R$ be a $c$-Lipschitz function.
Let $P$ be a geodesic $\eps$-sample of $\manifold$, and $\tilde{f}: P \rightarrow \R$ a $(k, k', \Delta)$-\funcsample{} of $f$. 
Set $\fprecision = 1$ if $P_\alpha$ is obtained via $k$-median denoising, and $\fprecision = \left(1+ 2 \sqrt{\frac{k - k'}{2k'-k}}\right)$ if $P_\alpha$ is obtained via disparity-based denoising. 
If $\eps<\frac{1}{4}\varrho(\manifold)$, then for any $\delta\in\left[2\eps,\frac{1}{2}\varrho(\manifold)\right)$, the persistent homology modules of $f$ and the filtration of nested pairs $\{R_\delta(P_\alpha)\hookrightarrow R_{2\delta}(P_\alpha)\}$ are $(2c\delta+\fprecision\Delta)$-interleaved.
Therefore, the bottleneck distance between their persistence diagrams is at most $2c\delta+\fprecision\Delta$. 
\label{thm:funcnoiseonly}
\end{theorem}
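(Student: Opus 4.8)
The plan is to obtain this statement as a clean composition of two results already established above: the denoising guarantee of Corollary~\ref{cor:denoisingbound} and the functional-noise inference result of Theorem~\ref{thm:funcnoiseonlyfromCGOS}. The key observation is that once denoising has produced a function $\newfunc{f}$ on $P$ that approximates $f$ uniformly, the problem reduces exactly to the bounded-functional-noise setting already analyzed in \cite{sfapcdCGOS}.

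First I would invoke Corollary~\ref{cor:denoisingbound}: since $\tilde{f}$ is a $(k,k',\Delta)$-\funcsample{} of $f$ with $k'\ge k/2$, the denoised function $\newfunc{f}$ satisfies $|\newfunc{f}(p)-f(p)|\le\fprecision\Delta$ for every $p\in P$, where $\fprecision=1$ under $k$-median denoising and $\fprecision=\left(1+2\sqrt{\frac{k-k'}{2k'-k}}\right)$ under disparity-based denoising. This is precisely the content of Observation~\ref{obs:kmedian-bound} and Lemma~\ref{lem:discrep-bound}, so no further work is needed here. The only point worth noting is that denoising alters only the function values and not the point positions, so $P$ remains a geodesic $\eps$-sample of $\manifold$, and the precision bound holds uniformly over all of $P$.

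Next I would reinterpret this precision bound in the language of Theorem~\ref{thm:funcnoiseonlyfromCGOS}. That theorem treats an observed function known on a geodesic $\eps$-sample with precision $\fprecision$; here $\newfunc{f}$ plays exactly that role, with precision parameter $\fprecision\Delta$, and the sub-level sets $P_\alpha=\{p\in P:\newfunc{f}(p)\le\alpha\}$ define the nested-pairs filtration $\{R_\delta(P_\alpha)\hookrightarrow R_{2\delta}(P_\alpha)\}$. Applying Theorem~\ref{thm:funcnoiseonlyfromCGOS} with this precision, under the hypotheses $\eps<\tfrac14\varrho(\manifold)$ and $\delta\in\left[2\eps,\tfrac12\varrho(\manifold)\right)$, immediately yields that the persistent homology module of $f$ and that of the nested-pairs filtration are $(2c\delta+\fprecision\Delta)$-interleaved. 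The bottleneck bound $d_B\le 2c\delta+\fprecision\Delta$ then follows from the stability Theorem~\ref{tstability}, once we note that both modules are $q$-tame (the sub-level-set module of the $c$-Lipschitz $f$ on a compact manifold, and the finitely generated module of the Rips nested pairs).

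There is no genuine obstacle in this argument: the technical content was discharged in Lemma~\ref{lem:discrep-bound} (the disparity variance estimate) and in the cited result of \cite{sfapcdCGOS}. The one mild verification is that the denoising step preserves every hypothesis required downstream, namely the geodesic $\eps$-sample property of $P$ and a uniform precision bound on $\newfunc{f}$, so that $\fprecision\Delta$ may be substituted directly for the precision $\fprecision$ in Theorem~\ref{thm:funcnoiseonlyfromCGOS}. Thus the theorem is obtained by replacing the observed values $\tilde{f}$ with the denoised values $\newfunc{f}$ and the precision by $\fprecision\Delta$ in the existing inference pipeline.
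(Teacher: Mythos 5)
Your proposal is correct and follows essentially the same route as the paper: the paper obtains Theorem~\ref{thm:funcnoiseonly} precisely by combining the denoising precision bound of Corollary~\ref{cor:denoisingbound} (i.e., Observation~\ref{obs:kmedian-bound} and Lemma~\ref{lem:discrep-bound}) with Theorem~\ref{thm:funcnoiseonlyfromCGOS}, substituting $\xi=\fprecision\Delta$ as the functional precision. The paper in fact states the theorem as an immediate consequence of that substitution, so your two-step composition is exactly the intended argument.
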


%Theoretical results in this setting are better for the median than for the disparity.
%However, if the noise introduces a bias in the value as it is sometimes the case, then the disparity is more robust than the median.
%For example, consider a noise following a mixture model of a gaussian and a constant outlier value. 
%In this case, the median will have a bias due to presence of the outliers while the disparity will not be affected as long as the proportion of outlier is sufficiently small and their value sufficiently far from the original value with respect to the variance of the gaussian.

The above theoretical results are similar for $k$-median and disparity-based methods with a slight advantage for the $k$-median.
However, interesting experimental results can be obtained when the Lipschitz condition on the function is removed, for example with images, where the disparity based method appears to be more resilient to large amounts of noise than the $k$-median denoising method. 
Illustrating examples can be found in Appendix~\ref{sec:illustration}.

\section{Geometric noise}
\label{sec:geomonly}

In the previous section, we assumed that we have no geometric noise in the input.
In this section, we deal with the case where there is only 
geometric noise in the input, but no functional noise of any kind. 
%This means that the points, though perturbed from their actual position, retain
%their true function values. For 
Specifically, for any point $p\in P$, we assume that the observed value $\tilde f(p)$ is equal to the true function value $f(\pi(p))$ where $\pi(p)$ is the nearest point projection of $p$ to the manifold.
If $p$ is on the medial axis of $\manifold$, the projection $\pi$ is arbitrary to one of the nearest points.
As we have alluded before, general geometric noise implicitly introduces functional noise because the point $p$ may have become a functional  aberration of its orthogonal projection $\pi(p)\in \manifold$.
This error will be ultimately dealt with in Section \ref{sec:walkthrough} when we combine the results on purely functional noise 
from the previous section with the results 
on purely geometric noise in this section.

\subsection{Sampling condition} 
\label{sec:geomnoisemodel}
\paragraph*{Distance to a measure.}
The distance to a measure is a tool introduced to deal with geometrically noisy datasets, which are modelled as probability measures~\cite{gipmCCM}.
Given a probability measure $\mu$ on a metric space $\X$, we define the \emph{pseudo-distance} $\delta_m(x)$ for any point $x\in\mathbb{R}^d$ and a mass parameter $m\in(0,1]$ as 
$\delta_m(x)=\inf\{r\in\mathbb{R}|\mu(B(x,r))\geq m\}.$ 
The distance to a measure is then defined by averaging this quantity:
%\begin{equation}\label{defdm}
$$\dm(x)=\sqrt{\frac{1}{m}\int_0^{m}\delta_l(x)^2\ dl}.$$
%\end{equation}

The \emph{Wasserstein distance} is a standard tool to compare two measures.
Given two probability measures $\mu$ and $\nu$ on a metric space $\X$, a \emph{transport plan} $\pi$ is a probability measure over $\X\times\X$ such that for any $A\times B\subset \X\times \X$, $\pi(A\times \X)=\mu(A)$ and $\pi(\X\times B)=\nu(B)$.
Let $\Gamma(\mu,\nu)$ be the set of all transport plans between between measures $\mu$ and $\nu$.
The Wassserstein distance is then defined as the minimum transport cost over $\Gamma(\mu,\nu)$:
%\begin{equation}
$$W_2(\mu,\nu)=\sqrt{\min_{\pi\in\Gamma(\mu,\nu)}\int_{\X\times\X}d_\X(x,y)^2 ~d\pi(x,y)},$$
%\end{equation}
where $d_\X(x,y)$ is the distance between $x$ and $y$ in the metric space $\X$. 
The distance to a measure is stable with respect to the Wasserstein distance as shown in \cite{gipmCCM}: 
%in Theorem 3.5 of~\cite{gipmCCM}:
\begin{theorem}[Theorem 3.5 of~\cite{gipmCCM}, Theorem 3.2 of~\cite{MEsparse}]\label{tStability}
Let $\mu$ and $\nu$ be two probability measures on $\X$ and $m\in(0,1]$. 
Then, $\norm{\dm-\dn}_\infty\leq\frac{1}{\sqrt{m}}W_2(\mu,\nu)$.
\end{theorem}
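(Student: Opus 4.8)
The plan is to reduce the claim, which is the stability of the distance to a measure under the Wasserstein distance, to two applications of the triangle inequality once the right optimal objects are in place. The essential tool I would invoke is the alternative characterization of $\dm$ as a minimization over sub-measures: for every $x$,
\[
m\,\dm(x)^2 \;=\; \min_{\mu_x}\int_\X d_\X(x,y)^2\, d\mu_x(y),
\]
where the minimum ranges over all sub-measures $\mu_x\le\mu$ of total mass $m$ (meaning $\mu_x(A)\le\mu(A)$ for every measurable $A$ and $\mu_x(\X)=m$). This identity, established in \cite{gipmCCM}, turns the one-dimensional quantile integral defining $\dm$ into a mass-transport–friendly optimization, which is exactly the form that interacts cleanly with a transport plan.

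Next I would fix $x$ and pick an optimal sub-measure $\nu_x\le\nu$ of mass $m$ realizing $\dn(x)$, together with an optimal plan $\pi\in\Gamma(\mu,\nu)$ for $W_2(\mu,\nu)$. The key construction is a measure $\pi_x$ on $\X\times\X$ obtained by restricting $\pi$ to the mass that lands inside $\nu_x$: disintegrating $\pi$ against its second marginal $\nu$ and retaining the fraction $d\nu_x/d\nu$ of each conditional yields a $\pi_x$ whose second marginal is exactly $\nu_x$ and whose first marginal $\mu_x$ satisfies $\mu_x\le\mu$ and $\mu_x(\X)=m$. Thus $\mu_x$ is an admissible competitor for $\dm(x)$, and by construction $\pi_x\le\pi$ so that the transport cost of $\pi_x$ is controlled by $W_2(\mu,\nu)^2$.

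With $\pi_x$ in hand the estimate is immediate. Writing $(y,z)$ for the two coordinates, using $\mu_x$ as a competitor gives $\sqrt{m}\,\dm(x)\le\norm{d_\X(x,y)}_{L^2(\pi_x)}$, since the first marginal of $\pi_x$ is $\mu_x$. The pointwise inequality $d_\X(x,y)\le d_\X(x,z)+d_\X(z,y)$ together with Minkowski's inequality in $L^2(\pi_x)$ yields
\[
\sqrt{m}\,\dm(x)\;\le\;\norm{d_\X(x,z)}_{L^2(\pi_x)}+\norm{d_\X(y,z)}_{L^2(\pi_x)}.
\]
The first term equals $\sqrt{m}\,\dn(x)$ because the second marginal of $\pi_x$ is the optimal $\nu_x$, and the second term is at most $W_2(\mu,\nu)$ because $\pi_x\le\pi$. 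Dividing by $\sqrt{m}$ gives $\dm(x)-\dn(x)\le\frac{1}{\sqrt{m}}W_2(\mu,\nu)$, and exchanging the roles of $\mu$ and $\nu$ supplies the matching bound, hence $\norm{\dm-\dn}_\infty\le\frac{1}{\sqrt{m}}W_2(\mu,\nu)$.

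I expect the main obstacle to be the measure-theoretic construction of $\pi_x$ with the prescribed marginals: one must justify the disintegration and restriction of the optimal plan so that its second marginal is precisely $\nu_x$ while its first marginal is dominated by $\mu$, and confirm $\pi_x\le\pi$ so that its cost stays below $W_2(\mu,\nu)^2$. The sub-measure characterization of $\dm$ is also nontrivial, but it can be cited from \cite{gipmCCM}; everything downstream of the construction of $\pi_x$ is just Minkowski's inequality and bookkeeping of marginals.
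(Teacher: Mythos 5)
Your proof is correct: the paper itself does not prove Theorem \ref{tStability} but imports it from \cite{gipmCCM,MEsparse}, and your argument --- the sub-measure characterization $m\,\dm(x)^2=\min_{\mu_x\le\mu,\;\mu_x(\X)=m}\int d_\X(x,y)^2\,d\mu_x(y)$, restriction of an optimal transport plan so that its second marginal is the optimal sub-measure for $\dn(x)$, then Minkowski's inequality --- is precisely the proof given in those references. The disintegration step you flag as the main obstacle goes through exactly as you describe: setting $d\pi_x(y,z)=\frac{d\nu_x}{d\nu}(z)\,d\pi(y,z)$ gives second marginal $\nu_x$, first marginal dominated by $\mu$ with total mass $m$, and $\pi_x\le\pi$, which is all the argument needs.
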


We will mainly use the distance to empirical measures in this paper. (See \cite{MEsparse,gipmCCM,wkdGMM} for more details on distance to a measure and its approximation.)
Given a finite point set $P$, its associated \emph{empirical measure $\mu_P$} is defined as the sum of Dirac masses:
$\mu_P=\frac{1}{|P|}\sum_{p\in P}\delta_p. $
The distance to this empirical measure for a point $x$ can then be expressed as an average of its distances to the $k = m|P|$ nearest neighbors where $m$ is the mass parameter. 
%$k=m|P|$ and $m$ is the parameter of mass.
For the sake of simplicity, $k$ will be assumed to be an integer.
The results also hold for other values of $k$.
However, a non integer $k$ introduces unnecessary technical difficulties.
Denoting by $p_i(x)$ the $i$-th nearest neighbors of $x$ in $P$, one can write:
%\begin{equation}
$$\dmP(x)=\sqrt{\frac{1}{k}\sum_{i=1}^k d(p_i(x),x)^2}.$$
%\end{equation}
%For more details on the distance to a measure, one can refer to~\cite{gipmCCM}.

\paragraph*{Geometric sampling condition.}
Our sampling condition treats the input
point data as a measure and relates it to the manifold (where input points are sampled from) via distance-to-measures with the help of two parameters. 
%using two parameters with distance-to-measures.

%% I enclosed the old definition. -- Yusu
%\begin{definition}\label{def:geomnoise}
%Let $P\subset\mathbb{R}^n$ be a discrete sample and let $\manifold\subset\R^n$ be a smooth manifold.
%Let $\mu$ denote the empirical measure
%of $P$. For a fixed $m>0$, we say $P$ is an $(\eps,r)$-sample of 
%$\manifold$ if the following holds:
%\begin{equation}\label{dense}
%\eps=\sup_{x\in \manifold}\dm(x)
%\end{equation}
%\begin{equation}\label{sparse}
%r=\sup\{\ell\in\mathbb{R}|\forall x,\ \dm(x)<\ell\implies d(x,\manifold)\leq \dm(x)+\eps\}
%\end{equation}
%\end{definition}
\begin{definition}\label{def:geomnoise}
Let $P\subset\mathbb{R}^n$ be a discrete sample and $\manifold\subset\R^n$ a smooth manifold.
Let $\mu_P$ denote the empirical measure
of $P$. For a fixed mass parameter $m>0$, we say that $P$ is an $(\eps,r)$-sample of 
$\manifold$ if the following holds:
\begin{eqnarray}\label{dense}
\forall x \in \manifold, \dmP(x) \le \eps; ~~\text{and} 
%\eps=\sup_{x\in \manifold}\dm(x)
\end{eqnarray}
\begin{eqnarray}\label{sparse}
\forall x \in \R^n,\ \dmP(x)\leq r \implies d(x,\manifold)\leq \dmP(x)+\eps. 
%r=\sup\{\ell\in\mathbb{R}|\forall x,\ \dm(x)<\ell\implies d(x,\manifold)\leq \dm(x)+\eps\}
\end{eqnarray}
\end{definition}
The parameter $\eps$ captures 
the distance to the empirical measure for points in $\manifold$ and intuitively 
tells us how dense $P$ is in relation  
to the manifold $\manifold$. 
The parameter $r$ intuitively indicates how far away we can deviate
from the manifold, while keeping the noise
sparse enough so as not to be mistaken for signal.
We remark that if a point set is an $(\eps,r)$-sample of $\manifold$ then it is an $(\eps',r')$-sample of $\manifold$ for any $\eps'\geq\eps$ and $r'\leq r$. In general, the smaller $\eps$ is and the bigger $r$ is, the better an $(\eps,r)$-sample is. 

For convenience, denote the distance
function to the manifold $\manifold$ by  
$d_\pi: \mathbb{R}^n\rightarrow \mathbb{R}$, $x\mapsto d(x,\manifold)$.
We have the following interleaving relation: 
%One can easily derive from the definitions of $\eps$ and $r$ that:
\begin{equation}\label{eqn:interleave}
\forall \alpha<r-\eps,\ d_\pi^{-1}(]-\infty,\alpha])\subset\dmP^{-1}(]-\infty,\alpha+\eps])\subset d_\pi^{-1}(]-\infty,\alpha+2\eps])
\end{equation}

To see why this interleaving relation holds, let $x$ be a point such that $d(x,\manifold)\leq\alpha$. 
Thus $d(\pi(x),x)\leq\alpha$. 
Using the hypothesis (\ref{dense}), we get that $\dmP(\pi(x))\leq\eps$.
Given that the distance to a measure is a 1-Lipschitz function we then obtain that $\dmP(x)\leq\eps+\alpha$.

Now let $x$ be a point such that $\dmP(x)\leq\alpha+\eps\leq r$.
Using the condition
on $r$ in (\ref{sparse}) we get that $d(x,\manifold)\leq \dmP(x)+\eps\leq \alpha+2\eps$ which concludes the proof of  Eqn~(\ref{eqn:interleave}).

Eqn (\ref{eqn:interleave}) gives an interleaving between the sub-level 
sets of the distance to the measure $\mu$ and the offsets of the manifold $\manifold$.
By Theorem~\ref{tstability}, this implies the proximity between the persistence modules of their respective  
sub-level sets filtrations .
Observe that this relation is in some sense analogous to the one obtained 
when two compact sets $A$ and $B$ have Hasudorff distance of at most 
$\eps$:
\begin{equation}\label{eqn:classicalHaus}
\forall \alpha,\ d_A^{-1}(]-\infty,\alpha])\subset d_B^{-1}(]-\infty,\alpha+\eps])\subset d_A^{-1}(]-\infty,\alpha+2\eps]) . 
\end{equation}
%In fact, this is an equivalent definition of having a Hausdorff distance smaller than $\eps$ when distance functions are well defined.

%Numerous results on inference are known to hold for point sets that are 
%close to the underlying manifold with respect to the Hausdorff distance.
%We show that these results can at least be partially translated to the use of the distance to measure given that kind of noise model.
\paragraph*{Relation to other sampling conditions.}
Our sampling condition encompasses several other existing sampling conditions.
While the parameter $\eps$ is natural, the parameter $r$ may appear to be artificial.
It bounds the distances at which we can observe the manifold 
through the scope of the distance to a measure.
In most classical sampling conditions, $r$ is equal to $\infty$ and thus 
we obtain a similar relation as for the classical Hausdorff sampling condition in Eqn (\ref{eqn:classicalHaus}). 

One notable noise model where $r\neq\infty$ is when there is an uniform background noise in the ambient space $\R^d$, sometimes called \emph{clutter noise}.
In this case, $r$ depends on the difference between the density of the relevant data and the density of the noise.
For other sampling conditions like Wassertein, Gaussian, Hausdorff sampling conditions, $r=\infty$.
Detailed relations and proofs for the Wasserstein and Gaussian sampling conditions can be found in Appendix~\ref{sec:relNoiseGeom}.

\subsection{Scalar field analysis under geometric noise}
%\label{subsec:geometryinfer}
In the rest of the paper,  we assume that $\manifold$ is a manifold with 
positive reach $\rho_\manifold$ (minimum distance between
$\manifold$ and its medial axis) and 
whose curvature is bounded by $c_\manifold$.
Assume that the input $P$  is an $(\eps,r)$-sample of 
$\manifold$ for a given $m\in(0,1]$, where 
\begin{equation}\label{dense2}
\eps\leq \frac{\rho_\manifold}{6}\mbox{ , } \mbox{ and } r>2\eps. 
\end{equation}
As discussed at the beginning of this section, we assume that there is no intrinsic functional noise, that is, for every $p \in P$, the observed function value $\tilde f(p) = f(\pi(p))$ is the same as the true value for the projection $\pi(p) \in \manifold$ of this point. 
Our goal now is to show how to recover the persistence diagram induced by $f: \manifold \to\R$ from its observations $\tilde f: P \to \R$ on $P$.

Taking advantage of the interleaving~(\ref{eqn:interleave}), we can use the distance to the empirical measure to filter the points of $P$ to remove geometric noise. 
In particular, we consider the set 
\begin{equation}\label{eqn:L}
L=P\cap \dmP^{-1}(]-\infty,\eta]) \mbox{ where } \eta\geq2\eps.
\end{equation}
We will then use a similar approach as the one from~\cite{sfapcdCGOS} for this set $L$. 
The optimal choice for the parameter $\eta$ is $2\eps$. 
However, any value with $\eta\leq r$ and $\eta+\eps< \rho_\manifold$ works as long as there exist $\delta$ and $\delta'$ satisfying the conditions stated in Theorem~\ref{tGeomNoise}. 

Let $\bar{L}=\{\pi(x)|x\in L\}$ denote the orthogonal projection of $L$ onto $\manifold$. 
To simulate sub-level sets $f^{-1}(]-\infty,\alpha]$ of $f: \manifold \to \R$, consider the restricted sets  
$L_\alpha := L\cap(f\circ\pi)^{-1}(]-\infty,\alpha])$ and 
let $\bar{L}_\alpha=\pi(L_\alpha)$.
By our assumption on the observed function $\tilde f: P \to \R$, we have: 
%Due to our assumptions, remark that 
$L_\alpha=\{x\in L|\tilde f(x)\leq\alpha\}$.

Let us first recall a result about the relation between Riemannian and 
Euclidian metrics (e.g.~\cite{DSW11}).
For any two points
$x, y\in \manifold$ with $d(x,y)\leq\frac{\rho_\manifold}{2}$
one has:
\begin{equation}\label{eucriem}
d(x,y)\leq d_\manifold(x,y)\leq\left(1+\frac{4d(x,y)^2}{3\rho_\manifold^2}\right)d(x,y)\leq \frac{4}{3}d(x,y).
\end{equation}
%where $\rho_\manifold$ denotes the {\em reach} of $\manifold$.

%\begin{obs}
%Every point $x\in \manifold$
%has a point in $\bar L$ within distance
%$(2\sqrt{3}+1)\eps$. 
%\end{obs}
%\begin{proof}
%Let $x$ be a point of $\manifold$. 
%We know that $\dm(x)\leq\eps$. 
%Hence $\dP(x)\leq\sqrt{3}\eps$ and thus $\exists y\in {L}$ such that $d(y,x)\leq\sqrt{3}\eps$.
%Thus :
%$$\dP(x)^2=\dm(y)^2+d(y,x)^2\leq3\eps^2$$
%As $\dm(y)\leq \sqrt{3}\eps\leq r$ we get that $d(y,\pi(y))\leq\dm(y)+\eps$.
%$$d(x,\pi(y))\leq d(x,y)+d(y,\pi(y))\leq d(x,y)+\dm(y)+\eps\leq(2\sqrt{3}+1)\eps$$
%\end{proof}

As a direct consequence of our sampling condition, for each point $x\in\manifold$, there exists a point $p\in L$ at distance less than $2\eps$: Indeed, for each $x\in \manifold$, since $\dmP(x) \le \eps$, there must exist a point $p\in P$ such that $d(x,p) \le \eps$. On the other hand, since the distance to measure is $1$-Lipschitz, we have $\dmP(p) \le \dmP(x) + d(x,p) \le 2\eps$. Hence $p \in L$ as long as $\eta\geq2\eps$.
We will use the \emph{extrinsic} Vietoris-Rips complex built on top of points from $L$ 
to infer the scalar field topology. 
%Ideally, this complex should be built on the sample point set where the interpoint distances mimic the Riemannian metric on $\manifold$. 
%The following approximation of this metric helps in capturing the topology.
Using the previous relation Eqn (\ref{eucriem}), we obtain the following result which states that the Euclidean distance for nearby points in $L$ approximates the geodesic distance on $\manifold$. 
\begin{proposition}
Let $\lambda=\frac{4}{3}\frac{\rho_\manifold}{\rho_\manifold-(\eta+\eps)}$, and assume that $2\eps \leq \eta \leq r$ and $\eps+\eta<\rho_\manifold$. 
Let $x,y \in L$ be two points from $L$ such that $d(x,y)\leq \frac{\rho_\manifold}{2}-\frac{\eta+\eps}{2}$. 
Then, 
%For $\forall (x,y)\in{L\times L},\ d(x,y)\leq \frac{\rho_\manifold}{2}-2(\eta+\eps)$ implies
%\begin{equation}
$$\frac{d_\manifold(\pi(y),\pi(x))}{\lambda}\leq d(x,y)\leq 2(\eta+\eps) + d_\manifold(\pi(x),\pi(y)). $$
%\end{equation}
%where $\lambda=\frac{R_\manifold}{R_\manifold-(\eta+\eps)}$.
\label{prop:distance}
\end{proposition}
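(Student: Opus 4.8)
The plan is to reduce both inequalities to two basic facts: that every point of $L$ lies within distance $\eta+\eps$ of $\manifold$, and that the nearest-point projection $\pi$ onto a set of positive reach is Lipschitz. First I would record the distance bound. Since $x\in L$ means $\dmP(x)\le\eta\le r$, the sparse half~(\ref{sparse}) of the sampling condition gives $d(x,\manifold)=d(x,\pi(x))\le\dmP(x)+\eps\le\eta+\eps$, and likewise for $y$. Because $\eta+\eps<\rho_\manifold$ by hypothesis, both $\pi(x)$ and $\pi(y)$ are well defined.

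For the upper (right-hand) inequality I would simply chain triangle inequalities through the projections:
$$d(x,y)\le d(x,\pi(x))+d(\pi(x),\pi(y))+d(\pi(y),y)\le 2(\eta+\eps)+d(\pi(x),\pi(y)).$$
Since the straight-line distance never exceeds the geodesic distance, $d(\pi(x),\pi(y))\le d_\manifold(\pi(x),\pi(y))$, and the right-hand bound follows. This direction is routine.

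The lower (left-hand) inequality is the substantive part, and its key ingredient is Federer's theorem on sets of positive reach: for points whose distance to $\manifold$ is at most $\eta+\eps<\rho_\manifold$, the projection $\pi$ is $\frac{\rho_\manifold}{\rho_\manifold-(\eta+\eps)}$-Lipschitz, so $d(\pi(x),\pi(y))\le\frac{\rho_\manifold}{\rho_\manifold-(\eta+\eps)}\,d(x,y)$. Combined with the hypothesis $d(x,y)\le\frac{\rho_\manifold-(\eta+\eps)}{2}$, this yields $d(\pi(x),\pi(y))\le\frac{\rho_\manifold}{2}$, which is exactly the regime in which the Riemannian--Euclidean comparison~(\ref{eucriem}) applies. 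Feeding the Euclidean projection bound into~(\ref{eucriem}) converts it to geodesic distance at the cost of the factor $\frac{4}{3}$:
$$d_\manifold(\pi(x),\pi(y))\le\tfrac{4}{3}\,d(\pi(x),\pi(y))\le\tfrac{4}{3}\frac{\rho_\manifold}{\rho_\manifold-(\eta+\eps)}\,d(x,y)=\lambda\,d(x,y),$$
which is the claimed bound.

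The main obstacle is precisely the multiplicative control of $\pi$: the naive triangle inequality only yields the additive estimate $d(\pi(x),\pi(y))\le d(x,y)+2(\eta+\eps)$, which is useless for a scale-free Lipschitz statement, so obtaining the factor $\frac{\rho_\manifold}{\rho_\manifold-(\eta+\eps)}$ genuinely requires the positive-reach hypothesis on $\manifold$. The role of the assumption $d(x,y)\le\frac{\rho_\manifold}{2}-\frac{\eta+\eps}{2}$ is then exactly to keep the projected pair within the radius $\frac{\rho_\manifold}{2}$ on which~(\ref{eucriem}) is valid, so the only real care needed is to verify this threshold before invoking the comparison.
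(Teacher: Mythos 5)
Your proposal is correct and matches the paper's own proof essentially step for step: both use the sparse sampling condition~(\ref{sparse}) to bound $d(x,\pi(x))\le\eta+\eps$, invoke Federer's projection Lipschitz bound $d(\pi(x),\pi(y))\le\frac{\rho_\manifold}{\rho_\manifold-(\eta+\eps)}d(x,y)$ for points within distance $\eta+\eps<\rho_\manifold$ of $\manifold$, check that the hypothesis on $d(x,y)$ places $d(\pi(x),\pi(y))$ below the threshold $\frac{\rho_\manifold}{2}$ needed for the comparison~(\ref{eucriem}), and finish the right-hand inequality by the same triangle inequality through the projections. No gaps; your added remark about why the naive additive estimate would not suffice is a fair observation but does not change the argument.
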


\begin{proof}
Let $x$ and $y$ be two points of $L$ such that $d(x,y)\leq \frac{\rho_\manifold}{2}-\frac{\eta+\eps}{2}$. 
As $\dmP(x)\leq\eta\leq r$, Eqn (\ref{sparse}) implies $d(\pi(x),x)\leq \eta+\eps$.
Therefore, $d(\pi(x),\pi(y))\leq\frac{\rho_\manifold}{\rho_\manifold-(\eta+\eps)}d(x,y)$~\cite[Theorem 4.8,(8)]{cmF}.
This implies $d(\pi(x),\pi(y))\leq\frac{\rho_\manifold}{2}$ and following~(\ref{eucriem}), $d_\manifold(\pi(x),\pi(y))\leq\frac{4}{3}d(\pi(x),\pi(y))$.

This proves the left inequality in the Proposition. The right inequality follows from $$d(x,y)\leq d(\pi(x),x)+d(\pi(y),y)+d_\manifold(\pi(x),\pi(y))
\leq 2(\eta+\eps)+d_\manifold(\pi(x),\pi(y)).$$
\end{proof}

%This result can be combined with Theorem~\ref{tGeomNoise}.
%$\bar{L}$ is a $2\epsilon$-sample of $\manifold$ in its Riemannian metric and using the Euclidean distance to approximate the Riemannian distance, we get:
\begin{theorem}\label{tGeomNoiseNew}
Let $\manifold$ be a compact Riemannian manifold and let $f:\manifold \rightarrow\R$ be a $c$-Lipschitz function. Let $P$ be an $(\eps,r)$-sample of $M$, and $L$ be as introduced in Eqn (\ref{eqn:L}). Assume $\eps \le \frac{\rho_\manifold}{6}, r > 2\eps$, and $2\eps \le \eta \leq r$. 
Then, for any $\delta\geq2\eta+6\eps$ and any $\delta'\in\left[2\eta+2\eps+\frac{8}{3}\frac{\rho_\manifold}{\rho_\manifold-(\eta+\eps)}\delta,\ \frac{3}{4}\frac{\rho_\manifold-(\eta+\eps)}{\rho_\manifold}\varrho(\manifold)\right]$, $H_*(f)$ and 
$H_*(R_\delta({L}_\alpha)\hookrightarrow R_{\delta'}({L}_\alpha))$ are $\frac{4}{3}\frac{c \rho_\manifold \delta'}{\rho_\manifold-(\eta+\eps)}$-interleaved.
\end{theorem}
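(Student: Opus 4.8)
The plan is to reduce to the distorted-metric stability result of Theorem~\ref{tGeomNoise} by viewing the filtered set $L$ as an abstract sample of $\manifold$ through the nearest-point projection $\pi$, with image $\bar L = \pi(L)$, and by taking the distorted metric to be the ambient Euclidean distance $\tilde d(x,y) := d(x,y)$ on $L$. Three things must then be matched to the hypotheses of Theorem~\ref{tGeomNoise}: the compatibility of the filtrations, a geodesic sampling density for $\bar L$, and a two-sided metric-distortion estimate with explicit $\lambda, \mu, \nu$. Filtration compatibility is immediate from the hypothesis $\tilde f(x) = f(\pi(x))$: the set $L_\alpha = \{x \in L \mid \tilde f(x) \le \alpha\}$ is the $\pi$-preimage of $f^{-1}(]-\infty,\alpha])$, so $\{R_\delta(L_\alpha) \hookrightarrow R_{\delta'}(L_\alpha)\}$ is exactly the nested-pairs filtration to which Theorem~\ref{tGeomNoise} applies, with the $c$-Lipschitz function $f$ living on $\bar L$.

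First I would bound the sampling density of $\bar L$. Given $x \in \manifold$, condition~(\ref{dense}) yields $\dmP(x) \le \eps$ and hence a point $p \in P$ with $d(x,p) \le \eps$; since $\dmP$ is $1$-Lipschitz, $\dmP(p) \le 2\eps \le \eta$, so $p \in L$. As $x$ lies on $\manifold$, we have $d(p,\manifold) \le d(p,x) \le \eps$, so $d(p,\pi(p)) \le \eps$ and therefore $d(x,\pi(p)) \le 2\eps$. This $2\eps$ covering radius is the sampling density that will feed the admissibility threshold of Theorem~\ref{tGeomNoise}, and converting it into the geodesic metric through~(\ref{eucriem}) is legitimate because $\eps \le \rho_\manifold/6$.

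Next I would read the distortion parameters directly off Proposition~\ref{prop:distance}, which gives $\frac{d_\manifold(\pi(x),\pi(y))}{\lambda} \le \tilde d(x,y) \le 2(\eta+\eps) + d_\manifold(\pi(x),\pi(y))$ for nearby $x,y \in L$, with $\lambda = \frac43\frac{\rho_\manifold}{\rho_\manifold-(\eta+\eps)}$. This is precisely the shape required by Theorem~\ref{tGeomNoise} with scaling factor $\lambda$, additive error $\nu = 2(\eta+\eps)$, and relative error $\mu = \lambda$ (all satisfying $\lambda,\mu \ge 1$, $\nu \ge 0$). Substituting $\eps_{\mathrm{geo}} = 2\eps$, $\mu = \lambda$, $\nu = 2(\eta+\eps)$ then reproduces every quantity in the statement: the threshold $\nu + 2\mu\,\eps_{\mathrm{geo}}/\lambda = 2(\eta+\eps)+4\eps = 2\eta+6\eps$; the lower window end $\nu + 2\mu\delta = 2\eta+2\eps+\frac83\frac{\rho_\manifold}{\rho_\manifold-(\eta+\eps)}\delta$; the upper end $\varrho(\manifold)/\lambda = \frac34\frac{\rho_\manifold-(\eta+\eps)}{\rho_\manifold}\varrho(\manifold)$; and the interleaving constant $c\lambda\delta' = \frac43\frac{c\rho_\manifold\delta'}{\rho_\manifold-(\eta+\eps)}$. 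Theorem~\ref{tGeomNoise} then delivers the asserted interleaving.

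The point requiring the most care is that Proposition~\ref{prop:distance} only certifies the two-sided distortion for pairs with $d(x,y) \le \frac{\rho_\manifold-(\eta+\eps)}{2}$, whereas $R_{\delta'}(L_\alpha)$ contains every edge of $\tilde d$-length up to $\delta'$. I would therefore have to verify that the prescribed ceiling $\delta' \le \frac34\frac{\rho_\manifold-(\eta+\eps)}{\rho_\manifold}\varrho(\manifold)$ keeps all such edges inside this validity radius, reconciling the geometric range of Proposition~\ref{prop:distance} with the convexity-radius constraint $\delta' \le \varrho(\manifold)/\lambda$ inherited from Theorem~\ref{tGeomNoise}. The remaining checks — that $\eps \le \rho_\manifold/6$ and $r > 2\eps$ keep the intermediate inequalities consistent, and that $\delta \le \delta'$ across the whole window — are routine.
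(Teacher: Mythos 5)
Your proposal is correct and takes essentially the same route as the paper's own proof: reduce to Theorem~\ref{tGeomNoise} applied to $\bar L$ with the distorted metric $\tilde d(\pi(x),\pi(y)) := d(x,y)$, establish the $2\eps$ sampling density of $\bar L$ via condition~(\ref{dense}) and the $1$-Lipschitz property of $\dmP$, and read the parameters $\lambda = \mu = \frac{4}{3}\frac{\rho_\manifold}{\rho_\manifold-(\eta+\eps)}$, $\nu = 2(\eta+\eps)$ off Proposition~\ref{prop:distance}, so that all the stated thresholds and the interleaving constant $c\lambda\delta'$ drop out by substitution. The caveat you flag --- that Proposition~\ref{prop:distance} certifies the distortion only for pairs within distance $\frac{\rho_\manifold-(\eta+\eps)}{2}$, while the Rips complexes contain longer edges --- is a genuine subtlety, but the paper's proof passes over it silently, so your treatment is if anything the more careful of the two.
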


\begin{proof}
First, note that $\bar{L}$ is a $2\eps$-sample of $\manifold$ in its geodesic metric. It follows from the definition of $\dmP$ that,
for any point $x\in \manifold$, the nearest point $p\in L$ to $x$
satisfies $d(x,p) \le\dmP(x)\le \eps$. Hence $d(x, \pi(p)) \le d(x, p) + d(p, \pi(p)) \le 2d(x,p) \le 2\eps$. 
Now we apply Theorem \ref{tGeomNoise} to $\bar{L}$ by using $\tilde d(\pi(x), \pi(y)) := d(x,y)$; and setting $\lambda=\mu=\frac{4}{3}\frac{\rho_\manifold}{\rho_\manifold-(\eta+\eps)}$, $\nu=2(\eta+\eps)$: the requirement on the distance function $\tilde d$ in Theorem \ref{tGeomNoise} is satisfied due to Proposition \ref{prop:distance}. The claim then follows. 
%Following the result of Proposition~\ref{prop:distance}, we applies Theorem~\ref{tGeomNoise} given that $L$ is $2\eps$ sampling and using $\lambda=\mu=\frac{4}{3}\frac{R_\manifold}{R_\manifold-(\eta+\eps)}$, $\nu=2(\eta+\eps)$.
%\yusu{Mickael: I elaborated on the proof. I also added conditions on all parameters. Please check. }
\end{proof}

%In our setting we replace ${\tilde d}$ by the euclidian distance between points of $L$: In this case, $\beta=2\eps$, $\nu=2(\eta+\eps)$ and $\lambda=\mu=\frac{R_\manifold}{R_\manifold-(\eta+\eps)}$.
%Then for any fixed $\delta\geq2\eta+6\eps$ and $\delta'\in\left[2\eta+2\eps+\frac{2R_\manifold}{R_\manifold-(\eta+\eps)}\delta,\ \frac{R_\manifold-(\eta+\eps)}{R_\manifold}\varrho(\manifold)\right]$ we know that : 
%
%$$H_*(f)\mbox{ and }H_*(R_\delta({L}_\alpha)\hookrightarrow R_{\delta'}({L}_\alpha))\mbox{ are }\frac{c R_\manifold \delta'}{R_\manifold-(\eta+\eps)}\mbox{-interleaved.}$$

Since $\manifold$ is compact, $f$ is bounded due to the Lipschitz condition.
We can look at the limit when $\alpha\rightarrow\infty$.
There exists a value $T$ such that for any $\alpha\geq T$, $L_\alpha=L$ and $f^{-1}((-\infty,\alpha])=\manifold$.
The above interleaving means that $H_*(\manifold)$ and $ H_*(R_\delta(L))\hookrightarrow R_{\delta'}(L))$ are interleaved.
However, both objects do not depend on $\alpha$ and this gives the following inference result:
\begin{corollary}
$H_*(\manifold)$ and $H_*(R_\delta(L))\hookrightarrow R_{\delta'}(L))$ are isomorphic under conditions specified in Theorem \ref{tGeomNoiseNew}.
\end{corollary}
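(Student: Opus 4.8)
The plan is to exploit the fact that both persistence modules appearing in Theorem~\ref{tGeomNoiseNew} become constant once $\alpha$ exceeds the maximum value of $f$, and then to argue that an interleaving between two eventually-constant modules forces a genuine isomorphism of vector spaces in the stable range.

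First I would fix the stabilization threshold. Since $\manifold$ is compact and $f$ is $c$-Lipschitz, hence continuous, $f$ attains a maximum $T := \max_{\manifold} f$. For every $\alpha \geq T$ the sub-level set is the whole manifold, $f^{-1}((-\infty,\alpha]) = \manifold$, so the sub-level set filtration $\{F_\alpha\}$ is constant and equal to $\manifold$ with identity inclusions for $\alpha \geq T$; consequently its persistence module is the constant module $H_*(\manifold)$ with identity structure maps. On the discrete side, every $x \in L$ has $\pi(x)\in\manifold$ and hence $\tilde f(x) = f(\pi(x)) \leq T$, so $L_\alpha = L$ for all $\alpha \geq T$. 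Thus the nested-pair module $\alpha \mapsto \mathrm{Im}\big(H_*(R_\delta(L_\alpha)) \to H_*(R_{\delta'}(L_\alpha))\big)$ is likewise constant, equal to the fixed space $G := \mathrm{Im}\big(H_*(R_\delta(L)) \to H_*(R_{\delta'}(L))\big)$, with identity structure maps for $\alpha\geq T$.

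Next I would invoke the interleaving from Theorem~\ref{tGeomNoiseNew}. Writing $\eps_0 := \frac{4}{3}\frac{c\rho_\manifold\delta'}{\rho_\manifold-(\eta+\eps)}$ for the interleaving parameter, the interleaving supplies cross-maps $\phi_\alpha$ and $\psi_\alpha$ between the two modules whose composites equal the internal structure maps shifted by $2\eps_0$. Restricting attention to indices $\alpha \geq T$, both families of internal structure maps are identities; commutation of the interleaving squares with these identities then forces the cross-maps to be independent of $\alpha$ on this range, say $\phi$ and $\psi$. The two interleaving identities then read $\psi \circ \phi = \mathrm{id}_{H_*(\manifold)}$ and $\phi \circ \psi = \mathrm{id}_{G}$, so $\phi$ and $\psi$ are mutually inverse isomorphisms, yielding $H_*(\manifold) \cong G$ as claimed.

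The main obstacle is this last conceptual step: the interleaving in Theorem~\ref{tGeomNoiseNew} is a relation between persistence modules, not directly between two fixed vector spaces, so I must be careful to show that once both modules stabilize the interleaving triangles collapse. Concretely, the delicate points are verifying that the cross-maps really do become $\alpha$-independent on the stable range, which uses that the structure maps are genuine identities rather than merely isomorphisms, and checking that shifting the index by $\eps_0$ keeps us inside the stable range $[T,\infty)$ on both sides, so that the two composites are exactly the identity and not some other automorphism. Everything else is routine bookkeeping that follows directly from compactness of $\manifold$ and the definition of $L_\alpha$.
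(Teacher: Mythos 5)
Your proof is correct and follows essentially the same route as the paper: both the sub-level set module and the nested-pair module become constant once $\alpha$ exceeds $\max_\manifold f$ (using compactness and the Lipschitz condition), and the interleaving of Theorem~\ref{tGeomNoiseNew} between two eventually-constant modules then collapses to a pair of mutually inverse maps. The paper states this argument only as a brief remark; your write-up simply fills in the details (identity structure maps, $\alpha$-independence of the cross-maps, and the composites being exact identities) of the very same idea.
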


\section{Scalar Field Topology Inference under Geometric and Functional Noise}
\label{sec:walkthrough}

%\begin{figure}
%\centering
%\begin{tabular}{|c|l|}
%\hline
%$d_\mu$ & Distance to the empirical measure of the point set $P$\\
%$\epsilon,r$ & Parameters of the noise model\\
%$L$& Set of points filtered by the distance to the empirical measure\\
%$\eta$& Threshold used to build $L$\\
%$\{L_\alpha\}$ & Filtration of $L$ according to the functional values $f$\\
%$\{\hat{L}_\alpha\}$ & Filtration of $L$ according to the new functional values $\hat f$\\
%$\delta,\delta'$ & Parameters used to build the nested Rips filtrations\\
%\hline
%\end{tabular}
%\caption{Summary of some notations
%\label{fig:notations}}
%\end{figure}

Our constructions can be combined to analyze scalar fields in a 
more realistic setting.
Our \emph{combined sampling condition} follows conditions~(\ref{dense}) and~(\ref{sparse}) for the geometry.
We adapt condition~(\ref{eqn:funcnoisemodel}) 
to take into account the geometry and introduce the following conditions: we assume that there 
exist $\eta \ge 2\eps$ and $s$ such that:
\begin{equation}\label{hFinal3}
\forall p\in \dm^{-1}((-\infty,\eta,]),\ |\{q\in NN_k(p)|\ |\tilde f(q)-f(\pi(p))|\leq s\}|\geq k'
\end{equation} 
%% enclosed by yusu.
%Assuming that there exists an underlying manifold $\manifold$ and function $f$, we need two hypotheses.
%On the one hand, the point cloud $P$ to which we have access respects our geometric noise model with compatible parameters, i.e.:
%\begin{equation}\label{hFinal1}
%\epsilon=\sup_{x\in \manifold}\dm(x)
%\end{equation}
%\begin{equation}\label{hFinal2}
%r=\sup\{r\in\mathbb{R}|\forall x,\ \dm(x)<r\implies d_\pi(x)\leq \dm(x)+\epsilon\}
%\end{equation}
%On the other hand, the functional values have to be sufficiently good in the sense that there exists $\eta$ and $s$ such that:
%\begin{equation}\label{hFinal3}
%\forall p\in \dm^{-1}(]-\infty,\eta,]),\ |\{q\in NN_k(p)| |\tilde f(q)-f(\pi(p))|\leq s\}|\geq k'
%\end{equation}

Note that in~(\ref{hFinal3}), we are using $f(\pi(p))$ as the ``true" function value at a sample $p$ which may be off the manifold $\manifold$. The condition on the functional noise is only for points close to the manifold (under the distance to a measure). 
%This latest assumption can seem a bit ad hoc but is reasonable.
%Intuitively, for points close to the original manifold, a majority of their neighbors are also close to the manifold and the functional value is close to the correct one.
Combining the methods from the previous two sections,
we obtain the \emph{combined noise algorithm} where $\eta$ is a parameter greater than $2\eps$.

We propose the following 3-steps algortihm.
It starts by handling outliers in the geometry then it makes a regression on the function values to obtain a smoothed function $\hat{f}$ before running the existing algorithm for scalar field analysis~\cite{sfapcdCGOS} on the filtration $\hat{L}_\alpha=\{p\in L|\hat{f}(p)\leq\alpha\}$.
\begin{figure}[!h]
\begin{algorithm}{Combined noise algorithm}
\begin{enumerate}
\item Compute $L=P\cap\dm^{-1}((-\infty,\eta])$.
\item Replace functional values $\tilde f$ by $\hat f$ for points in $L$ using either k-median or disparity based method.
\item Run the scalar field analysis algorithm from~\cite{sfapcdCGOS} on $(L,\hat f)$.
\end{enumerate}
\end{algorithm}
\end{figure}

\begin{theorem}\label{tCombinedNoise}
Let $\manifold$ be a compact smooth manifold embedded in $\R^d$ and $f$ a $c$-Lipschitz function on $\manifold$.
Let $P \subset \R^d$ be a point set and $\tilde f: P \to \R$ be observed function values such that hypotheses~(\ref{dense}),~(\ref{sparse}),~(\ref{dense2}) and~(\ref{hFinal3}) are satisfied.
For $\eta\geq 2\eps$, the combined noise algorithm has the following guarantees: 

For any $\delta\in\left[2\eta+6\eps,\frac{\varrho(\manifold)}{2}\right]$ and any $\delta'\in\left[2\eta+2\eps+\frac{8}{3}\frac{\rho_\manifold}{\rho_\manifold-(\eta+\eps)}\delta,\frac{3}{4}\frac{\rho_\manifold-(\eta+\eps)}{\rho_\manifold}\varrho(\manifold)\right]$, $H_*(f)$ and $H_*({R}_\delta(\hat{L}_\alpha)\hookrightarrow{R}_{\delta'}(\hat{L}_\alpha))$ are $\left(\frac{4}{3}\frac{c\rho_\manifold\delta'}{\rho_\manifold-(\eta+\eps)}+\fprecision s\right)$-interleaved where $\fprecision=1$ if we use the $k$-median and $\fprecision=\left(1+ 2 \sqrt{\frac{k - k'}{2k'-k}}\right)$ if we use the disparity method for Step 2.
\end{theorem}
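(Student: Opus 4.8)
The plan is to treat the combined theorem as the composition of two already-proved estimates: the geometric displacement is controlled by Theorem~\ref{tGeomNoiseNew}, the residual functional error by Corollary~\ref{cor:denoisingbound}, and the two are glued by the additive composition of interleavings. Concretely, I would introduce the ``ideal'' sub-level sets $L_\alpha = \{x\in L \mid f(\pi(x))\le\alpha\}$ that one would obtain if the projected values $f\circ\pi$ were known exactly on $L$, and use $\{L_\alpha\}$ as the bridge between $H_*(f)$ and the computed filtration $\{\hat L_\alpha\}$ with $\hat L_\alpha=\{p\in L\mid \hat f(p)\le\alpha\}$.

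For the geometric half, Step~1 of the algorithm outputs $L = P\cap\dm^{-1}((-\infty,\eta])$, and hypotheses~(\ref{dense}),~(\ref{sparse}),~(\ref{dense2}) together with $2\eps\le\eta\le r$ are exactly the hypotheses of Theorem~\ref{tGeomNoiseNew}. That theorem then gives, for $\delta,\delta'$ in the stated ranges, a $\theta$-interleaving between $H_*(f)$ and the nested-pair module $\{R_\delta(L_\alpha)\hookrightarrow R_{\delta'}(L_\alpha)\}$, where $\theta = \frac{4}{3}\frac{c\rho_\manifold\delta'}{\rho_\manifold-(\eta+\eps)}$; the extra constraint $\delta\le\varrho(\manifold)/2$ only shrinks the admissible range and is harmless.

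For the functional half, I would observe that the denoising in Step~2 is applied to each $p\in L$ using the neighbor set $NN_k(p)$ of hypothesis~(\ref{hFinal3}), and that~(\ref{hFinal3}) is literally the $(k,k',\Delta)$-functional-sample condition of Section~\ref{ssec:functional} with the single target value $f(\pi(p))$ substituted for $f(p)$ and $s$ for $\Delta$. Because the proofs of Observation~\ref{obs:kmedian-bound} and Lemma~\ref{lem:discrep-bound} use only the clustering of $k'$ observed values around that one target, and never any global structure of the true function, Corollary~\ref{cor:denoisingbound} applies verbatim and yields $|\hat f(p) - f(\pi(p))| \le \fprecision s$ for all $p\in L$, with $\fprecision$ as stated. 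This pointwise bound forces $L_{\alpha-\fprecision s}\subseteq \hat L_\alpha \subseteq L_{\alpha+\fprecision s}$ for every $\alpha$, which survives passing to Rips complexes at scales $\delta$ and $\delta'$; hence the nested-pair modules of $\{\hat L_\alpha\}$ and $\{L_\alpha\}$ are $\fprecision s$-interleaved, by the same threshold-shift mechanism that upgrades Theorem~\ref{tPrevious} to Theorem~\ref{thm:funcnoiseonlyfromCGOS}. Composing the $\theta$-interleaving with this $\fprecision s$-interleaving gives the claimed $(\theta+\fprecision s)$-interleaving.

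The step needing the most care is the separation of the two noise sources in the functional half: I must ensure that the value against which the functional sample is measured is the projected value $f(\pi(p))$, so that the geometric displacement of $p$ off $\manifold$ is charged to the functional budget $s$ rather than leaking into the Lipschitz term, and that the neighbor set driving the denoising is precisely the $NN_k(p)$ controlled by~(\ref{hFinal3}) (even though some of those neighbors may lie in $P\setminus L$). Once this bookkeeping is fixed, the remainder is the routine additive composition of two interleavings and requires no new geometric estimate.
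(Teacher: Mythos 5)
Your proposal is correct and follows essentially the same route as the paper's proof: the paper likewise bridges $H_*(f)$ and $\{\hat L_\alpha\}$ through the ideal filtration $L_\alpha=\{x\in L \mid f(\pi(x))\le\alpha\}$, invokes Theorem~\ref{tGeomNoiseNew} for the geometric interleaving, applies Corollary~\ref{cor:denoisingbound} to $f\circ\pi$ via hypothesis~(\ref{hFinal3}) to get $|\hat f(p)-f(\pi(p))|\le\fprecision s$ on $L$, deduces the $\fprecision s$-interleaving of the nested-pair Rips modules (the paper checks it edge-by-edge, you do it at the level of vertex sets, which is equivalent), and composes the two interleavings. Your closing remark about measuring the functional sample against $f(\pi(p))$ and keeping the neighbor set $NN_k(p)$ as in~(\ref{hFinal3}) is exactly the bookkeeping the paper's hypothesis is designed to handle.
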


%Notice that while this theorem assumes a setting where we can ensure theoretical guarantees, the algorithm can be applied in a more general setting and still produce good results.
%Beware that Theorems~\ref{tFuncNoise}and~\ref{tGeomNoiseNew} do not combine directly.
%Applying either of them will make it impossible to guarantee that the hypotheses of the one are still verified.
%It is necessary to directly combine parts of the proofs.
%%The problem of choosing values for parameters is a complex one. 
%%Persistence can helps for the choice of $\eta$ by looking at an interval where the topology is stable.

\begin{proof}
First, consider the filtration induced by $L_\alpha=\{x\in L|f(\pi(x))\leq\alpha\}$; that is, we first imagine that all points in $L$ have correct function values (equals to the true value of their projection on $\manifold$). By Theorem \ref{tGeomNoiseNew}, for  $$\delta\in\left[2\eta+6\eps,\frac{\varrho(\manifold)}{2}\right] \text{ and } \delta'\in\left[2\eta+2\eps+\frac{8}{3}\frac{\rho_\manifold}{\rho_\manifold-(\eta+\eps)}\delta,\frac{3}{4}\frac{\rho_\manifold-(\eta+\eps)}{\rho_\manifold}\varrho(\manifold)\right],$$ $H_*(f)$ and $H_*({R}_\delta(L_\alpha)\hookrightarrow{R}_{\delta'}(L_\alpha))$ are $\frac{4}{3}\frac{c \rho_\manifold\delta'}{\rho_\manifold-(\eta+\eps)}$-interleaved. 
%First, consider the two filtration $\{F_\alpha\}$ and $\{L_\alpha\}$ where $L_\alpha=\{x\in L|f(\pi(x))\leq\alpha\}$.
%Using Theorem~\ref{tGeomNoise}, for any $\delta\geq\nu+2\mu\frac{\beta}{\lambda}$ and any $\delta'\in[\nu+2\mu\delta,\ \frac{1}{\lambda}c_\manifold]$, $H_*(f)$ and $H_*({R}_\delta(L_\alpha)\hookrightarrow{R}_{\delta'}(L_\alpha))$ are $\frac{c R_\manifold\delta'}{R_\manifold-(\eta+\eps)}$-interleaved.

Next, consider $\hat L_\alpha=\{p\in L|\hat f(p)\leq\alpha\}$, which leads to a filtration based on the smoothed function values $\hat{f}$ (not observed values). 
Recall that our algorithm returns $H_*({R}_\delta(\hat{L}_\alpha)\hookrightarrow{R}_{\delta'}(\hat{L}_\alpha))$. We aim to relate this persistence module with $H_*({R}_\delta(L_\alpha)\hookrightarrow{R}_{\delta'}(L_\alpha))$. 
Specifically, fix $\alpha$ and let $(x,y)$ be an an edge of $R_\delta(L_\alpha)$.
This means that $d(x,y)\leq 2\delta$, $f(\pi(x))\leq\alpha$, $f(\pi(y))\leq\alpha$.
Corollary~\ref{cor:denoisingbound} can be applied to the function $f\circ\pi$ due to hypothesis~(\ref{hFinal3}).
Hence $|\hat f(x)-f(\pi(x))|\leq\fprecision s$ and $|\hat f(y)-f(\pi(y))|\leq\fprecision s$.
Thus $(x,y)\in R_\delta(\hat L_{\alpha+\fprecision s})$.
One can reverse the role of $\hat f$ and $f$ and get an $\fprecision s$-interleaving of $\{R_\delta(L_\alpha)\}$ and $\{R_\delta(\hat L_\alpha)\}$. 
This gives rise to the following commutative diagram since all arrows are induced by inclusions. 
%We have two filtrations of the same metric space that are interleaved.
%All parts of the following diagram commute as all the arrows are inclusions.
\begin{center}
\begin{tikzpicture}[scale=.7]
% Lower plane
\draw (0,0) node {$H_*(R_{\delta}(L_\alpha))$};
\draw (5,0) node {$H_*(R_{\delta}(L_{\alpha+2\fprecision s}))$};
\draw (10,0) node{$H_*(R_{\delta}(L_{\alpha+4\fprecision s}))$};

\draw (2.5,2) node {$H_*(R_{\delta}(\hat L_{\alpha+\fprecision s}))$};
\draw (7.5,2) node {$H_*(R_{\delta}(\hat L_{\alpha+3\fprecision s}))$};
\draw (12.5,2) node {$H_*(R_{\delta}(\hat L_{\alpha+5\fprecision s}))$};

\draw[->] (2,0) -- (3,0);
\draw[->] (7,0) -- (8,0);

\draw[->] (4.5,2) -- (5.5,2);
\draw[->] (9.5,2) -- (10.5,2);

\draw[->] (.3,.7) -- (2.2,1.3);
\draw[->] (2.8,1.3) -- (4.7,.7);
\draw[->] (5.3,.7) -- (7.2,1.3);
\draw[->] (7.8,1.3) -- (9.7,.7);
\draw[->] (10.3,.7) -- (12.2,1.3);

%Upper plane
\draw (0,6) node {$H_*(R_{\delta'}(L_\alpha))$};
\draw (5,6) node {$H_*(R_{\delta'}(L_{\alpha+2\fprecision s}))$};
\draw (10,6) node{$H_*(R_{\delta'}(L_{\alpha+4\fprecision s}))$};

\draw (2.5,8) node {$H_*(R_{\delta'}(\hat L_{\alpha+\fprecision s}))$};
\draw (7.5,8) node {$H_*(R_{\delta'}(\hat L_{\alpha+3\fprecision s}))$};
\draw (12.5,8) node {$H_*(R_{\delta'}(\hat L_{\alpha+5\fprecision s}))$};

\draw[->] (2,6) -- (3,6);
\draw[->] (7,6) -- (8,6);

\draw[->] (4.5,8) -- (5.5,8);
\draw[->] (9.5,8) -- (10.5,8);

\draw[->] (.3,6.7) -- (2.2,7.3);
\draw[->] (2.8,7.3) -- (4.7,6.7);
\draw[->] (5.3,6.7) -- (7.2,7.3);
\draw[->] (7.8,7.3) -- (9.7,6.7);
\draw[->] (10.3,6.7) -- (12.2,7.3);

%Vertical relations
\draw[->] (0,.7) -- (0,5.3);
\draw (5,.7) -- (5,1.9);
\draw[->] (5,2.1) -- (5,5.3);
\draw (10,.7) -- (10,1.9);
\draw[->] (10,2.1) -- (10,5.3);
\draw (2.5,2.7) -- (2.5,5.9);
\draw[->] (2.5,6.1) -- (2.5,7.3);
\draw (7.5,2.7) -- (7.5,5.9);
\draw[->] (7.5,6.1) -- (7.5,7.3);
\draw[->] (12.5,2.7) -- (12.5,7.3);
\end{tikzpicture}
\end{center}
Thus the two persistence modules induced by filtrations of nested pairs $\{{R}_\delta(L_\alpha)\hookrightarrow{R}_{\delta'}(L_\alpha)\}$ and $\{{R}_\delta(\hat L_\alpha)\hookrightarrow{R}_{\delta'}(\hat L_\alpha)\}$ are $\fprecision s$-interleaved.  
%Thus the nested filtrations are interleaved and we obtain that $H_*({R}_\delta(L_\alpha)\hookrightarrow{R}_{\delta'}(L_\alpha))$ and $H_*({R}_\delta(\hat L_\alpha)\hookrightarrow{R}_{\delta'}(\hat L_\alpha))$ are $\fprecision s$-interleaved.
Combining this with the interleaving between $H_*({R}_\delta(L_\alpha)\hookrightarrow{R}_{\delta'}(L_\alpha))$ and $H_*(f)$, we obtain the stated results. 
\end{proof}

We note that, while this theorem assumes a setting where we can ensure theoretical guarantees, the algorithm can be applied in a more general setting still producing good results.

\paragraph*{Acknowledgments}

This work was supported by the ANR project TopData 13-BS01-008, the ERC project Gudhi 339025 and the NSF grants CCF-1064416, CCF-1116258, CCF-1319406 and CCF-1318595.

\newpage
\bibliography{arxiv}

\newpage
\appendix
\section{Relations between our functional sampling condition and classical noise models}\label{sec:relNoiseFunc}
\paragraph*{Bounded noise model.}
The standard ``bounded noise'' model assumes that all observed function values are within some $\delta$ distance away from the true function values: that is, $|\tilde f(p) - f(p) | \le \delta$ for all $p \in P$. 
Hence this bounded noise model simply corresponds to a $(1, 1, \delta)$-\funcsample{}.

\paragraph*{Gaussian noise model.}
Under the popular Gaussian noise model, for any $x\in \manifold$, its observed function value $\tilde f(x)$ is drawn from a normal distribution $\mathcal{N}(f(x), \sigma)$, that is a probability measure with density $g(y)=\frac{1}{\sigma \sqrt{\pi}} e^{-\frac{(y - f(x))^2}{\sigma^2}}$. 
We say that a point $q \in P$ is $a$-accurate if $| \tilde{f}(q) - f(q)| \le a$. 
For the Gaussian noise model, we will first bound the quantity $\mu(k, k')$ defined as the smallest value such that at least $k'$ out of the $k$ nearest neighbors of $p$ in $\extNN_P^k(p)$ are $\mu(k, k')$-accurate. 
We claim the following statement. 
%We claim the following statement, whose proof can be found in Appendix \ref{appendix:Gaussiannoise}.
\begin{claim}
With probability at least $1 - e^{-\frac{k-k'}{6}}$, $\mu(k, k') \le \sigma \sqrt{\ln \frac{2k}{k-k'}}$. 
\label{claim:gaussianaccuracy}
\end{claim}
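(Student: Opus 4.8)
The plan is to reduce the claim to a concentration inequality on the number of inaccurate neighbours. Fix a point $p \in P$ and write $q_1, \ldots, q_k$ for the points of $\extNN_P^k(p)$. Since the positions of the sample points are fixed and only the values $\tilde f(q_i)$ are random, the set $\extNN_P^k(p)$ is deterministic, so the noisy observations $\tilde f(q_1), \ldots, \tilde f(q_k)$ are mutually independent. Set $a = \sigma\sqrt{\ln\frac{2k}{k-k'}}$; it then suffices to show that, with probability at least $1 - e^{-(k-k')/6}$, at least $k'$ of the $q_i$ are $a$-accurate, since this is exactly the statement $\mu(k,k') \le a$.

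First I would bound the probability that a single neighbour is inaccurate. Writing $Z = \tilde f(q_i) - f(q_i)$, the variable $Z/\sigma$ has density $\frac{1}{\sqrt\pi} e^{-w^2}$, so
$$\Pr\left[\, |\tilde f(q_i) - f(q_i)| > a \,\right] = \frac{2}{\sqrt\pi}\int_{a/\sigma}^{\infty} e^{-w^2}\,\dd w \le e^{-a^2/\sigma^2},$$
where the inequality is the elementary Gaussian tail estimate $\frac{2}{\sqrt\pi}\int_t^\infty e^{-w^2}\,\dd w \le e^{-t^2}$ (the two sides agree at $t=0$ and their difference is checked to stay nonnegative by differentiation). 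With the chosen value of $a$ this gives a per-point failure probability $q := \Pr[q_i \text{ not } a\text{-accurate}] \le e^{-\ln(2k/(k-k'))} = \frac{k-k'}{2k}$.

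Next I would control the count $S = \sum_{i=1}^k \mathbf{1}[q_i \text{ not } a\text{-accurate}]$. Because the observations are independent and each indicator is Bernoulli with parameter $q \le \frac{k-k'}{2k}$, the variable $S$ is stochastically dominated by a Binomial$\left(k, \frac{k-k'}{2k}\right)$ variable $S'$ of mean $\mu := \frac{k-k'}{2}$. The event $\mu(k,k') > a$ means that fewer than $k'$ neighbours are accurate, i.e. $S > k-k'$, and is contained in $\{S \ge k-k'\}$. Applying the multiplicative Chernoff bound $\Pr[S' \ge (1+\delta)\mu] \le e^{-\delta^2\mu/(2+\delta)}$ to the dominating variable with $\delta = 1$ (so $(1+\delta)\mu = k - k'$) yields
$$\Pr[S \ge k - k'] \le \Pr[S' \ge k-k'] \le e^{-\mu/3} = e^{-(k-k')/6},$$
which is the desired bound.

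The delicate points here are bookkeeping rather than conceptual. The main thing to get right is that the Chernoff exponent produces exactly $\tfrac{1}{6}$: this forces the choice $\delta = 1$, which in turn dictates that the per-point tail must be driven below $\frac{k-k'}{2k}$ — precisely the threshold achieved by $a = \sigma\sqrt{\ln\frac{2k}{k-k'}}$ together with the bound $e^{-a^2/\sigma^2}$. I would also be careful to invoke independence correctly (it rests on the neighbour set being fixed before the values are drawn) and to justify the stochastic-domination step, which lets us replace the possibly smaller true mean $kq$ by $\mu = \frac{k-k'}{2}$ without weakening the estimate.
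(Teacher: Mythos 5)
Your proof is correct and follows essentially the same route as the paper's: a pointwise Gaussian tail estimate driving the per-neighbour failure probability below $\frac{k-k'}{2k}$, followed by the multiplicative Chernoff bound $e^{-\delta^2\mu/(2+\delta)}$ at threshold $k-k'$ (i.e.\ $\delta=1$), giving the exponent $\frac{k-k'}{6}$. The only differences are bookkeeping: you absorb the fact that the true mean may be smaller than $\frac{k-k'}{2}$ via stochastic domination by a Binomial$\left(k,\frac{k-k'}{2k}\right)$ variable, whereas the paper instead chooses the Chernoff parameter $\rho$ adaptively so that $(1+\rho)E(Z)\le \delta k$; both are valid.
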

\begin{proof}
%First note that $\frac{1}{\sigma\sqrt{\pi}}\int_b^{+\infty}e^{-\frac{t^2}{\sigma^2}} dt\leq\frac{1}{\sqrt{2\pi}}e^{-\frac{b^2}{2}}$ for $b\geq1$. 
First note that for $\frac{b}{\sigma} \geq 1$, we have that: 
$$\int_b^{+\infty}e^{-\frac{t^2}{\sigma^2}} dt \le \int_b^{+\infty} \frac{t}{\sigma} e^{-\frac{t^2}{\sigma^2}} dt = \frac{1}{\sigma} \int_b^{+\infty} t e^{-\frac{t^2}{\sigma^2}} dt = - \frac{\sigma}{2} e^{-\frac{t^2}{\sigma^2}} \big|_{b}^{\infty} = \frac{\sigma}{2} e^{-\frac{b^2}{\sigma^2}}. 
$$
Now we introduce $I(a)= \frac{1}{\sigma \sqrt{\pi}} \int_{-a}^a e^{-\frac{x^2}{\sigma^2}} dx$. 
Since $\frac{1}{\sigma\sqrt{\pi}} \int_{-\infty}^{\infty} e^{-\frac{x^2}{\sigma^2}} dx= 1$, we thus obtain that for $a\geq\sigma$:
\begin{equation}
1 - \frac{1}{\sqrt{\pi}}e^{-(\frac{a}{\sigma})^2} < 1 - e^{-(\frac{a}{\sigma})^2} \le I(a) ~(= 1 - \frac{2}{\sigma \sqrt{\pi}} \int_a^{+\infty}e^{-\frac{x^2}{\sigma^2}} dx).  
\label{eqn:boundGaussian}
\end{equation}
%(Specifically, we have applied $\sqrt{1-y} \ge 1 - y$ for $|y| \le 1$, and $\sqrt{1 - y} \le 1 - \frac{y}{2}$ for $|y| \le 1$. )

Now set $\delta = \frac{k-k'}{k}\leq\frac{1}{2}$ and $s = \sigma \sqrt{\ln \frac{2k}{k-k'}}\geq\sigma$. % for some constant $c$ that will be specified later. 
Let $p_1, \ldots, p_k$ denote the $k$ nearest neighbors of some point, say $p_1$. 
For each $p_i$, let $Z_i = 1$ if $p_i$ is {\bf not} $s$-\mygood{}, and $Z_i = 0$ otherwise. 
Hence $Z = \sum_{i=1}^k Z_i$ denotes the total number of points from these $k$ nearest neighbors that are not $s$-\mygood{}. 
By Equation (\ref{eqn:boundGaussian}), we know that 
$${\mathrm{Prob}} [ Z_i = 1 ] = 1 - I(s) \le e^{-(\frac{s}{\sigma})^2}. $$
It then follows that the expected value of $Z$ satisfies:
$$E(Z) \le ke^{-(\frac{s}{\sigma})^2} = \frac{\delta k}{2}. $$
Now set $\rho = \frac{\delta k}{2E(Z)}$. Since $E(Z) \le \frac{\delta k}{2}$, it follows that $(1+\rho)E(Z) \le \delta k$. 
Using Chernoff's bound~\cite{AV79}, we obtain
\begin{align*}
\mathrm{Prob}~[Z \ge k - k'] &= \mathrm{Prob}~[Z \ge \delta k] \le \mathrm{Prob}~[Z \ge (1+\rho) E(Z)] \\
& \le e^{-\frac{\rho^2 E(Z)}{2+\rho}} = e^{-\frac{\delta^2k^2}{4 E(Z)}\cdot \frac{1}{2+\frac{\delta k}{2E(Z)}}} \le e^{-\frac{\delta^2 k^2}{6\delta k}}= e^{-\frac{k-k'}{6}}. 
\end{align*}
%We remark that as we increase the value of $s$, the probability $\mathrm{Prob}~[Z \ge k - k']$ (i.e, there are at least $k-k'$ points {\bf not} $s$-accurate) descreases. 
The claim then follows, that is, with probability at least $1- e^{-\frac{k-k'}{6}}$, at least $k'$ number of points out of any $k$ points are $s = \sigma \sqrt{\ln \frac{2k}{k-k'}}\geq\sigma$-accurate.
\end{proof}

Next, we convert the value $\mu(k, k')$ to the value $\Delta$ as in Equation (\ref{eqn:funcnoisemodel}). 
In particular, being a $(k,  k', \Delta)$-\funcsample{} means that for any $p \in P$, there are at least $k'$ samples $q$ from $\extNN_P^k(p)$ such that $|\tilde{f}(q) - f(p) | \le \Delta$. 
Now assume that the furthest geodesic distance from any point in $\extNN_P^k(p)$ to $p$ is $\lambda$. 
Then since $f$ is a $c$-Lipschitz function, we have $\max_{q\in \extNN_P^k(p)} |f(q) - f(p)| \le c \lambda$.

We note that Claim \ref{claim:gaussianaccuracy} is valid for any point $p$ of $P$.
Using the union bound, the relation holds for all points in $P$ with probability at least $1 - ne^{-\frac{k-k'}{6}}$. 
Note that if $k - k' \ge 12 \ln n$, then this probability is at least $1 - \frac{1}{n}$, that is, the relation holds with high probability.  
Thus, with probability at least $1 - ne^{-\frac{k-k'}{6}}$, the input function $\tilde{f} :P \rightarrow \R$ under Gaussian noise model is a $(k, k', \Delta)$-\funcsample{} with $\Delta = \sigma \sqrt{\ln \frac{2k}{k-k'}} + c \lambda$. 

%Note that the $k$-nearest neighbors are considered in the ambient space $\X$. 
%We then also have to be careful to choose $k$ sufficiently small to not have too many points that will be far in the intrinsic metric amongst the $k$-nearest neighbors of a point $p$.

%\ForAuthors{Discussion about other justifications of the noise model}

\section{Relations between our geometric sampling condition and the Wasserstein sampling condition}\label{sec:relNoiseGeom}

The Wasserstein sampling condition assumes that the empirical measure $\mu = \mu_P$ for $P$ is close to the uniform measure $\mu_\manifold$ on $\manifold$ under the Wasserstein distance. 
Let $\manifold$ be a $d'$-Riemannian manifold whose curvature is bounded from above by $c_\manifold$ and has a positive strong convexity radius $\varrho(\manifold)$.
Let $V_\manifold$ denote the volume of $\manifold$.  Writing, $\Gamma$ the Gamma function, let us set ${\cal C}_{d'}^{c_\manifold}$ to be the following constant: 
\begin{equation}\label{eqn:Cdm}
{\cal C}_{d'}^{c_\manifold}=\frac{4}{d'}\Gamma\left(\frac{1}{2}\right)^{d'}\Gamma\left(\frac{d'}{2}\right)^{-1}\left(\frac{\sqrt{c_\manifold}}{\pi}\right)^{d'-1}, 
\end{equation}
\begin{theorem}\label{tNoiseModelWas}
Let $P$ be a set of points whose empirical measure $\mu$ satisfies $W_2(\mu,\mu_\manifold)\leq\sigma$, where $\mu_\manifold$ is the uniform measure on $\manifold$. Then, for any $m\leq\frac{{\cal C}^{c_\manifold}_{d'}\left(\frac{\pi}{c_\manifold}\right)^{d'}}{V_\manifold}$, $P$ is an $(\eps, r)$-sample under our sampling condition for 
%satisfies our noise model for 
%$m\leq\frac{{\cal C}^{c_\manifold}_{d'}\left(\frac{\pi}{c_\manifold}\right)^{d'}}{V_\manifold}$,
%\begin{equation}\label{cwamin}
$$\eps\geq\frac{1}{\sqrt{1+\frac{2}{d'}}}\left(\frac{m V_\manifold}{{\cal C}_{d'}^{c_\manifold}}\right)^\frac{1}{d'}+\frac{\sigma}{\sqrt{m}},~~\mbox{ and }~~r=\infty.$$
%\end{equation}
\end{theorem}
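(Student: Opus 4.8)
The plan is to verify the two defining conditions of an $(\eps, r)$-sample (Equations~(\ref{dense}) and~(\ref{sparse})) for the empirical measure $\mu = \mu_P$, using the Wasserstein proximity $W_2(\mu, \mu_\manifold) \le \sigma$ as the bridge between $\mu_P$ and the uniform measure $\mu_\manifold$. The key tool is Theorem~\ref{tStability}, which gives $\norm{\dm - \dM}_\infty \le \frac{1}{\sqrt{m}} W_2(\mu, \mu_\manifold) \le \frac{\sigma}{\sqrt{m}}$. So the whole argument reduces to controlling $\dM(x) = d_{\mu_\manifold, m}(x)$, the distance to the \emph{uniform measure on $\manifold$}, and then paying an additive $\frac{\sigma}{\sqrt{m}}$ penalty to pass to $\mu_P$.

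First I would bound $\dM(x)$ for $x \in \manifold$. By definition $\dM(x)^2 = \frac{1}{m}\int_0^m \delta_l(x)^2\,dl$, where $\delta_l(x)$ is the smallest radius $r$ with $\mu_\manifold(\bM{x}{r}) \ge l$ (working in the geodesic metric, using the strong convexity radius to keep balls well-behaved). Since $\mu_\manifold$ is uniform with total mass normalized to $1$ and volume $V_\manifold$, a geodesic ball of radius $\rho$ has mass roughly $\mathrm{Vol}(\bM{x}{\rho})/V_\manifold$. The curvature bound $c_\manifold$ gives, via a Bishop--Günther-type volume comparison, a lower bound on $\mathrm{Vol}(\bM{x}{\rho})$ of the form (const)$\cdot \rho^{d'}$; this is exactly where the constant ${\cal C}_{d'}^{c_\manifold}$ in Equation~(\ref{eqn:Cdm}) comes from, packaging the dimension-dependent ball-volume factor $\Gamma(1/2)^{d'}\Gamma(d'/2)^{-1}$ together with the curvature correction $(\sqrt{c_\manifold}/\pi)^{d'-1}$. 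Inverting this gives $\delta_l(x) \le \bigl(l\,V_\manifold / {\cal C}_{d'}^{c_\manifold}\bigr)^{1/d'}$, and substituting into the integral $\frac{1}{m}\int_0^m \delta_l(x)^2\,dl$ produces the factor $\frac{1}{1 + 2/d'}$ (from $\int_0^m l^{2/d'}\,dl = \frac{m^{1+2/d'}}{1+2/d'}$), yielding $\dM(x) \le \frac{1}{\sqrt{1 + 2/d'}}\bigl(m V_\manifold / {\cal C}_{d'}^{c_\manifold}\bigr)^{1/d'}$. Adding the Wasserstein penalty gives condition~(\ref{dense}) with the stated $\eps$. The mass constraint $m \le {\cal C}_{d'}^{c_\manifold}(\pi/c_\manifold)^{d'}/V_\manifold$ is precisely what guarantees the relevant radii stay below the injectivity/convexity radius so the volume comparison is valid.

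Next I would establish condition~(\ref{sparse}), and here the point is that for the uniform measure on the manifold one can take $r = \infty$. For \emph{any} $x \in \R^n$, let $\pi(x)$ be its projection to $\manifold$ and $d(x, \manifold) = \norm{x - \pi(x)}$. Because all the mass of $\mu_\manifold$ sits on $\manifold$, every ball realizing $\delta_l(x)$ must reach the manifold, so intuitively $\dM(x) \ge d(x, \manifold)$, and in fact $\dM(x)$ cannot exceed $d(x, \manifold)$ by more than the spread of the nearest mass-$m$ portion of $\manifold$, which is again controlled by the $\eps$-type bound above. Translating through the Wasserstein stability, one checks $d(x, \manifold) \le \dm(x) + \eps$ holds for all $x$ with no upper threshold on $\dm(x)$, giving $r = \infty$.

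The main obstacle I expect is the volume-comparison step: getting a clean, fully rigorous lower bound $\mathrm{Vol}(\bM{x}{\rho}) \ge \frac{1}{V_\manifold}{\cal C}_{d'}^{c_\manifold}\rho^{d'}$ uniformly over $x \in \manifold$ and over the full range of radii $\rho$ up to $\delta_m(x)$ requires invoking Riemannian volume comparison (Günther's inequality for an upper curvature bound) and carefully tracking the constant so that it matches ${\cal C}_{d'}^{c_\manifold}$ exactly, including verifying that the mass constraint on $m$ keeps every relevant ball within the strongly convex regime where the comparison applies. The passage from $\mu_\manifold$ to $\mu_P$ via Theorem~\ref{tStability} is routine once the uniform-measure estimates are in hand, so the geometric volume estimate is the crux.
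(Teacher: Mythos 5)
Your proposal is correct and follows essentially the same route as the paper's proof: a G\"unther--Bishop volume comparison yielding the explicit lower bound ${\cal V}(x,a)\geq{\cal C}_{d'}^{c_\manifold}a^{d'}$, inversion to bound $\delta_l(x)$ and integration to get the $\frac{1}{\sqrt{1+2/d'}}$ factor for condition~(\ref{dense}), and for condition~(\ref{sparse}) the observation that $d(x,\manifold)\leq\dM(x)$ since $\manifold$ supports $\mu_\manifold$, combined with Theorem~\ref{tStability} to pay the $\frac{\sigma}{\sqrt{m}}\leq\eps$ penalty, which holds with no threshold on $\dm(x)$ and hence gives $r=\infty$. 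You also correctly identify the role of the mass constraint on $m$ (keeping the radii in the regime where the volume comparison is valid), so the crux you flag is exactly the technical lemma the paper proves.
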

\begin{proof}
Fixing a point $x\in\manifold$, we can lower bound the volume of the Riemannian ball of radius $a$, centered at $x$, using the Günther-Bishop Theorem:

\begin{theorem}[G\"unther-Bishop]\index{G\"unther-Bishop theorem}
Assuming that the sectional curvature of a manifold $\manifold$ is always less than $c_\manifold$ and $a$ is less than the strong convexity radius of $\manifold$, then for any point $x\in \manifold$, the volume $\cV(x,a)$ of the geodesic ball centred on $x$ and of radius $a$ is greater than $V_{d'}^{c_\manifold}(a)$ where $d'$ is the intrinsic dimension of $\manifold$ and $V_{d'}^{c_\manifold}(a)$ is the volume of the Riemannian ball of radius $a$ on a surface with constant curvature $c_\manifold$.
\end{theorem}

We explicitly bound the value of $\cV(x,a),$ with the following technical lemma:

\begin{lemma}\label{lem:volumeRiemBall}
Let $\manifold$ be a Riemannian manifold with curvature upper bounded by $c_\manifold$, then for any $x\in\manifold$ and $a\leq\min(\varrho(\manifold);\frac{\pi}{\sqrt{c_\manifold}})$, the volume ${\cal V}(x,a)$ of the geodesic ball centred at $x$ and of radius $a$ verifies:
$${\cal V}(x,a)\geq{\cal C}_{d'}^{c_\manifold}a^{d'}$$
where ${\cal C}_{d'}^{c_\manifold}$ is a constant independent of $x$ and $a$. 
\end{lemma}
\begin{proof}
Given $a\leq\min(\varrho(\manifold),\frac{\pi}{\sqrt{c_\manifold}})$, we want to bound the volume $V_{d'}^{c_\manifold}(a)$.
Consider the sphere of dimension $d'$ and curvature $c_\manifold$.
The surface $S_{c_\manifold}^{d'-1}$ of the border of a ball of radius $a\leq\frac{\pi}{\sqrt{c_\manifold}}$ on this sphere is given by~\cite{vsgbrmG}:
$$S^{d'-1}_{c_\manifold}(a)=2\Gamma\left(\frac{1}{2}\right)^{d'}\Gamma\left(\frac{d'}{2}\right)^{-1}c_\manifold^{-\frac{1}{2}(d'-1)}\sin^{d'-1}(c_\manifold a)$$
We can bound the value of $V_{d'}^{c_\manifold}(a)$ :
\begin{align*}V_{d'}^{c_\manifold}(a)&=\int_0^a S^{d'-1}(l)dl\\
&=\int_0^a 2\Gamma\left(\frac{1}{2}\right)^{d'}\Gamma\left(\frac{d'}{2}\right)^{-1}c_\manifold^{-\frac{1}{2}(d'-1)}\sin^{d'-1}(c_\manifold l)dl\\
&\geq 2\Gamma\left(\frac{1}{2}\right)^{d'}\Gamma\left(\frac{d'}{2}\right)^{-1}c_\manifold^{-\frac{1}{2}(d'-1)}2\int_0^{\frac{a}{2}} \left(\frac{2c_\manifold l}{\pi}\right)^{d'-1}dl\\
&=4\Gamma\left(\frac{1}{2}\right)^{d'}\Gamma\left(\frac{d'}{2}\right)^{-1}c_\manifold^{-\frac{1}{2}(d'-1)}\frac{\pi}{2c_\manifold}\int_0^{\frac{c_\manifold a}{\pi}}u^{d'-1} du
\end{align*}
Writing
$${\cal C}_{d'}^{c_\manifold}=\frac{4}{d'}\Gamma\left(\frac{1}{2}\right)^{d'}\Gamma\left(\frac{d'}{2}\right)^{-1}\left(\frac{\sqrt{c_\manifold}}{\pi}\right)^{d'-1},$$
and using the Günther-Bishop Theorem, we have for any $a\leq\min(\varrho(\manifold);\frac{\pi}{\sqrt{c_\manifold}})$ and any $x\in\manifold$,
$${\cal V}(x,a)\geq{\cal C}_{d'}^{c_\manifold}a^{d'}.$$
\end{proof}

We next prove that the empirical measure $\mu$ of $P$ satisfies the two conditions in Eqns (\ref{dense}) and (\ref{sparse}) for the value of $\eps$ and $r$ specified in Theorem \ref{tNoiseModelWas}.  
Specifically, recall that $\mu_\manifold$ be the uniform measure on $\manifold$ and $\mu$ is a measure such that $W_2(\mu,\mu_\manifold)\leq\sigma$.
Now  consider a point $x\in\manifold$ and the Euclidean ball $B(x,a)$ centred in $x$ and of radius $a$. 
%Let $\mu_\manifold$ be the uniform measure on $\manifold$ and $\mu$ a measure such that $W_2(\mu,\mu_\manifold)\leq\sigma$.
%Using Theorem~\ref{tStability}, we get $\norm{\dm-\dM}_\infty\leq\frac{\sigma}{\sqrt{m}}$.
%Let us consider a point $x\in\manifold$ and the ball $B(x,a)$ centred in $x$ and of radius $a$.
By definition of $\mu_\manifold$, for any $a\leq\frac{\pi}{c_\manifold}$:
$$\mu_\manifold(B(x,a))=\frac{{\cal V}ol(x,a)}{V_\manifold}\geq\frac{{\cal C}_{d'}^{c_\manifold}a^{d'}}{V_\manifold}$$
By the definition of the pseudo-distance $\delta_m(x)$, we can then bound it, for any $m\leq\frac{{\cal C}_{d'}^{c_\manifold}\left(\frac{\pi}{c_\manifold}\right)^{d'}}{V_\manifold}$, as follows: 
$$\delta_m(x)\leq\left(\frac{m\ V_\manifold}{{\cal C}_{d'}^{c_\manifold}}\right)^{\frac{1}{d'}}.$$
This in turn produces an upper bound on the distance to the measure $\mu_\manifold$: 
\begin{align*}
\dM(x)&\leq\frac{1}{\sqrt{m}}\sqrt{\int_{0}^{m}\left(\frac{V_\manifold}{{\cal C}_{d'}^{c_\manifold}}l\right)^{\frac{2}{d'}} dl} %\\
%&
~\leq\frac{1}{\sqrt{1+\frac{2}{d'}}}\left(\frac{V_\manifold m}{{\cal C}_{d'}^{c_\manifold}}\right)^\frac{1}{d'}
\end{align*}
By Theorem~\ref{tStability}, it then follows that for any $x \in \manifold$: 
$$\dm(x)\leq\frac{1}{\sqrt{1+\frac{2}{d'}}}\left(\frac{V_\manifold m}{{\cal C}_{d'}^{c_\manifold}}\right)^\frac{1}{d'}+\frac{\sigma}{\sqrt{m}}$$
The first part of our sampling condition (i.e., Eqn (\ref{dense})) is hence verified for any $\epsilon\geq\frac{1}{\sqrt{1+\frac{2}{d'}}}\left(\frac{V_\manifold m}{{\cal C}_{d'}^{c_\manifold}}\right)^\frac{1}{d'}+\frac{\sigma}{\sqrt{m}}$. 
Moreover, for any $x\in\R^d$, $\dM(x)\geq d(x,\manifold)$ because $\manifold$ is the support of $\mu_\manifold$. 
Thus:
$$d(x,\manifold)\leq\dM(x)\leq\dm(x)+\frac{\sigma}{\sqrt{m}}\leq\dm(x)+\epsilon$$
holds with no constraints on the value of $\dm(x)$. That is, for $r=\infty$, $\mu$ verifies the second part of our sampling condition (Eqn (\ref{sparse}).
This completes the proof of Theorem~\ref{tNoiseModelWas}.
\end{proof}

\section{Experimental illustration for functional noise}\label{sec:illustration}
Here, we present results obtained by applying our methods to cases where there is only functional noise.
Our goals are to demonstrate the denoising power of both the $k$-median and the disparity-based approaches and to illustrate the differences between the practical performances of the $k$-median and disparity-based denoising methods. 
We compare our denoising results with the popular k-NN algorithm, which simply sets the function at point $p$ to be the mean of the observed function values of its $k$ nearest neighbours. 
Note that, when $k' = k$, our disparity-based method is equivalent to the k-NN algorithm. 

Going back to the bone example from section~\ref{ssec:functional}, we apply our algorithm to the $10$-nearest neighbours and $k'=8$.
Using $100$ sampling of the Bone with $1000$ points each, we compute the average maximal error made by the various methods.
The disparity-based method commits a maximal error of $10\%$ on average, while the median-based method recovers the values with an error of $2\%$ and the simple $k$-NN regression gives a maximal error of $16\%$, with most error concentrated around the neck region, see Figure \ref{fig:bonedenoise}.
These results translate into the persistence diagrams that are more robust with the use of the disparity (blue squares) or the $k$-median (red diamond) instead of the $k$-NN regression (green circles), see Figure \ref{fig:bonediagram}.
Both methods retrieve the 1-dimensional topological feature.
The $k$-NN regression keeps some prominent $0$-dimensional feature through the diagram instead of having a unique component, result obtained by using the disparity or the median.
The persistence diagram of the original bone is given in red and contains only one feature.

\begin{figure}[htbp]
\begin{center}
\fbox{
\begin{tabular}{ccc}
\includegraphics[height=2.5cm]{Bone-nonoise} &
\includegraphics[height=2.5cm]{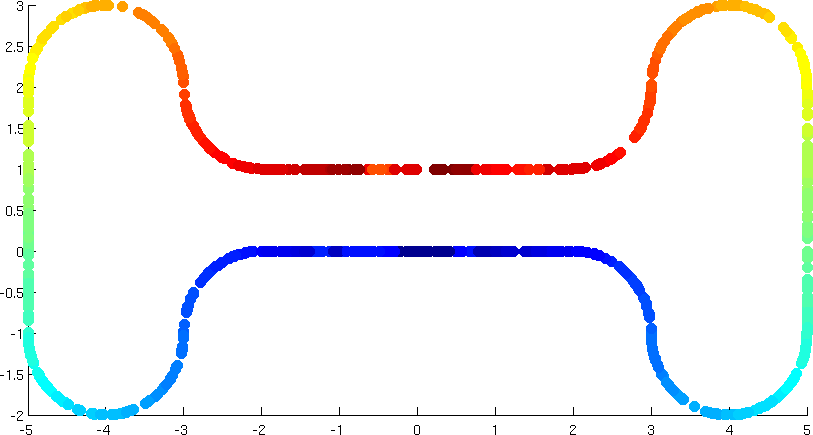} \\
 {Bone without noise} & {Bone after projection and $k$-NN} \\
\includegraphics[height=2.5cm]{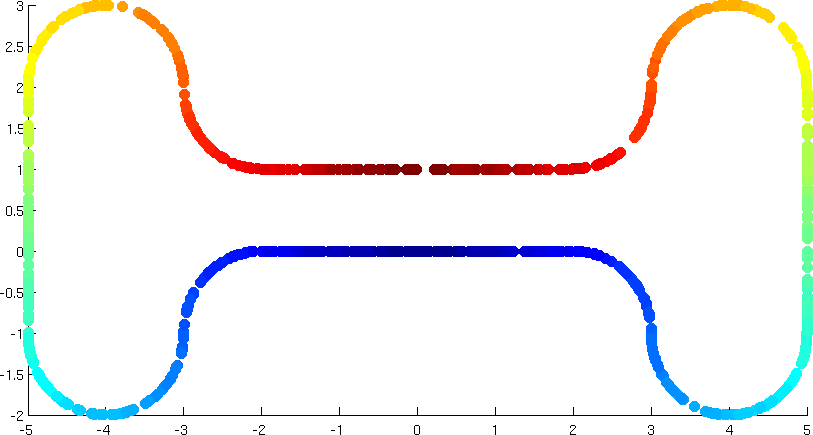} &
\includegraphics[height=2.5cm]{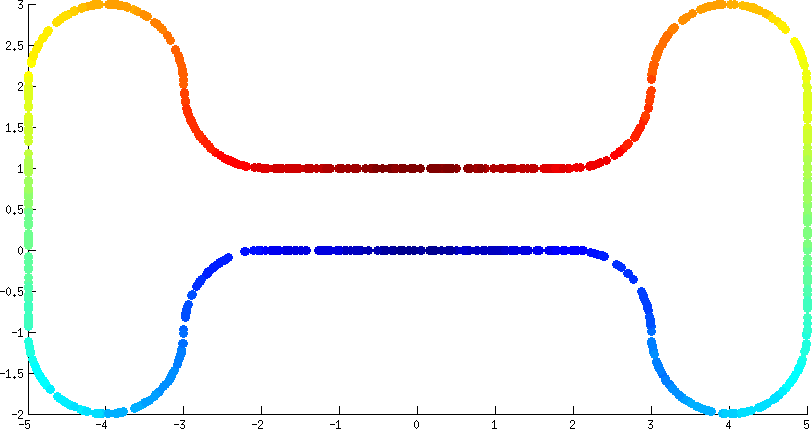}\\
{Bone after projection and disparity} & {Bone after projection and median}
\end{tabular}
}
\end{center}
\caption{Bone example after applying Gaussian perturbation, magical filter and a regression
\label{fig:bonedenoise}}
\end{figure}

\begin{figure}[htpb]
\begin{center}
\fbox{
\includegraphics[height=5cm]{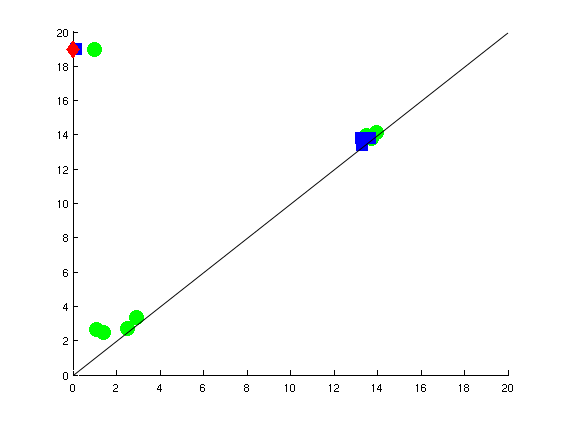}
}
\end{center}
\vspace*{-0.1in}
\caption{Persistence diagrams in dimension 0 for the Bone example: red, green and blue points constitute the $0$-th persistence diagram produced from clean (noise-less) data, from the denoised data by using $k$-NN regression, and from the denoised data by using disparity method, respectively.
\label{fig:bonediagram}}
\end{figure}

As indicated by the theoretical results, the disparity-based method improves the classic $k$-NN regression but the median-based algorithm performs slightly better.
The disparity however displays a better empirical behaviour when the Lipschitz condition on the input scalar field is relaxed, and/or the amount of noise becomes large. 
Additional illustrations can be found in the appendix.

\paragraph*{Image denoising}
We use a practical application: image denoising.
We take the greyscale image Lena as the target scalar field $f$. In Figure \ref{fig:Lena}, we use two ways to generate a noisy input scalar field $\tilde{f}$. 
The first type of noisy input is generated by adding uniform random noise as follows: with probability $p$, each pixel will receive a uniformly distributed random value in range $[0, 255]$ as its function value; otherwise, it is unchanged. 
Results under random noises are in the second and third rows of Figure \ref{fig:Lena}. 
We also consider what we call \emph{outlier noise}: with probability $p$, each pixel will be a outlier meaning that its function value is a fixed constant, which is set to be 200 in our experiments. This outlier noise is to simulate the aberrant function values caused by say, a broken sensor. The denoising results under the outlier-noise are shown in the last row of Figure \ref{fig:Lena}. 

First, we note that kNN approach tends to smooth out function values. In addition to the blurring artifact, its denoising capability is limited when the amount of noise is high (where imprecise values become dominant). 
As expected, both k-median and disparity based methods outperform the kNN approach. Indeed, they demonstrate robust recovery of the input image even with $50\%$ amount of random noise are added. 

\begin{figure}[htbp]
\begin{center}
\begin{tabular}{cc}
\includegraphics[height=3cm]{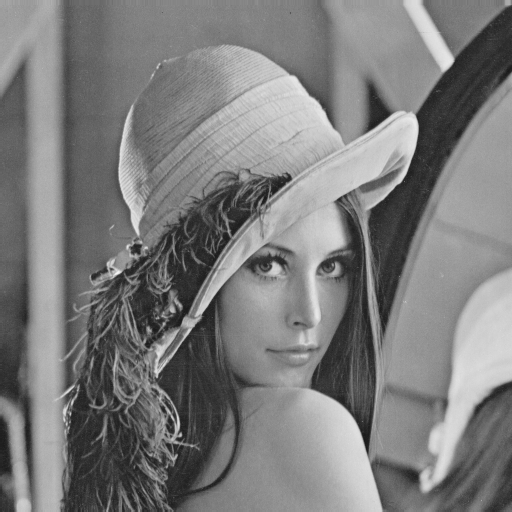} & \includegraphics[height=3cm]{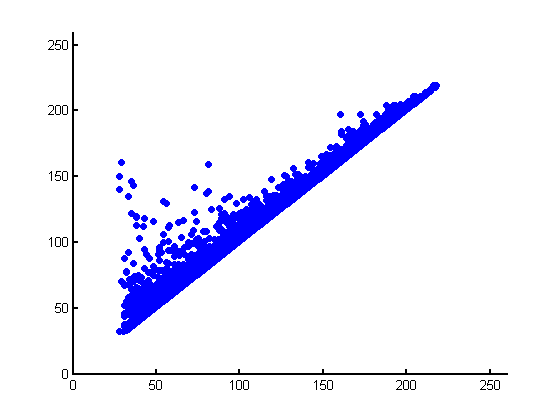} \\
{\small Original Lena} & {\small The $0$-th persistence diagram}
\end{tabular} 
\fbox{
\begin{tabular}{cccc}
\includegraphics[height=3cm]{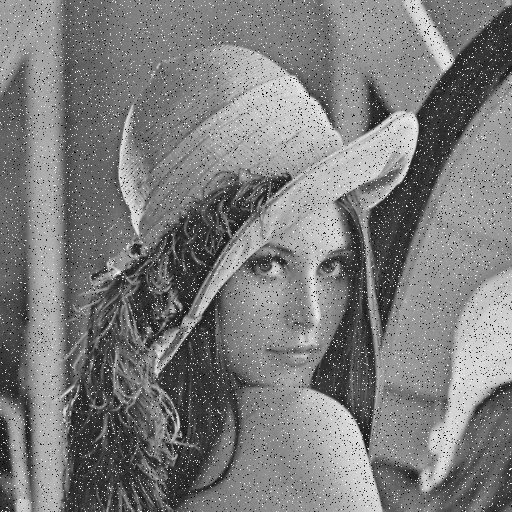} &
\includegraphics[height=3cm]{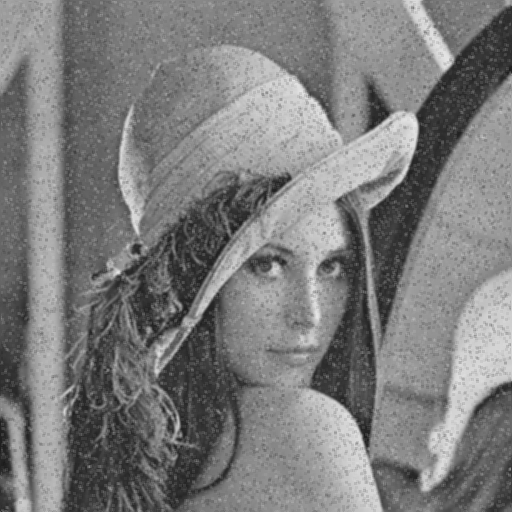} &
\includegraphics[height=3cm]{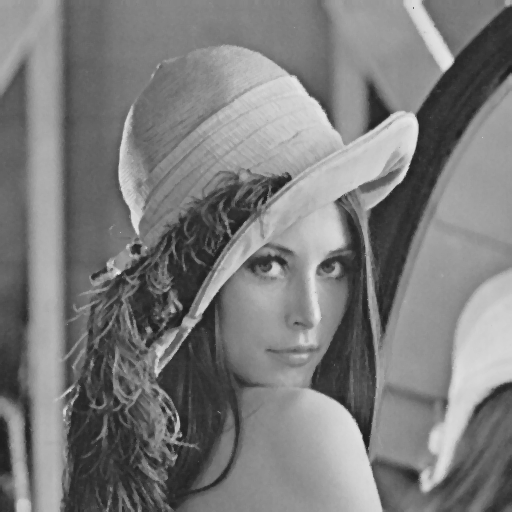} &
\includegraphics[height=3cm]{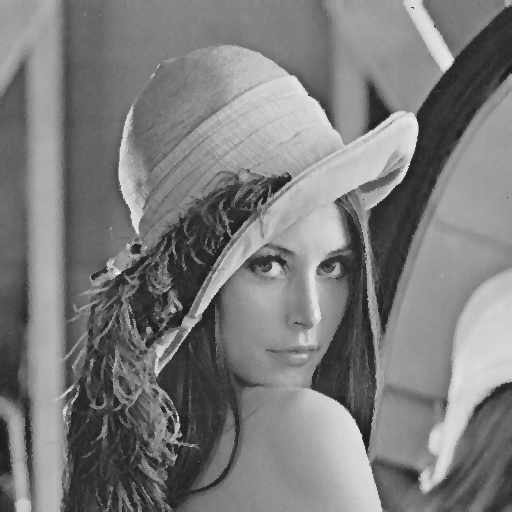} \\
%{\small Original Lena} & 
{\small $10\%$ random noise} & {\small kNN: $k = 9$} & {\small k-median, $k = 9$} & {\small disparity, $k=9, k'= 5$ }\\
\includegraphics[height=3cm]{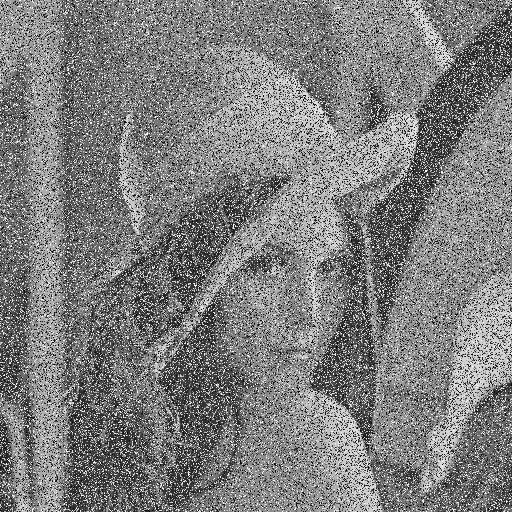} &
\includegraphics[height=3cm]{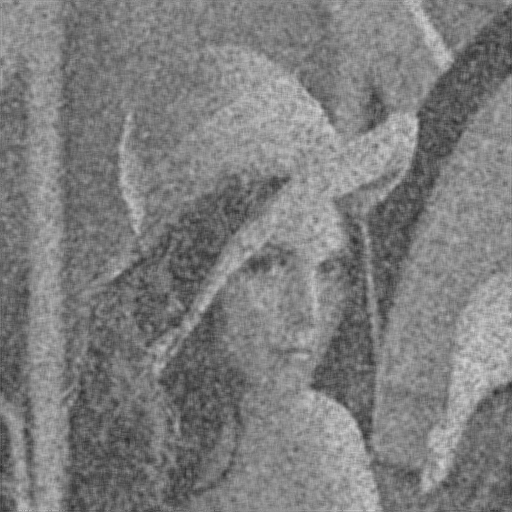} &
\includegraphics[height=3cm]{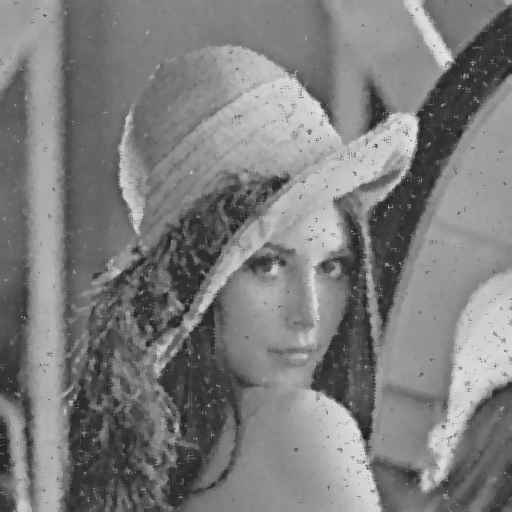} &
\includegraphics[height=3cm]{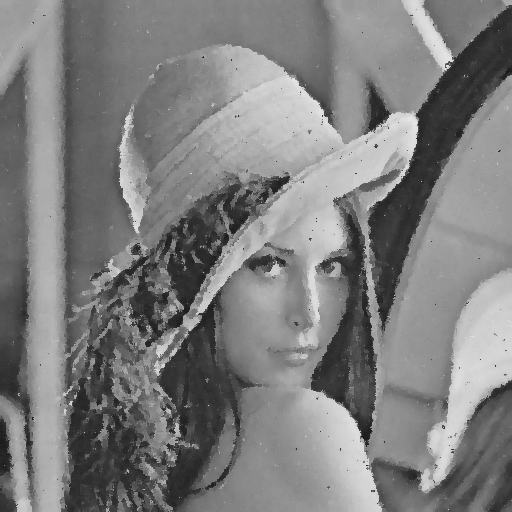} \\
 {\small $50\%$ random noise} & {\small kNN: $k = 25$} & {\small k-median, $k = 25$} & {\small disparity, $k=25, k'= 13$ }
\end{tabular}
}
\fbox{
\begin{tabular}{cccc}
\includegraphics[height=3cm]{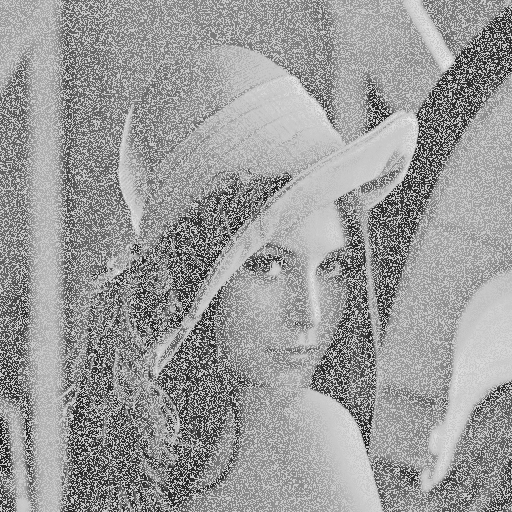} &
\includegraphics[height=3cm]{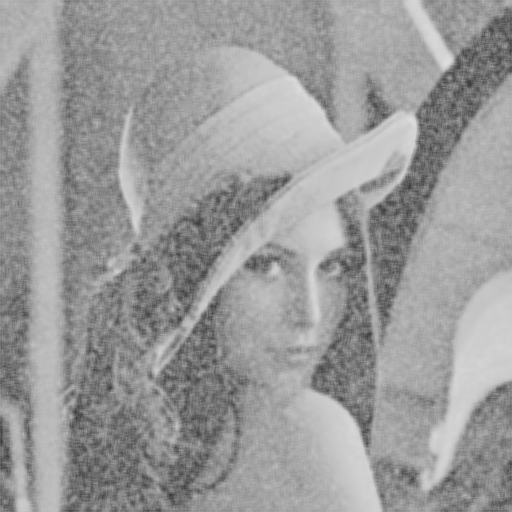} &
\includegraphics[height=3cm]{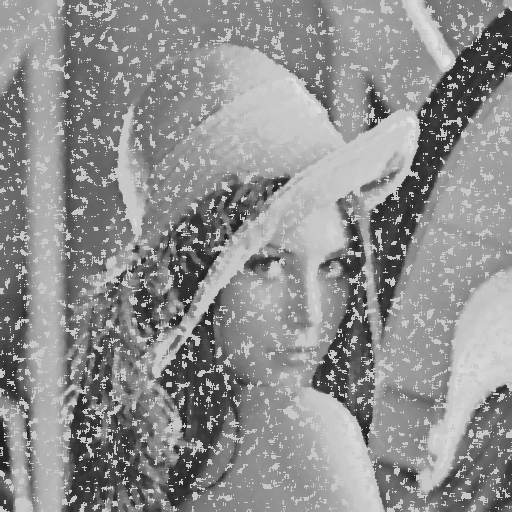} &
\includegraphics[height=3cm]{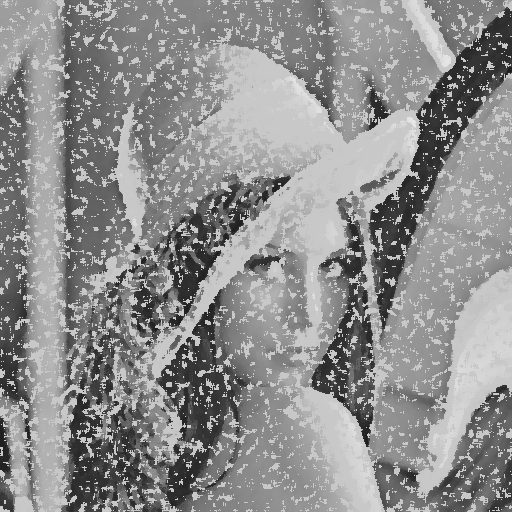} \\
\includegraphics[height=2.26cm]{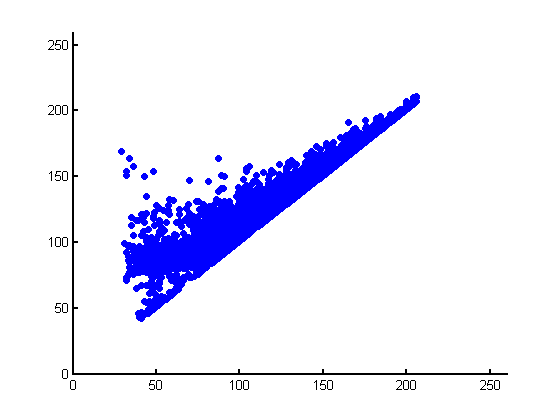} &
\includegraphics[height=2.26cm]{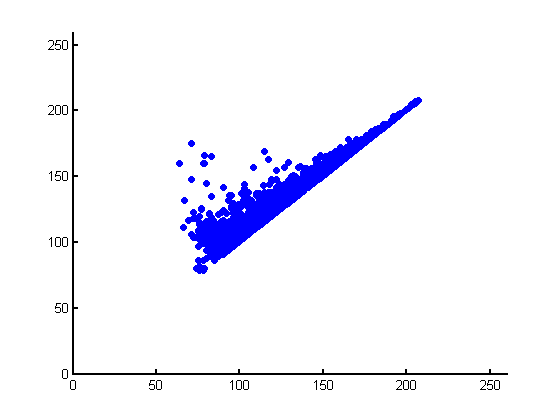} &
\includegraphics[height=2.26cm]{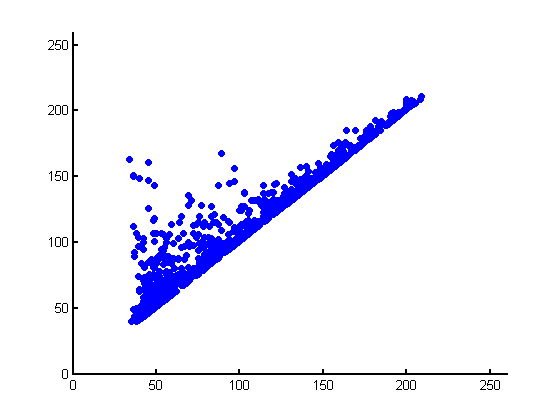} &
\includegraphics[height=2.26cm]{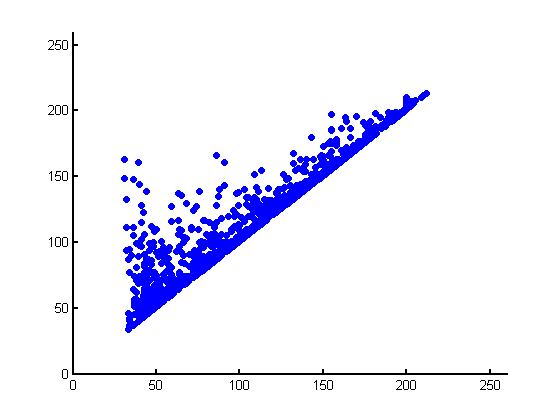} \\
 {\small $40\%$ outlier noise} & {\small kNN: $k = 25$} & {\small k-median, $k = 25$} & {\small disparity, $k=25, k'= 13$ }
\end{tabular}
}
 
\end{center}
\vspace*{-0.15in}
\caption{The denoised images after kNN, k-median, and disparity denoising approaches. 
The first row shows the original image and its $0$-th persistence diagram.
Second and third rows are under random noise of input, while fourth row are under outlier-noise as described in the text. The fifth row provides the $0$-th persistence diagrams 
on images
in the fourth row, which are 
 computed by the scalar field analysis algorithm from~\cite{sfapcdCGOS} . }
\label{fig:Lena}
\end{figure}

While both k-median and disparity based methods are more resilient against noise, there are interesting difference between their practical performances. 
From a theoretical point of view, when the input scalar field is indeed a $(k, k', \Delta)$-\funcsample{}, k-median method gives a slightly better error bound (Observation \ref{obs:kmedian-bound}) as compared to the disparity based method (Lemma \ref{lem:discrep-bound}). 
However, when $(k, k', \Delta)$-sampling condition is not satisfied, the median value can be quite arbitrary. 
By taking the average of a subset of points, the disparity method, on the other hand, is more robust against large amount of noise. 
This difference is evident in the third and last row of Figure \ref{fig:Lena}. 

%%I enclosed the following, since it is somewhat confusing. -- Yusu
%
%In this case, the addition of noise creates a bias in the functional values.
%When the proportion of noise is sufficiently high and the values in between two groups of relevant values.
%For example, when you have a dark and clear parts in an image and the noise is grey. 
%Then the median method returns a value in the grey, which is uncorrelated to relevant information, while the disparity-based method can remain closer to the original information.
%The main reason for this behaviour comes from the absence of the Lipschitz property in the image application.
%More precisely, the Lipschitz constant is too big to provide meaningful guarantees on the quality of the regression.
%The same problem also occurs in presence of random impulse noise, i.e. when some pixels are replaced by uniform random values.
%Then, when the noise proportion is high, the median is very often close to the median of all possible values, while the disparity-based method returns a value close to denser region around relevant information.

Moreover, the application to persistent homology which was our primary goal is much cleaner after the disparity-based method.
The structure of the beginning of the diagrams is almost perfectly retrieved by both the median and disparity-based methods. 
%\yusu{I am not sure that this is clear from last row of Figure \ref{fig:Lena}: is it clear that the diagram from disparity-based method is better than $k$-median? The shrinking is also not evident to me -- is there a way to visually make this more clear?}
However, the median induces a shrinking phenomenon to the diagram. 
This means that the width of the diagram is reduced ans so are the lifespans of topological features, making it more difficult to distinguish between noise and relevant information.
We remark that the classic $k$-NN approach shrinks the diagram even more, to the point that it is very hard to distinguish the information from the noise.

The standard indicator to measure the quality of a denoising is the \emph{Peak Signal over Noise Ratio} (PSNR).
Given a grey scale input image $I$ and an output image $O$ with the grey scale between $0$ and $255$, it is defined by
$$\mathrm{PSNR}(I,O)=10\log_{10}\left(\frac{256^2}{\frac{1}{ij}\sum_{i}\sum_{j}(I[i][j]-O[i][j])^2}\right).$$
Figure~\ref{fig:stats_Lena} shows the quality of the denoising for a set of Lena images with increasing quantity of noise.
The curves are obained using the median ($M$) and different values of $k'$ in the disparity while $k$ is fixed at $25$.
The median is better when the noise ratio is small but as we increase the number of outliers, the disparity obtains better results.
This also shows that the optimal $k'$ depends on the noise ratio.
It also depends on the image we consider and thus makes it difficult to find an easy way to choose it automatically.
Heuristically, it is better to take $k'$ around $\frac{2}{3}k$, especially when there is a lot of noise.

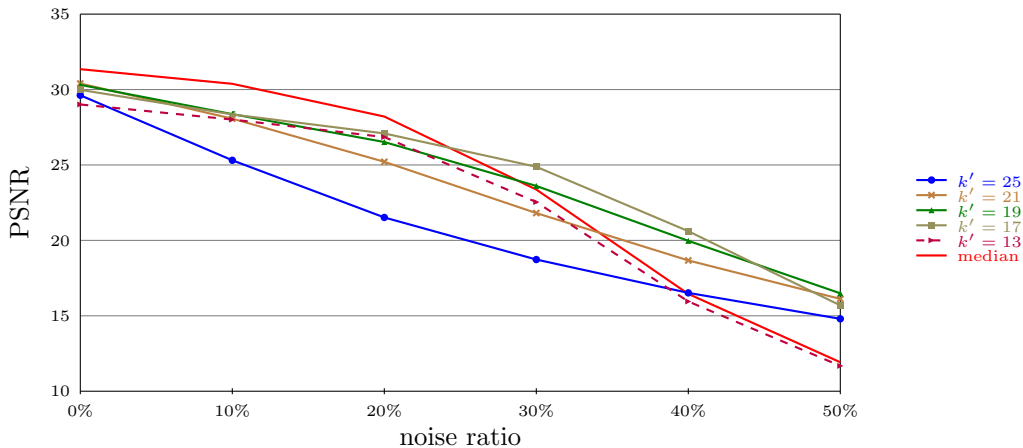
\begin{figure}[ht]
\begin{center}
\begin{tikzpicture}[scale=.2]
\draw[thin] (0,0) node[left] {\tiny$10$} -- (50,0);
\draw[thin] (0,5) node[left] {\tiny$15$};
\draw[thin] (0,10) node[left] {\tiny$20$};
\draw[thin] (0,15) node[left] {\tiny$25$} ;
\draw[thin] (0,20) node[left] {\tiny$30$};
\draw[thin] (0,25) node[left] {\tiny$35$} -- (50,25);
\draw[thin,black!50] (0,5) -- (50,5);
\draw[thin,black!50] (0,10) -- (50,10);
\draw[thin,black!50] (0,15) -- (50,15);
\draw[thin,black!50] (0,20) -- (50,20);
\draw[thin] (0,-.2) node[below] {\tiny$0\%$} -- (0,25);
\draw[thin] (10,-.2) node[below] {\tiny$10\%$} -- (10,.2);
\draw[thin] (20,-.2) node[below] {\tiny$20\%$} -- (20,.2);
\draw[thin] (30,-.2) node[below] {\tiny$30\%$} -- (30,.2);
\draw[thin] (40,-.2) node[below] {\tiny$40\%$} -- (40,.2);
\draw[thin] (50,-.2) node[below] {\tiny$50\%$} -- (50,25);

% Median
\draw[thick,red] (0,21.35) -- (10,20.38) -- (20,18.21) -- (30,13.37) -- (40,6.45) -- (50,1.92);

% k'=25
\draw[thick,blue] (0,19.62) -- (10,15.31) -- (20,11.52) -- (30,8.73) -- (40,6.52) -- (50,4.8);
\fill[blue] (0,19.62) circle (7pt);
\fill[blue] (10,15.31) circle (7pt);
\fill[blue] (20,11.52) circle (7pt);
\fill[blue] (30,8.73)  circle (7pt);
\fill[blue] (40,6.52) circle (7pt);
\fill[blue] (50,4.8) circle (7pt);

% k'=21
\draw[thick,brown] (0,20.4) -- (10,18.07) -- (20,15.21) -- (30,11.81) -- (40,8.67) -- (50,6.12);
\draw[thick,brown] (-.2,20.2) -- (.2,20.6);
\draw[thick,brown] (.2,20.2) -- (-.2,20.6);
\draw[thick,brown] (9.8,17.87) -- (10.2,18.27);
\draw[thick,brown] (10.2,17.87) -- (9.8,18.27);
\draw[thick,brown] (19.8,15.01) -- (20.2,15.41);
\draw[thick,brown] (20.2,15.01) -- (19.8,15.41);
\draw[thick,brown] (29.8,11.61) -- (30.2,12.01);
\draw[thick,brown] (30.2,11.61) -- (29.8,12.01);
\draw[thick,brown] (39.8,8.47) -- (40.2,8.87);
\draw[thick,brown] (40.2,8.47) -- (39.8,8.87);
\draw[thick,brown] (49.8,5.92) -- (50.2,6.32);
\draw[thick,brown] (50.2,5.92) -- (49.8,6.32);

% k'=19
\draw[thick,green!50!black!100] (0,20.31) -- (10,18.37) -- (20,16.52) -- (30,13.61) -- (40,9.97) -- (50,6.49);
\fill[green!50!black!100] (-.2,20.01) -- (0,20.51) -- (.2,20.01);
\fill[green!50!black!100] (9.8,18.17) -- (10,18.57) -- (10.2,18.57);
\fill[green!50!black!100] (19.8,16.32) -- (20,16.72) -- (20.2,16.32);
\fill[green!50!black!100] (29.8,13.41) -- (30,13.81) -- (30.2,13.41);
\fill[green!50!black!100] (39.8,9.77) -- (40,10.17) -- (40.2,9.77);
\fill[green!50!black!100] (49.8,6.29) -- (50,6.69) -- (50.2,6.29);

% k'=17
\draw[thick,yellow!50!black!100] (0,19.98) -- (10,18.35) -- (20,17.09) -- (30,14.88) -- (40,10.61) -- (50,5.69);
\fill[yellow!50!black!100] (-.2,19.78) -- (.2,19.78) -- (.2,20.18) -- (-.2,20.18);
\fill[yellow!50!black!100] (9.8,18.15) -- (10.2,18.15) -- (10.2,18.55) -- (9.8,18.15);
\fill[yellow!50!black!100] (19.8,16.89) -- (20.2,16.89) -- (20.2,17.29) -- (19.8,17.29);
\fill[yellow!50!black!100] (29.8,14.68) -- (30.2,14.68) -- (30.2,15.08) -- (29.8,15.08);
\fill[yellow!50!black!100] (39.8,10.41) -- (40.2,10.41) -- (40.2,10.81) -- (39.8,10.81);
\fill[yellow!50!black!100] (49.8,5.49) -- (50.2,5.49) -- (50.2,5.89) -- (49.8,5.89);

% k'=13
\draw[thick,purple,dashed] (0,19.01) -- (10,18.01) -- (20,16.85) -- (30,12.54) -- (40,5.95) -- (50,1.68);
\fill[purple] (-.2,19.21) -- (-.2,18.81) -- (.2,19.01);
\fill[purple] (9.8,18.21) -- (9.8,17.81) -- (10.2,18.01);
\fill[purple]  (19.8,17.05) -- (19.8,16.65) -- (20.2,16.85);
\fill[purple]  (29.8,12.74) -- (29.8,12.34) -- (30.2,12.54);
\fill[purple]  (39.8,6.15) -- (39.8,5.75) -- (40.2,5.95);
\fill[purple]  (49.8,1.88) -- (49.8,1.48) -- (50.2,1.68);

\draw (25,-3) node {\small noise ratio};
\draw (-4,12.5) node[rotate=90]  {\small PSNR};

\draw[thick,red] (55,9) -- (57,9) node[right] {\tiny median};
\draw[thick,blue] (55,14) -- (57,14) node[right] {\tiny $k'=25$};
\fill[blue] (56,14) circle (7pt);
\draw[thick,brown] (55,13) -- (57,13) node[right] {\tiny $k'=21$};
\draw[thick,brown] (55.8,12.8) -- (56.2,13.2);
\draw[thick,brown] (56.2,12.8) -- (55.8,13.2);
\draw[thick,green!50!black!100] (55,12) -- (57,12) node[right] {\tiny $k'=19$};
\fill[green!50!black!100] (55.8,11.8) -- (56,12.2) -- (56.2,11.8);
\draw[thick,yellow!50!black!100] (55,11) -- (57,11) node[right] {\tiny $k'=17$};
\fill[yellow!50!black!100] (55.8,10.8) -- (56.2,10.8) -- (56.2,11.2) -- (55.8,11.2);
\draw[thick,purple,dashed] (55,10) -- (57,10) node[right] {\tiny $k'=13$};
\fill[purple] (55.8,10.2) -- (55.8,9.8) -- (56.2,10);
\end{tikzpicture}
\caption{PSNR for Lena images depending on the choice of $k'$ and the quantity of noise}\label{fig:stats_Lena}
\end{center}
\end{figure}

State of the art results in computer vision obtain better experimental results (e.g.~\cite{ndwmfrrvinDX,dspncimdwmfCL,nidfmrwrinWW}).
However, these results assume that the noise model is known and they can start by detecting and removing noisy points before rebuilding the image.
Our methods are free from assumptions on the generative model of the image.
The algorithms do not change depending on the type of noise.

\paragraph*{Persistence diagram computation}
We consider a more topological example from real data.
We consider an elevation map of an area near Corte in the French island of Corsica.
The true measures of elevation are given in the left image of Figure~\ref{fig:Corte}.
The topography can be analysed by looking at the function minus-altitude.
We add random faulty sensors that give false results with a $20\%$ probability to simulate malfunctioning equipments.
The area covers a square of 2 minutes of arc in both latitude and longitude.
We apply our algorithm with the following parameters: $k=9$, $k'=7$, $\eta=.05$ minute and $\delta=.025$ minute.
We show the recovered persistence diagrams in Figure~\ref{fig:CorteDiagram}, where the prominent peaks of the original elevation map are highlighted. 
The ``gap'' stands for the ratio between the shortest living relevant feature, highlighted in red, and the longest feature created by the noise.
%We note that in practice, the problem of choosing values for parameters is a complex one. 
%Persistence can helps for the choice of $\eta$ by looking at an interval where the topology is stable. 

\begin{figure}[ht]
\begin{center}
\fbox{
\begin{tabular}{ccc}
\includegraphics[height=3cm]{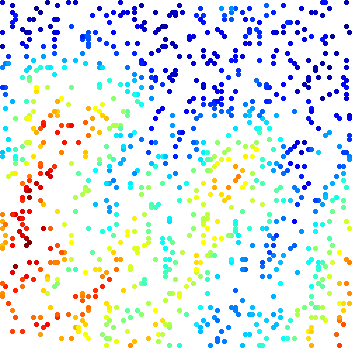} & \includegraphics[height=3cm]{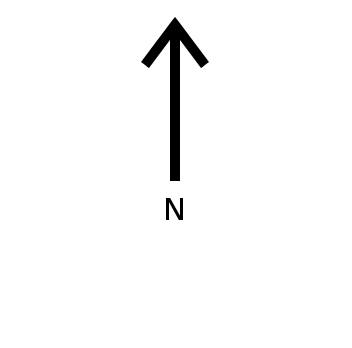} & \includegraphics[height=3cm]{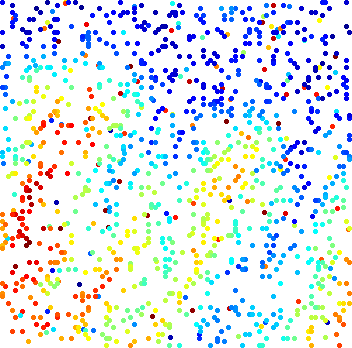}\\
\tiny{Without noise} & & \tiny{With 20\% background noise}
\end{tabular}}
\end{center}
\vspace{-.2in}
\caption{Elevation map around Corte}
\label{fig:Corte}
\end{figure}
%\vspace{-.2in}
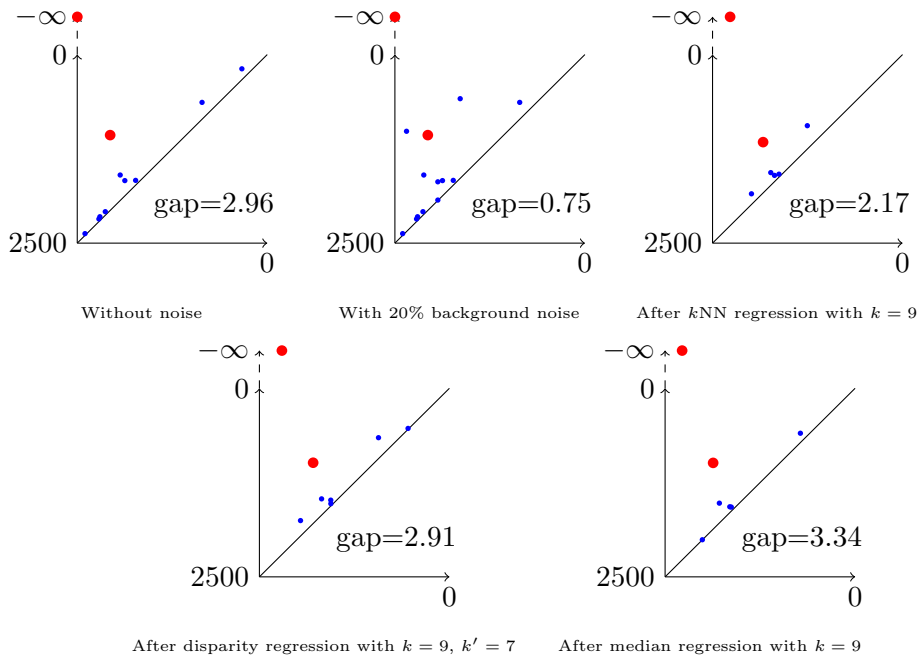
\begin{figure}[ht]
\begin{center}

\begin{tabular}{ccc}
\begin{tikzpicture}[scale=.1]
\draw[thin,->] (0,0) -- (25,0);
\draw[thin,->] (0,0) -- (0,25);
\draw[thin] (0,0) -- (25,25);
\draw[thin,dashed,->] (0,25) -- (0,30);
\draw (0,0) node[below,left] {$2500$};
\draw (25,0) node[below] {$0$};
\draw (0,25) node[left] {$0$};
\draw (0,30) node[left] {$-\infty$};
\fill[red] (.01,30) circle (20pt);
\fill[red] (4.33,14.32) circle (20pt);
\fill[blue] (21.66,23.11) circle (10pt);
\fill[blue] (16.42,18.66) circle (10pt);
\fill[blue] (7.68,8.32) circle (10pt);
\fill[blue] (6.26,8.29) circle (10pt);
\fill[blue] (5.65,9.03) circle (10pt);
\fill[blue] (3.69,4.17) circle (10pt);
\fill[blue] (2.98,3.5) circle (10pt);
\fill[blue] (2.85,3.2) circle (10pt);
\fill[blue] (1.02,1.26) circle (10pt);
\draw (18,5) node {gap=$2.96$};
\end{tikzpicture}
&
\begin{tikzpicture}[scale=.1]
\draw[thin,->] (0,0) -- (25,0);
\draw[thin,->] (0,0) -- (0,25);
\draw[thin] (0,0) -- (25,25);
\draw[thin,dashed,->] (0,25) -- (0,30);
\draw (0,0) node[below,left] {$2500$};
\draw (25,0) node[below] {$0$};
\draw (0,25) node[left] {$0$};
\draw (0,30) node[left] {$-\infty$};
\fill[red] (.01,30) circle (20pt);
\fill[red] (4.33,14.32) circle (20pt);
\fill[blue] (16.42,18.66) circle (10pt);
\fill[blue] (8.59,19.15) circle (10pt);
\fill[blue] (7.68,8.32) circle (10pt);
\fill[blue] (6.26,8.29) circle (10pt);
\fill[blue] (5.65,8.13) circle (10pt);
\fill[blue] (5.65,5.71) circle (10pt);
\fill[blue] (3.80,9.03) circle (10pt);
\fill[blue] (3.69,4.17) circle (10pt);
\fill[blue] (2.98,3.5) circle (10pt);
\fill[blue] (2.85,3.2) circle (10pt);
\fill[blue] (1.53,14.83) circle (10pt);
\fill[blue] (1.02,1.26) circle (10pt);
\draw (18,5) node {gap=$0.75$};
\end{tikzpicture}
&
\begin{tikzpicture}[scale=.1]
\draw[thin,->] (0,0) -- (25,0);
\draw[thin,->] (0,0) -- (0,25);
\draw[thin] (0,0) -- (25,25);
\draw[thin,dashed,->] (0,25) -- (0,30);
\draw (0,0) node[below,left] {$2500$};
\draw (25,0) node[below] {$0$};
\draw (0,25) node[left] {$0$};
\draw (0,30) node[left] {$-\infty$};
\fill[red] (2.31,30) circle (20pt);
\fill[red] (6.65,13.39) circle (20pt);
\fill[blue] (12.47,15.57) circle (10pt);
\fill[blue] (8.74,9.14) circle (10pt);
\fill[blue] (8.15,8.98) circle (10pt);
\fill[blue] (7.66,9.35) circle (10pt);
\fill[blue] (5.12,6.54) circle (10pt);
\draw (18,5) node {gap=$2.17$};
\end{tikzpicture}\\
\tiny{Without noise} & \tiny{With 20\% background noise} &
\tiny{After $k$NN regression with $k=9$}
\end{tabular}
\newline
\begin{tabular}{cc}
\begin{tikzpicture}[scale=.1]
\draw[thin,->] (0,0) -- (25,0);
\draw[thin,->] (0,0) -- (0,25);
\draw[thin] (0,0) -- (25,25);
\draw[thin,dashed,->] (0,25) -- (0,30);
\draw (0,0) node[below,left] {$2500$};
\draw (25,0) node[below] {$0$};
\draw (0,25) node[left] {$0$};
\draw (0,30) node[left] {$-\infty$};
\fill[red] (2.96,30) circle (20pt);
\fill[red] (7.07,15.13) circle (20pt);
\fill[blue] (19.56,19.68) circle (10pt);
\fill[blue] (15.67,18.45) circle (10pt);
\fill[blue] (9.40,9.70) circle (10pt);
\fill[blue] (9.39,10.17) circle (10pt);
\fill[blue] (8.18,10.34) circle (10pt);
\fill[blue] (5.41,7.46) circle (10pt);
\draw (18,5) node {gap=$2.91$};
\end{tikzpicture}\hspace{1em}&\hspace{1em}
\begin{tikzpicture}[scale=.1]
\draw[thin,->] (0,0) -- (25,0);
\draw[thin,->] (0,0) -- (0,25);
\draw[thin] (0,0) -- (25,25);
\draw[thin,dashed,->] (0,25) -- (0,30);
\draw (0,0) node[below,left] {$2500$};
\draw (25,0) node[below] {$0$};
\draw (0,25) node[left] {$0$};
\draw (0,30) node[left] {$-\infty$};
\fill[red] (2.24,30) circle (20pt);
\fill[red] (6.32,15.11) circle (20pt);
\fill[blue] (17.8,19.04) circle (10pt);
\fill[blue] (8.71,9.25) circle (10pt);
\fill[blue] (8.51,9.29) circle (10pt);
\fill[blue] (7.14,9.77) circle (10pt);
\fill[blue] (4.9,4.92) circle (10pt);
\draw (18,5) node {gap=$3.34$};
\end{tikzpicture}\\
\tiny{After disparity regression with $k=9$, $k'=7$} & \tiny{After median regression with $k=9$}
\end{tabular}

\end{center}
\vspace{-.2in}
\caption{Persistence diagrams of Corte Elevation map}
\label{fig:CorteDiagram}
\end{figure}

We note that the gap in the case of the noisy point cloud (before denoising) is less than $1$.
This means that some relevant topological feature has a shorter lifespan than one caused by noise.
Intuitively, this means that it is difficulty to tell true features from noise from this persistence diagram, without performing denoising. 
%It is then impossible to recover the correct structure.
%The three methods doing the scalar field analysis after a denoising of the functional values.
We also show the persistence diagrams, as well as the ``gap" values, for the denoised data after the three denoising method: $k$-NN regression, $k$-median and our disparity based method. 
In the case of the $k$-NN regression, the topological feature are in the right order. 
However, the prominence given by the gap is significantly smaller than the one from the original point cloud.
Both the disparity based method and the median provides gaps on par with the non-noisy input and thus allow a good recovery of the correct topology.

\end{document}